\newcommand{\onestep}{\leftrightarrow}
\newcommand{\sevstep}{\leftrightsquigarrow}
\newcommand{\MC}[2]{\mathsf{MC}_{#1}(#2)} 
\newcommand{\TAR}[1]{\mathsf{TAR}(#1)}
\newcommand{\TS}{\mathsf{TS}}
\newcommand{\TJ}{\mathsf{TJ}}
\newcommand{\symdiff}[2]{#1 \vartriangle #2}
\newcommand{\ini}{0}
\newcommand{\tar}{r}
\newcommand{\cliq}{C}
\newcommand{\TARrule}{\mathsf{TAR}}
\newcommand{\YES}{\mathsf{yes}}
\newcommand{\NO}{\mathsf{no}}
\newcommand{\TARins}[3]{\mathsf{TAR}(#1,#2,#3)}
\newcommand{\TSins}[2]{\mathsf{TS}(#1,#2)}
\newcommand{\TJins}[2]{\mathsf{TJ}(#1,#2)}
\newcommand{\distTAR}[3]{\mathsf{dist_{TAR}}(#1,#2,#3)}
\newcommand{\distTJ}[2]{\mathsf{dist_{TJ}}(#1,#2)}
\newcommand{\distTS}[2]{\mathsf{dist_{TS}}(#1,#2)}
\newcommand{\distTARG}[4]{\mathsf{dist_{TAR}}(#1,#2,#3,#4)}
\newcommand{\distTJG}[3]{\mathsf{dist_{TJ}}(#1,#2,#3)}
\newcommand{\distTSG}[3]{\mathsf{dist_{TS}}(#1,#2,#3)}
\newcommand{\cgraph}{\mathcal{R}}
\newcommand{\cvertex}{\mathcal{V}}
\newcommand{\cedge}{\mathcal{E}}
\newcommand{\Mset}[1]{\mathcal{M}(#1)}
\newcommand{\Msetv}[2]{\mathcal{M}(#1; #2)}
\newcommand{\subH}{H}
\newcommand{\subHp}{H^\prime}
\newcommand{\intH}{\subH}
\newcommand{\dumo}{d_{\ini}}
\newcommand{\dumt}{d_{\tar}}
\newenvironment{listing}[1]{%
        \begin{list}{*}{%
                 \settowidth{\labelwidth}{#1}%
                 \setlength{\leftmargin}{\labelwidth}%
                 \advance \leftmargin by 12pt
                   \setlength{\itemsep}{0pt}%
                   \setlength{\parsep}{0pt}%
                   \setlength{\topsep}{0pt}%
                   \setlength{\parskip}{0pt}%
}%
}{%
\end{list}}
\newcounter{one}
\newcommand{\one}{{\rm \roman{one}}}
\newcounter{two}
\newcommand{\two}{{\rm \roman{two}}}
\newcounter{three}
\newcommand{\three}{{\rm \roman{three}}}
\newcounter{four}
\newcounter{five}
\begin{document}
\title{Reconfiguration of Cliques in a Graph}

\author{
Takehiro Ito\inst{1} \and 
Hirotaka Ono\inst{2} \and
Yota Otachi\inst{3}
}

\institute{
	Graduate School of Information Sciences, 
	Tohoku University, \\
    Aoba-yama 6-6-05, Sendai, 980-8579, Japan.\\
	\email{takehiro@ecei.tohoku.ac.jp}
\and	
	Faculty of Economics, 
	Kyushu University, \\
	Hakozaki 6-19-1, Higashi-ku, Fukuoka, 812-8581, 
	Japan.\\
	\email{hirotaka@econ.kyushu-u.ac.jp}
\and
    School of Information Science, JAIST, \\
    Asahidai 1-1, Nomi, Ishikawa 923-1292, Japan.\\
    \email{otachi@jaist.ac.jp}
}

\maketitle

	\begin{abstract}
	We study reconfiguration problems for cliques in a graph, which determine whether there exists a sequence of cliques that transforms a given clique into another one in a step-by-step fashion. 
	As one step of a transformation, we consider three different types of rules, which are defined and studied in reconfiguration problems for independent sets.
	We first prove that all the three rules are equivalent in cliques.
	We then show that the problems are PSPACE-complete for perfect graphs, while we give polynomial-time algorithms for several classes of graphs, such as even-hole-free graphs and cographs.  
	In particular, the shortest variant, which computes the shortest length of a desired sequence, can be solved in polynomial time for chordal graphs, bipartite graphs, planar graphs, and bounded treewidth graphs.
	\end{abstract}
\vspace{-2em}

	\begin{figure}[b]
	\vspace{-1em}
		\centering
		\includegraphics[width=0.8\linewidth]{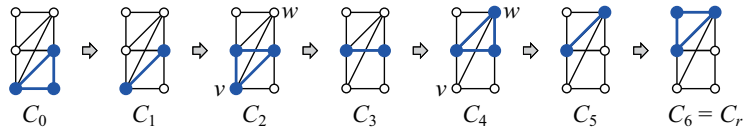}
	\vspace{-1em}
	\caption{A sequence $\langle \cliq_{\ini}, \cliq_1, \ldots, \cliq_6 \rangle$ of cliques in the same graph, where the vertices in cliques are depicted by large (blue) circles (tokens).}
	\label{fig:example}
	\end{figure}

\section{Introduction}
Recently, {\em reconfiguration problems} attract attention 
in the field of theoretical computer science. 
The problem arises when we wish to find a step-by-step transformation between 
two feasible solutions of a problem such that 
all intermediate results are also feasible and 
each step abides by a fixed reconfiguration rule 
(i.e., an adjacency relation defined on feasible solutions of the original problem).
This kind of reconfiguration problem has been studied extensively 
for several well-known problems, including 
{\sc satisfiability}~\cite{Kolaitis}, 
{\sc independent set}~\cite{Bon14,HearnDemaine2005,IDHPSUU,KaminskiMM12,Wro14}, 
{\sc vertex cover}~\cite{INZ14,MNR14},
{\sc clique}, {\sc matching}~\cite{IDHPSUU}, 
{\sc vertex-coloring}~\cite{BC09},
and so on.
(See also a recent survey~\cite{van13}.)

	It is well known that independent sets, vertex covers and cliques are related with each other. 
	Indeed, the well-known reductions for NP-completeness proofs are essentially the same for the three problems~\cite{GJ79}.
	Despite reconfiguration problems for independent sets and vertex covers are two of the most well studied problems, we have only a few known results for reconfiguration problems for cliques (as we will explain later). 
	In this paper, we thus investigate the complexity status of reconfiguration problems for cliques systematically, and show that the problems can be solved in polynomial time for a variety of graph classes, in contrast to independent sets and vertex covers.  

	\subsection{Our problems and three rules}

	Recall that a {\em clique} of a graph $G = (V,E)$ is a vertex subset of $G$ in which every two vertices are adjacent. 
(Figure~\ref{fig:example} depicts seven different cliques in the same graph.)
	Suppose that we are given two cliques $\cliq_{\ini}$ and $\cliq_{\tar}$ of $G$, and imagine that a token is placed on each vertex in $\cliq_{\ini}$. 
	Then, we are asked to transform $\cliq_{\ini}$ into $\cliq_{\tar}$ by abiding a prescribed reconfiguration rule on cliques.
	In this paper, we define three different reconfiguration rules on cliques, which were originally defined as the reconfiguration rules on independents sets~\cite{KaminskiMM12}, as follows:
\begin{list}{*}{%
	\settowidth{\labelwidth}{$\bullet$}%
	\setlength{\leftmargin}{\labelwidth}%
	\advance \leftmargin by 5pt
	\setlength{\itemsep}{5pt}%
	\setlength{\parsep}{0pt}%
	\setlength{\topsep}{5pt}%
	\setlength{\parskip}{0pt}%
}
	\item[$\bullet$] {\em Token Addition and Removal} ($\TARrule$ rule): 
    We can either add or remove a single token at a time 
    if it results in a clique of size at least a given threshold $k \ge 0$. 
    For example, in the sequence $\langle \cliq_{\ini}, \cliq_1, \ldots, \cliq_6 \rangle$ in \figurename~\ref{fig:example}, every two consecutive cliques follow the $\TARrule$ rule for the threshold $k = 2$.
    In order to emphasize the threshold $k$, we sometimes call this rule the $\TAR{k}$ rule. 

	\item[$\bullet$] {\em Token Jumping} ($\TJ$ rule): 
    A single token in a clique $\cliq$ can ``jump'' to any vertex in $V \setminus \cliq$ if it results in a clique.
    For example, consider the sequence $\langle \cliq_{\ini}, \cliq_2, \cliq_4, \cliq_6 \rangle$ in \figurename~\ref{fig:example}, then two consecutive cliques $\cliq_{2i}$ and $\cliq_{2i+2}$ follow the $\TJ$ rule for each $i \in \{0,1,2\}$.

	\item[$\bullet$]{\em Token Sliding} ($\TS$ rule): 
    We can slide a single token on a vertex $v$ in a clique $\cliq$ to another vertex $w$ in $V \setminus \cliq$ if  it results in a clique and there is an edge $vw$ in $G$.
    For example, consider the sequence $\langle \cliq_2, \cliq_4 \rangle$ in \figurename~\ref{fig:example}, then two consecutive cliques $\cliq_{2}$ and $\cliq_{4}$ follow the $\TS$ rule, because $v$ and $w$ are adjacent.
	\end{list}
	A sequence $\langle \cliq_{\ini}, \cliq_1, \ldots, \cliq_{\ell} \rangle$ of cliques of a graph $G$ is called a {\em reconfiguration sequence} between two cliques $\cliq_{\ini}$ and $\cliq_{\ell}$ under $\TAR{k}$ (or $\TJ$, $\TS$) if two consecutive cliques $\cliq_{i-1}$ and $\cliq_i$ follow the $\TAR{k}$ (resp., $\TJ$, $\TS$) rule for all $i \in \{1, 2, \ldots, \ell\}$.
	The \emph{length} of a reconfiguration sequence is defined to be the number of cliques in the sequence minus one, that is, the length of  $\langle \cliq_{\ini}, \cliq_1, \ldots, \cliq_{\ell} \rangle$ is $\ell$.

	Given two cliques $\cliq_{\ini}$ and $\cliq_{\tar}$ of a graph $G$ (and an integer $k \ge 0$ for $\TARrule$), \textsc{clique reconfiguration} under $\TARrule$ (or $\TJ$, $\TS$) is to determine whether there exists a reconfiguration sequence between $\cliq_{\ini}$ and $\cliq_{\tar}$ under $\TAR{k}$ (resp., $\TJ$, $\TS$). 
	For example, consider the cliques $\cliq_{\ini}$ and $\cliq_{\tar} = \cliq_6$ in \figurename~\ref{fig:example}; let $k = 2$ for $\TARrule$. 
	Then, it is a $\YES$-instance under the $\TAR{2}$ and $\TJ$ rules as illustrated in \figurename~\ref{fig:example}, but is a $\NO$-instance under the $\TS$ rule.

	In this paper, we also study the shortest variant, called \textsc{shortest clique reconfiguration}, under each of the three rules which computes the shortest length of a reconfiguration sequence between two given cliques under the rule.
	We define the shortest length to be infinity for a $\NO$-instance, and hence this variant is a generalization of \textsc{clique reconfiguration}. 

	\subsection{Known and related results}
	
	Ito et al.~\cite{IDHPSUU} introduced \textsc{clique reconfiguration} under $\TARrule$, and proved that it is PSPACE-complete in general. 
	They also considered the optimization problem of computing the maximum threshold $k$ such that there is a reconfiguration sequence between two given cliques $\cliq_{\ini}$ and $\cliq_{\tar}$ under $\TAR{k}$. 
	This maximization problem cannot be approximated in polynomial time within any constant factor unless ${\rm P} = {\rm NP}$~\cite{IDHPSUU}.

	\textsc{Independent set reconfiguration} is one of the most well-studied reconfiguration problems, defined for independent sets in a graph. 
	Kami\'nski et al.~\cite{KaminskiMM12} studied the problem under $\TARrule$, $\TJ$ and $\TS$. 
	It is well known that a clique in a graph $G$ forms an independent set in the complement $\overline{G}$ of $G$, and vice versa.
	Indeed, some known results for \textsc{independent set reconfiguration} can be converted into ones for \textsc{clique reconfiguration}.
	However, as far as we checked, only two results can be obtained for \textsc{clique reconfiguration} by this conversion, because we take the complement of a graph.
(These results will be formally discussed in Section~\ref{subsec:independent-clique}.)

	In this way, only a few results are known for \textsc{clique reconfiguration}. 
	In particular, there is almost no algorithmic result, and hence it is desired to develop efficient algorithms for the problem and its shortest variant. 

	\subsection{Our contribution}

	In this paper, we embark on a systematic investigation of the computational status of \textsc{clique reconfiguration} and its shortest variant. 
	Figure~\ref{fig:results} summarizes our results, which can be divided into the following four parts.
\smallskip

\begin{list}{*}{%
	\settowidth{\labelwidth}{(3)}%
	\setlength{\leftmargin}{\labelwidth}%
	\advance \leftmargin by 5pt
	\setlength{\itemsep}{5pt}%
	\setlength{\parsep}{0pt}%
	\setlength{\topsep}{0pt}%
	\setlength{\parskip}{0pt}%
}
	\item[(1)] \emph{Rule equivalence} (Section~\ref{sec:rules}): We prove that  all rules $\TARrule$, $\TS$ and $\TJ$ are equivalent in \textsc{clique reconfiguration}.
	Then, any complexity result under one rule can be converted into the same complexity result under the other two rules. 
	In addition, based on the rule equivalence, we show that \textsc{clique reconfiguration} under any rule is PSPACE-complete for perfect graphs, and is solvable in linear time for cographs.

	\item[(2)] \emph{Graphs with bounded clique size} (Section~\ref{subsec:boundedclique}): We show that the shortest variant under any of $\TARrule$, $\TS$ and $\TJ$ can be solved in polynomial time for such graphs, which include bipartite graphs, planar graphs, and bounded treewidth graphs.
	Interestingly, \textsc{independent set reconfiguration} under any rule remains PSPACE-complete even for planar graphs~\cite{BC09,HearnDemaine2005} and bounded treewidth graphs~\cite{Wro14}. 
	Therefore, this result shows a nice difference between the reconfiguration problems for cliques and independent sets. 

	\item[(3)] \emph{Graphs with polynomially many maximal cliques} (Section~\ref{subsec:polymany}): We  show that \textsc{clique reconfiguration} under any of $\TARrule$, $\TS$ and $\TJ$ can be solved in polynomial time for such graphs, which include even-hole-free graphs, graphs of bounded boxicity, and $K_{t}$-subdivision-free graphs.
	
	\item[(4)] \emph{Chordal graphs} (Section~\ref{sec:chordal}): We give a linear-time algorithm to solve the shortest variant under any of $\TARrule$, $\TS$ and $\TJ$ for chordal graphs.
	Note that the clique size of chordal graphs is not always bounded, and hence this result is independent from Result (2) above. 
\end{list}
	Several proofs move to appendices.

	\begin{figure}[t]
		\centering
		\includegraphics[width=0.9\linewidth]{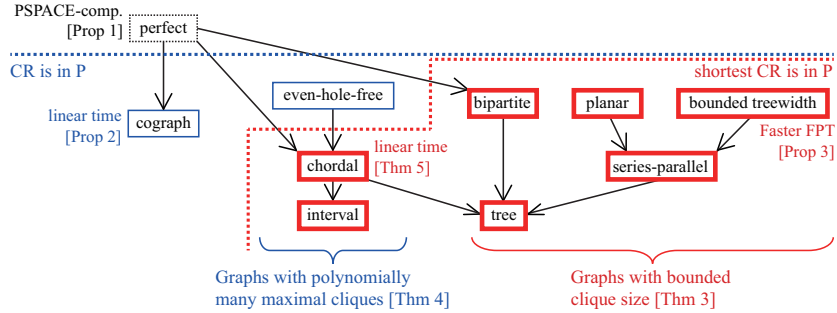}
	\vspace{-1em}
	\caption{Our results under all rules $\TARrule$, $\TS$ and $\TJ$. 
Each arrow represents the inclusion relationship between graph classes: 
$A \to B$ represents that $B$ is properly included in $A$~\cite{BLS99}.
Graph classes for which \textsc{shortest clique reconfiguration} is solvable in polynomial time are indicated by thick (red) boxes, 
while the ones for which \textsc{clique reconfiguration} is solvable in polynomial time are indicated by thin (blue) boxes.}
	\vspace{-1em}
	\label{fig:results}
	\end{figure}


	\section{Preliminaries}
	In this section, we introduce some basic terms and notation. 

	\subsection{Graph notation}

	In this paper, we assume without loss of generality that graphs are simple.
	For a graph $G$, we sometimes denote by $V(G)$ and $E(G)$ the vertex set and edge set of $G$, respectively. 
%
%
	For a graph $G$, the \emph{complement} $\overline{G}$ of $G$ is the graph such that $V(\overline{G}) = V(G)$ and $E(\overline{G}) = \{ vw \mid v,w \in V(G),\ vw \not\in E(G) \}$.
	We say that a graph class $\mathcal{G}$ (i.e., a set of graphs) is \emph{closed under taking complements} if $\overline{G} \in \mathcal{G}$ holds for every graph $G \in \mathcal{G}$.

	In this paper, we deal with several graph classes systematically, and hence we do not define those graph classes precisely;
we simply give the properties used for proving  our results, with appropriate references. 

	\subsection{Definitions for {\sc clique reconfiguration}}

	As explained in Introduction, we consider three (symmetric) adjacency relations on cliques in a graph.
	Let $\cliq_i$ and $\cliq_j$ be two cliques of a graph $G$.
	Then,
	\begin{listing}{a}
	\item[$\bullet$] \emph{$\cliq_i \onestep \cliq_j$ under $\TAR{k}$} for a nonnegative integer $k$ if $|\cliq_i| \ge k$, $|\cliq_j| \ge k$, and $|\cliq_i \vartriangle \cliq_j| = \bigl|(\cliq_i \setminus \cliq_j) \cup (\cliq_j \setminus \cliq_i) \bigr| = 1$ hold;
	\smallskip

	\item[$\bullet$] \emph{$\cliq_i \onestep \cliq_j$ under $\TJ$} if $|\cliq_i| = |\cliq_j|$, $|\cliq_i \setminus \cliq_j| = 1$, and $|\cliq_j \setminus \cliq_i| = 1$ hold; and
	\smallskip

	\item[$\bullet$] \emph{$\cliq_i \onestep \cliq_j$ under $\TS$} if $|\cliq_i| = |\cliq_j|$, $\cliq_i \setminus \cliq_j = \{v\}$, $\cliq_j \setminus \cliq_i = \{w\}$, and $vw \in E(G)$ hold.
	\smallskip
	\end{listing}
	A sequence $\langle \cliq_1, \cliq_2, \ldots, \cliq_{\ell} \rangle$ of cliques of $G$ is called a {\em reconfiguration sequence} between two cliques $\cliq_1$ and $\cliq_{\ell}$ under $\TAR{k}$ (or $\TJ$, $\TS$) if $\cliq_{i-1} \onestep \cliq_i$ holds under $\TAR{k}$ (resp., $\TJ$, $\TS$) for all $i \in \{2, 3, \ldots, \ell\}$.
	A reconfiguration sequence under $\TAR{k}$ (or $\TJ$, $\TS$) is simply called a \emph{$\TAR{k}$-sequence} (resp., \emph{$\TJ$-sequence}, \emph{$\TS$-sequence}).
	We write $\cliq_{1} \sevstep \cliq_{\ell}$ under $\TAR{k}$ (or $\TJ$, $\TS$) if there exists a $\TAR{k}$-sequence (resp., $\TJ$-sequence, $\TS$-sequence) between $\cliq_1$ and $\cliq_{\ell}$.
	Note that each clique in any $\TAR{k}$-sequence is of size at least $k$, while all cliques in any $\TJ$-sequence or $\TS$-sequence have the same size.
	In addition, a reconfiguration sequence under any rule is {\em reversible}, that is, $\cliq_{1} \sevstep \cliq_{\ell}$ if and only if $\cliq_{\ell} \sevstep \cliq_{1}$.

	Let $k$ be a nonnegative integer, and let $\cliq$ and $\cliq^\prime$ be two cliques of a graph $G$.
	Then, we define $\TARins{\cliq}{\cliq^\prime}{k}$, as follows:
	\[
		\TARins{\cliq}{\cliq^\prime}{k} = \left\{
				\begin{array}{ll}
				\YES & ~~~\mbox{if $\cliq \sevstep \cliq^\prime$ under $\TAR{k}$}; \\
				\NO  & ~~~\mbox{otherwise}.
				\end{array} \right.
	\]
	Given two cliques $\cliq_{\ini}$ and $\cliq_{\tar}$ of a graph $G$ and a nonnegative integer $k$, \textsc{clique reconfiguration} under $\TARrule$ is to compute $\TARins{\cliq_{\ini}}{\cliq_{\tar}}{k}$.
	By the definition, $\TARins{\cliq_{\ini}}{\cliq_{\tar}}{k} = \NO$ if $|\cliq_{\ini}| < k$ or $|\cliq_{\tar}| <k$ hold, and hence we may assume without loss of generality that both $|\cliq_{\ini}| \ge k$ and $|\cliq_{\tar}| \ge k$ hold;
we call such an instance simply a $\TARrule$-instance, and denote it by $(G, \cliq_{\ini}, \cliq_{\tar}, k)$.
	
	For two cliques $\cliq$ and $\cliq^\prime$ of a graph $G$, we similarly define $\TJins{\cliq}{\cliq^\prime}$ and $\TSins{\cliq}{\cliq^\prime}$.
	Given two cliques $\cliq_{\ini}$ and $\cliq_{\tar}$ of $G$, we similarly define \textsc{clique reconfiguration} under $\TJ$ and $\TS$, and denote their instance by $(G, \cliq_{\ini}, \cliq_{\tar})$.
	Then, we can assume that $|\cliq_{\ini}| = |\cliq_{\tar}|$ holds in a $\TJ$- or a $\TS$-instance $(G, \cliq_{\ini}, \cliq_{\tar})$.

	Given a $\TARrule$-instance $(G, \cliq_{\ini}, \cliq_{\tar}, k)$, let $\mathcal{C} = \langle \cliq_{\ini}, \cliq_1, \ldots, \cliq_{\ell} \rangle$ be a $\TAR{k}$-sequence in $G$ between $\cliq_{\ini}$ and $\cliq_{\tar} = \cliq_{\ell}$. 
	Then, the \emph{length} of $\mathcal{C}$ is defined to be the number of cliques in $\mathcal{C}$ minus one, that is, the length of $\mathcal{C}$ is $\ell$.
	We denote by $\distTARG{G}{\cliq_{\ini}}{\cliq_{\tar}}{k}$ the minimum length of a $\TAR{k}$-sequence in $G$ between $\cliq_{\ini}$ and $\cliq_{\tar}$;
we let $\distTARG{G}{\cliq_{\ini}}{\cliq_{\tar}}{k} = +\infty$ if there is no $\TAR{k}$-sequence in $G$ between $\cliq_{\ini}$ and $\cliq_{\tar}$. 
	The shortest variant, \textsc{shortest clique reconfiguration}, under $\TARrule$ is to compute $\distTARG{G}{\cliq_{\ini}}{\cliq_{\tar}}{k}$. 
	Similarly, we define $\distTJG{G}{\cliq_{\ini}}{\cliq_{\tar}}$ and $\distTSG{G}{\cliq_{\ini}}{\cliq_{\tar}}$ for a $\TJ$- and a $\TS$-instance $(G, \cliq_{\ini}, \cliq_{\tar})$, respectively. 
	Then, \textsc{shortest clique reconfiguration} under $\TJ$ or $\TS$ is defined similarly. 
	We sometimes drop $G$ and simply write $\distTAR{\cliq_{\ini}}{\cliq_{\tar}}{k}$, $\distTJ{\cliq_{\ini}}{\cliq_{\tar}}$ and $\distTS{\cliq_{\ini}}{\cliq_{\tar}}$ if it is clear from context.
	
	We note that \textsc{clique reconfiguration} under any rule is a decision problem asking for the existence of a reconfiguration sequence, and its shortest variant asks for simply computing the shortest length of a reconfiguration sequence. 
	Therefore, the problems do not ask for an actual reconfiguration sequence. 
	However, our algorithms proposed in this paper can be easily modified so that they indeed find a reconfiguration sequence.  


\section{Rule Equivalence and Complexity}
\label{sec:rules}

	In this section, we first prove that all three rules $\TARrule$, $\TS$ and $\TJ$ are equivalent in \textsc{clique reconfiguration}.
	We then discuss some complexity results that can be obtained from known results for \textsc{independent set reconfiguration}.
	

	\subsection{Equivalence of $\TS$ and $\TARrule$ rules} 
	
	$\TS$ and $\TARrule$ rules are equivalent, as in the following sense.
	\begin{theorem} \label{the:TS=TAR}
	$\TS$ and $\TARrule$ rules are equivalent in \textsc{clique reconfiguration}, as follows{\rm :} 
		\begin{listing}{aaa}
		\item[{\rm (}a{\rm )}] for any $\TS$-instance $(G, \cliq_{\ini}, \cliq_{\tar})$, a $\TARrule$-instance $(G, \cliq_{\ini}^\prime, \cliq_{\tar}^\prime, k^\prime)$ can be constructed in linear time such that  $\TSins{\cliq_{\ini}}{\cliq_{\tar}} = \TARins{\cliq_{\ini}^\prime}{\cliq_{\tar}^\prime}{k^\prime}$ and $\distTS{\cliq_{\ini}}{\cliq_{\tar}} = \distTAR{\cliq_{\ini}^\prime}{\cliq_{\tar}^\prime}{k^\prime} /2${\rm ;} and
		\item[{\rm (}b{\rm )}] for any $\TARrule$-instance $(G, \cliq_{\ini}, \cliq_{\tar},k)$, a $\TS$-instance $(G, \cliq_{\ini}^\prime, \cliq_{\tar}^\prime)$ can be constructed in linear time such that $\TARins{\cliq_{\ini}}{\cliq_{\tar}}{k} = \TSins{\cliq_{\ini}^\prime}{\cliq_{\tar}^\prime}$.
		\end{listing}
	\end{theorem}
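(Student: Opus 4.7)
The plan for part (a) is to take the natural construction $\cliq_{\ini}' := \cliq_{\ini}$, $\cliq_{\tar}' := \cliq_{\tar}$, and $k' := s := |\cliq_{\ini}| = |\cliq_{\tar}|$ (sizes agree because $(G, \cliq_{\ini}, \cliq_{\tar})$ is a $\TS$-instance), and to establish the distance equality $\distTAR{\cliq_{\ini}}{\cliq_{\tar}}{s} = 2 \cdot \distTS{\cliq_{\ini}}{\cliq_{\tar}}$; the $\YES/\NO$ equivalence then follows from both sides being finite simultaneously. I would prove the two inequalities separately.

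For $\distTAR \le 2 \distTS$, I would simulate each TS-step by two TAR-$s$-steps. Given $\cliq \onestep (\cliq \setminus \{v\}) \cup \{w\}$ under $\TS$ with $vw \in E(G)$, the pair ``add $w$, then remove $v$'' suffices: the intermediate set $\cliq \cup \{w\}$ is a clique, since $w$ is adjacent to every vertex in $\cliq \setminus \{v\}$ (by validity of the TS-step) and to $v$ (by the edge $vw$), and its size $s+1$ respects the threshold $s$.

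The reverse inequality $\distTAR \ge 2 \distTS$ is the technical core of the proof and rests on a peak-smoothing lemma: any minimum-length $\TAR{s}$-sequence $\langle \cliq_{\ini} = D_0, D_1, \ldots, D_m = \cliq_{\tar} \rangle$ can be taken to satisfy $|D_i| \in \{s, s+1\}$ for every $i$. Granted this, sizes strictly alternate $s, s+1, s, \ldots$, operations alternate add-then-remove, and each triple $(D_{2i}, D_{2i} \cup \{v_i\}, (D_{2i} \cup \{v_i\}) \setminus \{w_i\})$ projects to a $\TS$-step $D_{2i} \onestep D_{2i+2}$: by minimality $v_i \ne w_i$ (otherwise the triple would shortcut to two fewer operations), and $v_i$ is adjacent to every vertex of $D_{2i}$, in particular to $w_i$, so the edge condition $v_i w_i \in E(G)$ holds. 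To prove the lemma I would iterate a local swap: whenever some peak $D_{i^*}$ satisfies $|D_{i^*}| = M \ge s+2$, its neighbors have size $M-1$, the operations at positions $i^*$ and $i^*+1$ are an add of some $v$ followed by a remove of some $w \ne v$, and the triple $D_{i^*-1}, D_{i^*}, D_{i^*+1}$ can be replaced by $D_{i^*-1}, D_{i^*-1} \setminus \{w\}, (D_{i^*-1} \setminus \{w\}) \cup \{v\} = D_{i^*+1}$, turning the peak into a valley of height $M - 2 \ge s$. The replaced sets are cliques (subsets of $D_{i^*-1} \cup \{v\}$), the threshold is respected, and only $|D_{i^*}|$ changes (decreasing by $2$), so the potential $\sum_i |D_i|$ strictly decreases and the iteration terminates.

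For part (b), I would reduce to the equal-size setting treated by part (a). Given $(G, \cliq_{\ini}, \cliq_{\tar}, k)$, pick arbitrary $k$-subsets $\cliq_{\ini}^* \subseteq \cliq_{\ini}$ and $\cliq_{\tar}^* \subseteq \cliq_{\tar}$. Under $\TAR{k}$ one has $\cliq_{\ini} \sevstep \cliq_{\ini}^*$ and $\cliq_{\tar} \sevstep \cliq_{\tar}^*$ by removing extra vertices one at a time (intermediate sets are sub-cliques of size at least $k$), so $\TARins{\cliq_{\ini}}{\cliq_{\tar}}{k} = \TARins{\cliq_{\ini}^*}{\cliq_{\tar}^*}{k}$. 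Part (a) then provides a $\YES/\NO$ equivalence between the $\TAR{k}$-instance $(G, \cliq_{\ini}^*, \cliq_{\tar}^*, k)$ and the $\TS$-instance $(G, \cliq_{\ini}^*, \cliq_{\tar}^*)$, so outputting the latter finishes the proof, clearly in linear time. The main obstacle throughout is the peak-smoothing lemma, both in verifying the local-swap guarantees (clique property inherited via subset inclusion, threshold guaranteed by $M - 2 \ge s$) and in arranging the iteration to terminate, for which the potential $\sum_i |D_i|$ serves as a clean monovariant.
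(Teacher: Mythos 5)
Your proposal is correct and follows essentially the same route as the paper: part (a) uses the identical construction ($k'$ equal to the common clique size) with the two-step simulation of each TS-move and a peak-smoothing lemma proved by the same local swap (replacing a peak of size at least $s+2$ by a valley, with $\sum_i |D_i|$ as the decreasing potential), and part (b) uses the same arbitrary $k$-subsets reachable under $\TAR{k}$ by one-by-one deletions, reduced to part (a).
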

	By Theorem~\ref{the:TS=TAR}(a), note that the reduction from $\TS$ to $\TARrule$ preserves the shortest length of reconfiguration sequences.
\medskip

	\noindent
	{\em Proof of Theorem~{\rm \ref{the:TS=TAR}(}a{\rm )}.}
	Let $(G, \cliq_{\ini}, \cliq_{\tar})$ be a $\TS$-instance with $|\cliq_{\ini}| = |\cliq_{\tar}| = k$.
	Then, as the corresponding $\TARrule$-instance $(G, \cliq_{\ini}^\prime, \cliq_{\tar}^\prime, k^\prime)$, we let $\cliq_{\ini}^\prime = \cliq_{\ini}$, $\cliq_{\tar}^\prime = \cliq_{\tar}$ and $k^\prime = k$;
this $\TARrule$-instance can be clearly constructed in linear time.
	We thus prove the following lemma, as a proof of Theorem~\ref{the:TS=TAR}(a).
	\begin{lemma} \label{lem:TS->TAR}
	Let $G$ be a graph, and let $\cliq_{\ini}$ and $\cliq_{\tar}$ be any pair of cliques of $G$ such that $|\cliq_{\ini}| = |\cliq_{\tar}| = k$.
	Then, $\TSins{\cliq_{\ini}}{\cliq_{\tar}} = \TARins{\cliq_{\ini}}{\cliq_{\tar}}{k}$ and $\distTS{\cliq_{\ini}}{\cliq_{\tar}} = \distTAR{\cliq_{\ini}}{\cliq_{\tar}}{k} /2$.
	\end{lemma}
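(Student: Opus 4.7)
The plan is to prove the two inequalities $\distTAR{\cliq_{\ini}}{\cliq_{\tar}}{k} \le 2\cdot\distTS{\cliq_{\ini}}{\cliq_{\tar}}$ and $\distTS{\cliq_{\ini}}{\cliq_{\tar}} \le \distTAR{\cliq_{\ini}}{\cliq_{\tar}}{k}/2$ separately, under the convention $+\infty/2 = +\infty$. The decision equality $\TSins{\cliq_{\ini}}{\cliq_{\tar}} = \TARins{\cliq_{\ini}}{\cliq_{\tar}}{k}$ then follows immediately, since either side equals $\YES$ if and only if the corresponding distance is finite.

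For the easy direction I simulate every $\TS$ slide by two $\TAR{k}$ moves. Given a slide from $v \in \cliq$ to $w \notin \cliq$, the set $\cliq \cup \{w\}$ is a clique: $w$ is adjacent to every vertex of $\cliq \setminus \{v\}$ because $(\cliq \setminus \{v\}) \cup \{w\}$ is a clique by hypothesis, and $vw \in E(G)$ by the $\TS$ rule. Hence the two $\TAR$ steps ``add $w$, then remove $v$'' are legal, their intermediate clique has size $k+1 \ge k$, and this doubles the length of any $\TS$-sequence.

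For the harder direction I show that any $\TAR{k}$-sequence between $\cliq_{\ini}$ and $\cliq_{\tar}$ can be converted, without increasing its length, into an even-length sequence whose clique sizes alternate strictly between $k$ and $k+1$; taking every other clique then gives a $\TS$-sequence of exactly half the length. The conversion proceeds in two stages. First, if the sequence ever reaches some size $m \ge k+2$, I pick a local maximum $\cliq_{i-1}, \cliq_i, \cliq_{i+1}$ with $|\cliq_i| = m$, so that $\cliq_i = \cliq_{i-1}\cup\{u\}$ and $\cliq_{i+1} = \cliq_i \setminus \{v\}$ for some $u, v$. If $u = v$ I excise the two redundant steps; otherwise $v \in \cliq_i \setminus \{u\} = \cliq_{i-1}$, and replacing $\cliq_i$ by $\cliq_i^\prime := \cliq_{i-1}\setminus\{v\} = \cliq_{i+1}\setminus\{u\}$ (of size $m - 2 \ge k$) yields a valid $\TAR{k}$-sequence of the same length whose peak at position $i$ has dropped by two. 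Iterating forces every size into $\{k,k+1\}$, which then forces strict alternation since consecutive sizes differ by exactly one and the endpoints both have size $k$. Second, between consecutive size-$k$ cliques $\cliq_{2j}$ and $\cliq_{2j+2}$ some $u$ is added and some $v$ is removed; if $u = v$ I again cut the two steps, and otherwise both $u$ and $v$ lie in the intermediate clique $\cliq_{2j+1}$, so $uv \in E(G)$ and $\cliq_{2j} \to \cliq_{2j+2}$ is a legitimate $\TS$ step.

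The main obstacle is the peak-reduction rewrite: it depends critically on the inequality $m - 2 \ge k$ (which is why it is applied only when $m \ge k+2$) and on the set identity $\cliq_{i-1}\setminus\{v\} = \cliq_{i+1}\setminus\{u\}$ (which requires $u \ne v$). Verifying that these local rewrites always terminate and never lengthen the sequence is the main technical content; a convenient termination measure is the sum $\sum_i |\cliq_i|$, which strictly decreases at every non-trivial rewrite.
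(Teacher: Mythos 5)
Your proposal is correct and follows essentially the same route as the paper: the paper's Lemma~\ref{lem:ar<=k+1} establishes the same normal form (sizes alternating between $k$ and $k+1$) by taking a shortest $\TAR{k}$-sequence minimizing $\sum_i |\cliq_i|$ and applying exactly your peak-reduction rewrite $\cliq_s \mapsto \cliq_{s-1}\setminus\{b\}$ as a contradiction, and the two translations (expanding a slide into add-then-remove, contracting an alternating pair into a slide via the adjacency of $u$ and $w$ inside the size-$(k+1)$ clique) are identical. The only cosmetic difference is that you phrase the normalization as a terminating local-rewrite process on arbitrary sequences (handling $u=v$ by excision) rather than as an extremal argument on shortest ones.
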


	\noindent
	{\em Proof of Theorem~{\rm \ref{the:TS=TAR}(}b{\rm )}.}
	Let $(G, \cliq_{\ini}, \cliq_{\tar}, k)$ be a $\TARrule$-instance; 
note that $|\cliq_{\ini}| \neq |\cliq_{\tar}|$ may hold, and both $|\cliq_{\ini}| \ge k$ and $|\cliq_{\tar}| \ge k$ hold.
	Then, as the corresponding $\TS$-instance $(G, \cliq_{\ini}^\prime, \cliq_{\tar}^\prime)$, let $\cliq_{\ini}^\prime \subseteq \cliq_{\ini}$ and $\cliq_{\tar}^\prime \subseteq \cliq_{\tar}$ be arbitrary subsets of size exactly $k$;
this $\TS$-instance can be clearly constructed in linear time.
	We thus prove the following lemma, as a proof of Theorem~\ref{the:TS=TAR}(b).
	\begin{lemma} \label{lem:TS=TARb}
	Let $(G, \cliq_{\ini}, \cliq_{\tar}, k)$ be a $\TARrule$-instance, and let $\cliq_{\ini}^\prime \subseteq \cliq_{\ini}$ and $\cliq_{\tar}^\prime \subseteq \cliq_{\tar}$ be arbitrary subsets of size exactly $k$.
	Then, $\TARins{\cliq_{\ini}}{\cliq_{\tar}}{k} = \TSins{\cliq_{\ini}^\prime}{\cliq_{\tar}^\prime}$.
	\end{lemma}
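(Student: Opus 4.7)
My plan is to chain Lemma~\ref{lem:TS->TAR} together with a simple ``padding/trimming'' observation on $\TARrule$-sequences. Since $|\cliq_{\ini}^\prime| = |\cliq_{\tar}^\prime| = k$, Lemma~\ref{lem:TS->TAR} already gives
\[
\TSins{\cliq_{\ini}^\prime}{\cliq_{\tar}^\prime} = \TARins{\cliq_{\ini}^\prime}{\cliq_{\tar}^\prime}{k}.
\]
So it suffices to show $\TARins{\cliq_{\ini}}{\cliq_{\tar}}{k} = \TARins{\cliq_{\ini}^\prime}{\cliq_{\tar}^\prime}{k}$; once that is done, I can compose the two equalities to obtain the lemma.

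The key observation is that any subset of a clique is again a clique. In particular, if I enumerate $\cliq_{\ini} \setminus \cliq_{\ini}^\prime = \{u_1, u_2, \ldots, u_p\}$, then the sequence
\[
\langle \cliq_{\ini},\ \cliq_{\ini} \setminus \{u_1\},\ \cliq_{\ini} \setminus \{u_1, u_2\},\ \ldots,\ \cliq_{\ini}^\prime \rangle
\]
is a valid $\TAR{k}$-sequence, because every set in it is a subset of the clique $\cliq_{\ini}$ containing the size-$k$ set $\cliq_{\ini}^\prime$, and hence is a clique of size at least $k$. This gives $\cliq_{\ini} \sevstep \cliq_{\ini}^\prime$ under $\TAR{k}$, and by the same argument $\cliq_{\tar} \sevstep \cliq_{\tar}^\prime$ under $\TAR{k}$.

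With those two ``short-cut'' sequences in hand, both implications become concatenations (using reversibility of reconfiguration sequences, stated in Preliminaries). For the forward direction, if $\cliq_{\ini} \sevstep \cliq_{\tar}$ under $\TAR{k}$, then
\[
\cliq_{\ini}^\prime \sevstep \cliq_{\ini} \sevstep \cliq_{\tar} \sevstep \cliq_{\tar}^\prime
\]
is a $\TAR{k}$-sequence, so $\TARins{\cliq_{\ini}^\prime}{\cliq_{\tar}^\prime}{k} = \YES$. Conversely, if $\cliq_{\ini}^\prime \sevstep \cliq_{\tar}^\prime$ under $\TAR{k}$, then
\[
\cliq_{\ini} \sevstep \cliq_{\ini}^\prime \sevstep \cliq_{\tar}^\prime \sevstep \cliq_{\tar}
\]
witnesses $\TARins{\cliq_{\ini}}{\cliq_{\tar}}{k} = \YES$. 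In both concatenations the $\TAR{k}$ threshold is respected throughout, because each piece individually satisfies it.

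There is no genuine obstacle here: the only thing one has to check carefully is that in the trimming sequences every intermediate set has size at least $k$ and is a clique, but both facts are immediate from the choice $|\cliq_{\ini}^\prime| = |\cliq_{\tar}^\prime| = k$ and from the fact that subsets of cliques are cliques. The whole proof is therefore a three-line concatenation argument once Lemma~\ref{lem:TS->TAR} is invoked to bridge $\TS$ and $\TAR{k}$ at the common size-$k$ endpoints.
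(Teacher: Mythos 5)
Your proof is correct and follows essentially the same route as the paper: it also establishes $\cliq_{\ini} \sevstep \cliq_{\ini}^\prime$ and $\cliq_{\tar} \sevstep \cliq_{\tar}^\prime$ under $\TAR{k}$ by deleting vertices one by one, invokes Lemma~\ref{lem:TS->TAR} at the size-$k$ endpoints, and concludes both directions by concatenation and reversibility. No gaps.
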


	\subsection{Equivalence of $\TJ$ and $\TARrule$ rules} 
	
	$\TJ$ and $\TARrule$ rules are equivalent, as in the following sense.
	\begin{theorem} \label{the:TJ=TAR}
	$\TJ$ and $\TARrule$ rules are equivalent in \textsc{clique reconfiguration}, as follows{\rm :} 
		\begin{listing}{aaa}
		\item[{\rm (}a{\rm )}] for any $\TJ$-instance $(G, \cliq_{\ini}, \cliq_{\tar})$, a $\TARrule$-instance $(G, \cliq_{\ini}^\prime, \cliq_{\tar}^\prime, k^\prime)$ can be constructed in linear time such that  $\TJins{\cliq_{\ini}}{\cliq_{\tar}} = \TARins{\cliq_{\ini}^\prime}{\cliq_{\tar}^\prime}{k^\prime}$ and $\distTJ{\cliq_{\ini}}{\cliq_{\tar}} = \distTAR{\cliq_{\ini}^\prime}{\cliq_{\tar}^\prime}{k^\prime} /2${\rm ;} and
		\item[{\rm (}b{\rm )}] for any $\TARrule$-instance $(G, \cliq_{\ini}, \cliq_{\tar},k)$, a $\TJ$-instance $(G, \cliq_{\ini}^\prime, \cliq_{\tar}^\prime)$ can be constructed in linear time such that $\TARins{\cliq_{\ini}}{\cliq_{\tar}}{k} = \TJins{\cliq_{\ini}^\prime}{\cliq_{\tar}^\prime}$.
		\end{listing}
	\end{theorem}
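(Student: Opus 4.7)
The plan is to mirror the structure of Theorem~\ref{the:TS=TAR}, exploiting a parallel observation for the $\TJ$ rule. Whereas a $\TS$-step $\cliq \onestep \cliq^\prime$ (sliding $w$ along an edge to $v$) is the net effect of an ``add $v$ then remove $w$'' pair of $\TAR$-steps at threshold $k=|\cliq|$, a $\TJ$-step (swapping $w$ for $v$, with $vw$ not necessarily an edge) is instead the net effect of a ``remove $w$ then add $v$'' pair, but at the lowered threshold $k-1$: the intermediate clique $\cliq\setminus\{w\}$ has size $k-1$, and the final $(\cliq\setminus\{w\})\cup\{v\}=\cliq^\prime$ has size $k$ and is a clique by the $\TJ$ condition. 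The adjacency constraint of $\TS$ is thus exchanged for a lowered threshold.

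For part~(a) I would set $\cliq_{\ini}^\prime=\cliq_{\ini}$, $\cliq_{\tar}^\prime=\cliq_{\tar}$, and $k^\prime=k-1$, where $k=|\cliq_{\ini}|=|\cliq_{\tar}|$ (the case $k=0$ being trivial), and prove the direct analogue of Lemma~\ref{lem:TS->TAR}: on two size-$k$ cliques, $\TJ$-reachability coincides with $\TAR{k-1}$-reachability, and $\distTJ{\cliq_{\ini}}{\cliq_{\tar}}=\distTAR{\cliq_{\ini}}{\cliq_{\tar}}{k-1}/2$. The $(\TJ\to\TAR{k-1})$ direction is immediate from the decomposition above. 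For the converse, any $\TAR{k-1}$-sequence between two size-$k$ cliques must have even length (by a parity argument on sizes), and I would show that the sequence admits a splitting into consecutive two-step windows, each representing a single $\TJ$-step: a remove-then-add window is a $\TJ$-step by definition, and an add-then-remove window is in fact a $\TS$-step---the added and subsequently removed vertices coexist in the intermediate clique and are therefore adjacent---hence also a $\TJ$-step. A local exchange argument, commuting two consecutive operations of the same type past an adjacent operation of the opposite type, shows that in a shortest sequence no monotone (same-type) window need occur.

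For part~(b) I would take arbitrary size-$k$ subsets $\cliq_{\ini}^\prime\subseteq\cliq_{\ini}$ and $\cliq_{\tar}^\prime\subseteq\cliq_{\tar}$, in direct parallel with Lemma~\ref{lem:TS=TARb}. The forward implication $\TARins{\cliq_{\ini}}{\cliq_{\tar}}{k}=\YES\Rightarrow\TJins{\cliq_{\ini}^\prime}{\cliq_{\tar}^\prime}=\YES$ follows immediately, since Lemma~\ref{lem:TS=TARb} yields a $\TS$-sequence between $\cliq_{\ini}^\prime$ and $\cliq_{\tar}^\prime$ and every $\TS$-step is a $\TJ$-step. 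For the converse, I would convert a $\TJ$-sequence on $\cliq_{\ini}^\prime$ and $\cliq_{\tar}^\prime$ into a $\TAR{k-1}$-sequence via part~(a), and then lift it to a $\TAR{k}$-sequence by threading it through the extra vertices of $\cliq_{\ini}\setminus\cliq_{\ini}^\prime$ and $\cliq_{\tar}\setminus\cliq_{\tar}^\prime$, which are universally adjacent to the corresponding size-$k$ subsets and can serve as helpers added before each threshold-violating removal and removed after the subsequent addition.

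The main obstacle I expect is the converse of part~(a): producing a clean exchange argument that reduces an optimal $\TAR{k-1}$-sequence to strictly alternating two-step windows, verifying both that consecutive same-type operations can be commuted without violating the $k-1$ threshold or the clique property, and that no null windows (in which the same vertex is added and immediately removed) can survive in a shortest sequence.
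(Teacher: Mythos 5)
Your part~(a) is essentially the paper's argument: the same reduction $k^\prime = k-1$, the same decomposition of a $\TJ$-step into a remove-then-add pair of $\TAR{k-1}$-steps, and, for the converse, a normalization of a shortest $\TAR{k-1}$-sequence into alternating two-step windows. The paper obtains that normalization by choosing a shortest sequence minimizing $\sum_i |\cliq_i|$ and showing every intermediate clique then has size at most $k$ (Lemma~\ref{lem:ar<=k}), so that every window is a remove-then-add pair; your local-exchange formulation is the same idea in different clothing and would go through.

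Part~(b), however, has a genuine gap: taking size-$k$ subsets $\cliq_{\ini}^\prime \subseteq \cliq_{\ini}$ and $\cliq_{\tar}^\prime \subseteq \cliq_{\tar}$ does not yield an equivalent $\TJ$-instance. By your own part~(a), $\TJ$-reachability between size-$k$ cliques coincides with $\TAR{k-1}$-reachability, which is strictly more permissive than $\TAR{k}$-reachability, so your reduction conflates the two thresholds. Concretely, let $G$ consist of two triangles $\{a,b,c\}$ and $\{c,d,e\}$ sharing only the vertex $c$, with $\cliq_{\ini}=\{a,b,c\}$, $\cliq_{\tar}=\{c,d,e\}$ and $k=2$. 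Then $\TARins{\cliq_{\ini}}{\cliq_{\tar}}{2}=\NO$, since every clique of size at least $2$ lies entirely in one triangle and any $\TAR{2}$-step crossing between the triangles would force an intermediate clique contained in $\{c\}$; yet for any choice of $2$-element subsets the $\TJ$-answer is $\YES$ (for instance $\{a,b\} \onestep \{b,c\} \onestep \{c,d\} \onestep \{d,e\}$ under $\TJ$). Your proposed repair of the converse direction, threading the sequence through the vertices of $\cliq_{\ini}\setminus\cliq_{\ini}^\prime$ as helpers, fails for the same reason: those vertices are adjacent to all of $\cliq_{\ini}$ but not to the later cliques of the sequence (here $b$ is not adjacent to $d$). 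The reduction that works, and the one the paper uses, takes $\cliq_j^\prime$ of size exactly $k+1$ --- an arbitrary subset of $\cliq_j$ when $|\cliq_j|\ge k+1$ and an arbitrary superset when $|\cliq_j|=k$ --- so that Lemma~\ref{lem:TJ->TAR} applies with threshold $(k+1)-1=k$; this in turn requires first disposing of the case where $\cliq_{\ini}$ or $\cliq_{\tar}$ is a maximal clique of size exactly $k$, in which case no such superset exists, that clique is frozen under $\TAR{k}$, and the answer is $\NO$ unless $\cliq_{\ini}=\cliq_{\tar}$.
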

	By Theorem~\ref{the:TJ=TAR}(a), note that the reduction from $\TJ$ to $\TARrule$ preserves the shortest length of reconfiguration sequences.
%
\medskip

	\noindent
	{\em Proof of Theorem~{\rm \ref{the:TJ=TAR}(}a{\rm )}.}
	Let $(G, \cliq_{\ini}, \cliq_{\tar})$ be a $\TJ$-instance with $|\cliq_{\ini}| = |\cliq_{\tar}| = k$.
	Then, as the corresponding $\TARrule$-instance $(G, \cliq_{\ini}^\prime, \cliq_{\tar}^\prime, k^\prime)$, we let $\cliq_{\ini}^\prime = \cliq_{\ini}$, $\cliq_{\tar}^\prime = \cliq_{\tar}$ and $k^\prime = k-1$;
this $\TARrule$-instance can be clearly constructed in linear time.
	We thus prove the following lemma, as a proof of Theorem~\ref{the:TJ=TAR}(a).
	\begin{lemma} \label{lem:TJ->TAR}
	Let $G$ be a graph, and let $\cliq_{\ini}$ and $\cliq_{\tar}$ be any pair of cliques of $G$ such that $|\cliq_{\ini}| = |\cliq_{\tar}| = k$.
	Then, $\TJins{\cliq_{\ini}}{\cliq_{\tar}} = \TARins{\cliq_{\ini}}{\cliq_{\tar}}{k-1}$ and $\distTJ{\cliq_{\ini}}{\cliq_{\tar}} = \distTAR{\cliq_{\ini}}{\cliq_{\tar}}{k-1} /2$.
	\end{lemma}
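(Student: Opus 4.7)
The plan is to prove both directions of the correspondence by simulating each type of move with the other. The easy direction is TJ $\Rightarrow$ $\TAR{k-1}$: given a $\TJ$-sequence $\langle \cliq_\ini = \cliq^0, \cliq^1, \ldots, \cliq^\ell = \cliq_\tar\rangle$, each jump $\cliq^{i-1} \onestep \cliq^i$ with $\cliq^{i-1}\setminus\cliq^i = \{v\}$ and $\cliq^i \setminus \cliq^{i-1} = \{w\}$ splits into two $\TAR{k-1}$ steps: first remove $v$ to obtain the clique $\cliq^{i-1}\setminus\{v\}$ of size $k-1$, then add $w$ to reach $\cliq^i$. All intermediate cliques are subsets of cliques, hence cliques, and all have size $\ge k-1$. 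This yields a $\TAR{k-1}$-sequence of length $2\ell$, establishing $\distTAR{\cliq_\ini}{\cliq_\tar}{k-1} \le 2\,\distTJ{\cliq_\ini}{\cliq_\tar}$ and (for the YES/NO part) one implication.

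For the reverse direction, the key technical step I will isolate is a \emph{flattening lemma}: any $\TAR{k-1}$-sequence between two cliques of size $k$ can be transformed, without increasing its length, into a $\TAR{k-1}$-sequence all of whose cliques have size in $\{k-1,k\}$. To prove it, I look at any ``peak'' where the size rises above $k$: a triple $\cliq^{i-1}, \cliq^i, \cliq^{i+1}$ with $|\cliq^{i-1}| = |\cliq^{i+1}| = k$ and $|\cliq^i| = k+1$. Writing $\cliq^i = \cliq^{i-1}\cup\{v\}$ and $\cliq^{i+1} = \cliq^i\setminus\{w\}$, if $w=v$ the two moves cancel and can be deleted (shortening by 2); if $w\ne v$, I replace $\cliq^i$ by $\cliq^{i-1}\setminus\{w\}$, which is a clique of size $k-1$ lying in both $\cliq^{i-1}$ and $\cliq^{i+1}$, so the modified triple is a legal $\TAR{k-1}$-transition. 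Choosing, at each stage, a peak of maximum height and lowering it this way strictly decreases the multiset of sizes in lexicographic order, so iteration terminates with a sequence of the claimed form. The length is non-increasing throughout.

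Once the sequence is flattened, its sizes alternate $k, k-1, k, k-1, \ldots, k$, so its length is some even number $2\ell$. Collapsing each consecutive (remove, add) pair $\cliq^{2j} \onestep \cliq^{2j+1} \onestep \cliq^{2j+2}$ with $\cliq^{2j}\setminus\cliq^{2j+1} = \{v\}$ and $\cliq^{2j+2}\setminus\cliq^{2j+1} = \{w\}$ produces a single $\TJ$-step $\cliq^{2j}\onestep\cliq^{2j+2}$ (the resulting $\cliq^{2j+2}$ is already known to be a clique). This gives a $\TJ$-sequence of length $\ell$, so $\distTJ{\cliq_\ini}{\cliq_\tar} \le \distTAR{\cliq_\ini}{\cliq_\tar}{k-1}/2$. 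Combined with the forward direction we obtain equality of distances, and the YES/NO equivalence is immediate (taking the convention $+\infty/2 = +\infty$).

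The main obstacle is the flattening lemma: I need to be careful that the local replacement really produces valid cliques (the replacement $\cliq^{i-1}\setminus\{w\}$ is trivially a clique as a subset, but I must check the transitions on both sides remain single-element symmetric differences of size $1$), and that the iterative lowering procedure terminates; a size-vector potential argument handles termination cleanly. Everything else is routine bookkeeping on lengths.
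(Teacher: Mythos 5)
Your argument is correct and follows essentially the same route as the paper: the forward direction is the identical remove-then-add simulation, and your flattening lemma is exactly the paper's Lemma~\ref{lem:ar<=k}, proved via the very same local replacement of a peak $\cliq^{i}$ by $\cliq^{i-1}\setminus\{w\}$ (the paper packages the termination argument as choosing a shortest sequence minimizing $\sum_i |\cliq_i|$ rather than your multiset potential, but the rewriting step is the same). The one detail to add is in the final collapsing step: you also need $v \neq w$ there (otherwise $\cliq^{2j} = \cliq^{2j+2}$ and the collapsed pair is not a legal $\TJ$-step), which holds because the flattened sequence has the same length as a shortest $\TAR{k-1}$-sequence and a shortest sequence cannot visit the same clique twice --- precisely the observation the paper makes explicitly.
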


	\noindent
	{\em Proof of Theorem~{\rm \ref{the:TJ=TAR}(}b{\rm )}.}
	Let $(G, \cliq_{\ini}, \cliq_{\tar}, k)$ be a $\TARrule$-instance; 
$|\cliq_{\ini}| \neq |\cliq_{\tar}|$ may hold, and both $|\cliq_{\ini}| \ge k$ and $|\cliq_{\tar}| \ge k$ hold.
	We first give the following lemma.
	\begin{lemma}
	Let $(G, \cliq_{\ini}, \cliq_{\tar}, k)$ be a $\TARrule$-instance such that $\cliq_{\ini} \neq \cliq_{\tar}$.
	Suppose that there exists an index $j \in \{\ini, \tar\}$ such that $|\cliq_j| =k$ and $\cliq_j$ is a maximal clique in $G$.
	Then, $\TARins{\cliq_{\ini}}{\cliq_{\tar}}{k} = \NO$. 
	\end{lemma}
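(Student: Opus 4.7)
The plan is to observe that under the hypothesis the distinguished clique $\cliq_j$ is an isolated vertex in the reconfiguration graph of $\TAR{k}$, so no nontrivial reconfiguration sequence can touch it.

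First, I would invoke the reversibility of $\TAR{k}$-sequences (recorded in the Preliminaries: $\cliq \sevstep \cliq^\prime$ under $\TAR{k}$ iff $\cliq^\prime \sevstep \cliq$ under $\TAR{k}$) to reduce to the case $j = \ini$; the case $j = \tar$ then follows by reversing any hypothetical sequence. So assume $|\cliq_{\ini}| = k$ and $\cliq_{\ini}$ is a maximal clique of $G$.

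Next, I would analyze the possible $\TAR{k}$-moves out of $\cliq_{\ini}$. By definition, $\cliq_{\ini} \onestep \cliq$ under $\TAR{k}$ requires $|\cliq_{\ini} \vartriangle \cliq| = 1$ together with $|\cliq_{\ini}|, |\cliq| \ge k$. There are only two possibilities. First, a removal: $\cliq = \cliq_{\ini} \setminus \{v\}$ for some $v \in \cliq_{\ini}$; but then $|\cliq| = k - 1 < k$, violating the threshold. Second, an addition: $\cliq = \cliq_{\ini} \cup \{v\}$ for some $v \in V(G) \setminus \cliq_{\ini}$ such that $\cliq$ is a clique; but this contradicts the maximality of $\cliq_{\ini}$. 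Hence there is no clique $\cliq \neq \cliq_{\ini}$ with $\cliq_{\ini} \onestep \cliq$ under $\TAR{k}$.

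Finally, I would conclude: any $\TAR{k}$-sequence $\langle \cliq_{\ini}, \cliq_1, \ldots, \cliq_{\ell} \rangle$ of positive length would require $\cliq_{\ini} \onestep \cliq_1$ with $\cliq_1 \neq \cliq_{\ini}$, which is impossible by the previous step. Therefore the only clique reachable from $\cliq_{\ini}$ is $\cliq_{\ini}$ itself, and since $\cliq_{\tar} \neq \cliq_{\ini}$ we get $\TARins{\cliq_{\ini}}{\cliq_{\tar}}{k} = \NO$. The argument is essentially a one-step observation; the only subtlety worth flagging is the appeal to reversibility to handle the symmetric case $j = \tar$, since the hypothesis is stated on either endpoint.
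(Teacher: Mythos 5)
Your proposal is correct and follows essentially the same argument as the paper: the paper likewise observes that maximality blocks any addition and the threshold $k$ blocks any removal, so $\cliq_j$ has no $\TAR{k}$-neighbor, and $\cliq_{\ini} \neq \cliq_{\tar}$ then forces the answer $\NO$. The paper does not even separate the cases $j = \ini$ and $j = \tar$ explicitly, since isolating either endpoint suffices; your appeal to reversibility is a harmless extra step.
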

	\begin{proof}
	Since $\cliq_j$ is maximal, there is no clique in $G$ which can be obtained by adding a vertex to $\cliq_j$. 
	Furthermore, since $|\cliq_j| = k$, we cannot delete any vertex from $\cliq_j$ to keep the threshold $k$. 
	Thus, there is no clique $C$ in $G$ such that $\cliq_j \onestep C$ under $\TAR{k}$.
	Since $\cliq_{\ini} \neq \cliq_{\tar}$, we have $\TARins{\cliq_{\ini}}{\cliq_{\tar}}{k} = \NO$. 
	\qed
	\end{proof}

	We thus assume without loss of generality that none of $\cliq_{\ini}$ and $\cliq_{\tar}$ is a maximal clique in $G$ of size $k$;
note that the maximality of a clique can be determined in linear time. 
	Then, we construct the corresponding $\TJ$-instance $(G, \cliq_{\ini}^\prime, \cliq_{\tar}^\prime)$, as in the following two cases (\one) and (\two):
	\begin{listing}{aaa}
	\item[(\one)] for each $j \in \{\ini, \tar\}$ such that $|\cliq_j| \ge k+1$, let $\cliq_j^\prime \subseteq \cliq_j$ be an arbitrary subset of size exactly $k+1$; and
	\item[(\two)] for each $j \in \{\ini, \tar\}$ such that $|\cliq_j| = k$, let $\cliq_j^\prime \supset \cliq_j$ be an arbitrary superset of size exactly $k+1$.
	\end{listing}
	This $\TJ$-instance can be clearly constructed in linear time.
	We thus prove the following lemma, as a proof of Theorem~\ref{the:TJ=TAR}(b).
	\begin{lemma} \label{lem:TJ=TARb}
	Let $(G, \cliq_{\ini}, \cliq_{\tar}, k)$ be a $\TARrule$-instance, and let $(G, \cliq_{\ini}^\prime, \cliq_{\tar}^\prime)$ be the corresponding $\TJ$-instance constructed above.
	Then, $\TARins{\cliq_{\ini}}{\cliq_{\tar}}{k} = \TJins{\cliq_{\ini}^\prime}{\cliq_{\tar}^\prime}$.
	\end{lemma}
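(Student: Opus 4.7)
The plan is to reduce the lemma to Lemma~\ref{lem:TJ->TAR}, which already shows that $\TJ$ coincides with $\TAR{k'-1}$ on equal-size cliques of size $k'$. Since by construction $|\cliq_\ini^\prime| = |\cliq_\tar^\prime| = k+1$, applying Lemma~\ref{lem:TJ->TAR} with $k' = k+1$ gives $\TJins{\cliq_\ini^\prime}{\cliq_\tar^\prime} = \TARins{\cliq_\ini^\prime}{\cliq_\tar^\prime}{k}$. It therefore suffices to prove $\TARins{\cliq_\ini}{\cliq_\tar}{k} = \TARins{\cliq_\ini^\prime}{\cliq_\tar^\prime}{k}$, and for this I will show $\cliq_j \sevstep \cliq_j^\prime$ under $\TAR{k}$ for each $j \in \{\ini, \tar\}$ and then concatenate reconfiguration sequences.

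The reachability claim is routine thanks to the two-case construction. In case~(\one), where $|\cliq_j| \ge k+1$, the primed version $\cliq_j^\prime$ is an arbitrary size-$(k+1)$ subset of $\cliq_j$; I would produce a $\TAR{k}$-sequence from $\cliq_j$ to $\cliq_j^\prime$ by removing the $|\cliq_j|-(k+1)$ extra vertices one at a time, noting that every intermediate clique still has size at least $k+1 > k$ and so satisfies the threshold. In case~(\two), where $|\cliq_j| = k$, the set $\cliq_j^\prime = \cliq_j \cup \{v\}$ for some vertex $v$, and this extension exists precisely because the preceding reduction step assumes $\cliq_j$ is not a maximal clique of size $k$; the single addition $\cliq_j \onestep \cliq_j^\prime$ is a valid $\TAR{k}$-step since $|\cliq_j| = k$ and $|\cliq_j^\prime| = k+1$. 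In both cases $\cliq_j \sevstep \cliq_j^\prime$ under $\TAR{k}$, and by reversibility also $\cliq_j^\prime \sevstep \cliq_j$.

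With these bridges in hand, both directions of the equivalence follow by concatenation. If $\TARins{\cliq_\ini}{\cliq_\tar}{k} = \YES$, then $\cliq_\ini^\prime \sevstep \cliq_\ini \sevstep \cliq_\tar \sevstep \cliq_\tar^\prime$ under $\TAR{k}$, so $\TARins{\cliq_\ini^\prime}{\cliq_\tar^\prime}{k} = \YES$ and hence $\TJins{\cliq_\ini^\prime}{\cliq_\tar^\prime} = \YES$ by Lemma~\ref{lem:TJ->TAR}. Conversely, if $\TJins{\cliq_\ini^\prime}{\cliq_\tar^\prime} = \YES$, the same lemma yields $\TARins{\cliq_\ini^\prime}{\cliq_\tar^\prime}{k} = \YES$, and concatenating $\cliq_\ini \sevstep \cliq_\ini^\prime \sevstep \cliq_\tar^\prime \sevstep \cliq_\tar$ gives $\TARins{\cliq_\ini}{\cliq_\tar}{k} = \YES$. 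There is no real obstacle here, since the entire argument is a chain of already-established facts; the only point requiring a brief check is that the vertex $v$ used in case~(\two) indeed exists, which is exactly the content of the non-maximality assumption explicitly imposed just before the construction.
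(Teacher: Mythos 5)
Your proposal is correct and follows essentially the same route as the paper: establish $\cliq_{j} \sevstep \cliq_{j}^\prime$ under $\TAR{k}$ in both construction cases (the paper defers this to the argument of Lemma~\ref{lem:TS=TARb}, while you spell it out), invoke Lemma~\ref{lem:TJ->TAR} with threshold $k = (k+1)-1$ to identify $\TJins{\cliq_{\ini}^\prime}{\cliq_{\tar}^\prime}$ with $\TARins{\cliq_{\ini}^\prime}{\cliq_{\tar}^\prime}{k}$, and conclude by concatenating sequences in both directions. Your explicit check that the vertex $v$ in case~(\two) exists via the non-maximality assumption is exactly the point the paper relies on as well.
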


\subsection{Results obtained from {\sc independent set reconfiguration}}
\label{subsec:independent-clique}

	We here show two complexity results for \textsc{clique reconfiguration}, which can be obtained from known results for \textsc{independent set reconfiguration}. 

	Consider a vertex subset $\cliq$ of a graph $G$. 
	Then, $\cliq$ forms a clique in $G$ if and only if $\cliq$ forms an independent set in the complement $\overline{G}$ of $G$.
	Therefore, the following lemma clearly holds.
	\begin{lemma} \label{lem:clique-independent}
	Let $G$ be a graph, and let $\cliq_{j}$ be a clique of $G$ for each $j \in \{0, 1, \ldots, \ell\}$.
	Then, $\langle \cliq_{0}, \cliq_{1}, \ldots, \cliq_{\ell} \rangle$ is a $\TAR{k}$-sequence of cliques in $G$
if and only if
$\langle \cliq_{0}, \cliq_{1}, \allowbreak \ldots, \cliq_{\ell} \rangle$ is a $\TAR{k}$-sequence of independent sets in the complement $\overline{G}$ of $G$.
	\end{lemma}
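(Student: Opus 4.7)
The plan is to observe that the $\TAR{k}$ rule, regarded as a rule on vertex subsets, is defined purely in terms of cardinalities and symmetric differences of the subsets. Recall that $\cliq_{i-1} \onestep \cliq_i$ under $\TAR{k}$ requires only that $|\cliq_{i-1}| \ge k$, $|\cliq_i| \ge k$, and $|\cliq_{i-1} \vartriangle \cliq_i| = 1$; none of these conditions refer to the edges of the host graph. The host graph enters the notion of ``$\TAR{k}$-sequence'' only through the side requirement that every term of the sequence be a feasible solution (a clique in one case, an independent set in the other).

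First I would invoke the elementary fact that a vertex subset $\cliq \subseteq V(G)$ is a clique in $G$ if and only if $\cliq$ is an independent set in $\overline{G}$, which holds because $V(\overline{G}) = V(G)$ and $E(\overline{G})$ is the complement of $E(G)$. Applying this coordinatewise, the hypothesis that each $\cliq_j$ is a clique of $G$ is equivalent to the statement that each $\cliq_j$ is an independent set of $\overline{G}$. Next I would unfold the definition of a $\TAR{k}$-sequence on both sides: on one side we require (i) every $\cliq_j$ is a clique of $G$, and (ii) for every $i \in \{1, \ldots, \ell\}$ we have $|\cliq_{i-1}|, |\cliq_i| \ge k$ and $|\cliq_{i-1} \vartriangle \cliq_i| = 1$. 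The analogous unfolding for an independent-set $\TAR{k}$-sequence in $\overline{G}$ gives the identical cardinality condition (ii), and a complement-translated feasibility condition which, by the previous observation, is equivalent to (i).

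Combining these equivalences yields the biconditional claimed in the lemma. I do not expect any technical obstacle: the whole argument amounts to noting that the $\TAR{k}$ reconfiguration rule is intrinsic to the subset family (it depends only on sizes and symmetric differences) and is therefore invariant under replacing $G$ by $\overline{G}$, provided the feasibility predicate is dualized from ``clique'' to ``independent set.''
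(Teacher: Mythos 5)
Your proposal is correct and follows exactly the paper's reasoning: the paper derives the lemma immediately from the observation that $\cliq$ is a clique of $G$ if and only if it is an independent set of $\overline{G}$, implicitly using, as you make explicit, that the $\TAR{k}$ adjacency condition depends only on set sizes and symmetric differences and not on the edges of the host graph. Your write-up simply spells out the details the paper leaves as ``clearly holds.''
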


	By Lemma~\ref{lem:clique-independent} we can convert a complexity result for \textsc{independent set reconfiguration} under $\TARrule$ for a graph class $\mathcal{G}$ into one for \textsc{clique reconfiguration} under $\TARrule$ for $\mathcal{G}$ if the graph class $\mathcal{G}$ is closed under taking complements. 
	Note that, by Theorems~\ref{the:TS=TAR} and \ref{the:TJ=TAR}, any complexity result under one rule can be converted into the same complexity result under the other two rules. 

	\begin{proposition} \label{pro:perfect}
	\textsc{Clique reconfiguration} is PSPACE-complete for perfect graphs under all rules $\TARrule$, $\TS$ and $\TJ$.
	\end{proposition}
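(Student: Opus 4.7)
The plan is to derive the proposition directly from a known PSPACE-completeness result for \textsc{independent set reconfiguration} on perfect graphs, using the machinery already set up: the complement correspondence in Lemma~\ref{lem:clique-independent} and the rule-equivalence Theorems~\ref{the:TS=TAR} and \ref{the:TJ=TAR}. The central ingredient is the Weak Perfect Graph Theorem (Lovász), which says that the class of perfect graphs is closed under taking complements. Hence, applying the reduction behind Lemma~\ref{lem:clique-independent} to a perfect graph produces another perfect graph, so PSPACE-completeness on one side transfers to the other.

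Concretely, I would argue PSPACE-hardness first. Take a known PSPACE-complete instance $(\overline{G}, I_{\ini}, I_{\tar}, k)$ of \textsc{independent set reconfiguration} under $\TARrule$ on perfect graphs. By the Weak Perfect Graph Theorem, $G$ is also perfect, and by Lemma~\ref{lem:clique-independent} the sets $I_{\ini}, I_{\tar}$ are cliques of $G$, with the TAR$_k$ reconfiguration sequences in $\overline{G}$ (for independent sets) being in bijection with the TAR$_k$ reconfiguration sequences in $G$ (for cliques). Thus $(G, I_{\ini}, I_{\tar}, k)$ is a PSPACE-hard instance of \textsc{clique reconfiguration} under $\TARrule$ on perfect graphs. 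Membership in PSPACE is standard and inherited from the general problem: a shortest reconfiguration sequence, if it exists, has length at most $2^{|V(G)|}$, and can be guessed step-by-step using only polynomial space, so by Savitch's theorem the problem lies in PSPACE. This settles the $\TARrule$ case. The $\TS$ and $\TJ$ cases then follow immediately by applying Theorems~\ref{the:TS=TAR}(b) and \ref{the:TJ=TAR}(b), since the reductions there keep the underlying graph $G$ unchanged and so preserve perfection.

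The main obstacle is the first step: we must cite (or otherwise guarantee) that \textsc{independent set reconfiguration} on perfect graphs is already known to be PSPACE-complete under one of the three rules. If a direct reference on perfect graphs is unavailable, the fallback is to pick a subclass $\mathcal{G}$ of perfect graphs for which the independent set version is known to be PSPACE-complete and whose complements still lie inside perfect graphs (which, thanks again to the Weak Perfect Graph Theorem, only requires $\mathcal{G}$ to consist of perfect graphs). Either way, closure of perfect graphs under complementation is the only structural fact we really need, and together with Lemma~\ref{lem:clique-independent} and the rule-equivalence theorems this collapses the proof to one short transfer argument.
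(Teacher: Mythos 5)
Your proposal matches the paper's proof essentially verbatim: both cite the PSPACE-completeness of \textsc{independent set reconfiguration} under $\TARrule$ for perfect graphs (Kami\'nski et al.), invoke closure of perfect graphs under complementation (Lov\'asz) together with Lemma~\ref{lem:clique-independent} to transfer the result to cliques, and then apply Theorems~\ref{the:TS=TAR}(b) and \ref{the:TJ=TAR}(b) for the $\TS$ and $\TJ$ rules. The only addition is your explicit remark on PSPACE membership, which the paper leaves implicit; the argument is correct.
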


	\begin{proposition} \label{pro:cograph}
	\textsc{Clique reconfiguration} can be solved in linear time for cographs under all rules $\TARrule$, $\TS$ and $\TJ$.
	\end{proposition}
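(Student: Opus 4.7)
The plan is to reduce to a single rule via the established rule equivalence and then exploit the recursive structure of cographs. By Theorems~\ref{the:TS=TAR} and~\ref{the:TJ=TAR}, it suffices to give a linear-time algorithm under one rule, which I take to be $\TJ$. I would first compute the cotree $T$ of the input cograph $G$ in linear time (by, e.g., the Corneil--Perl--Stewart algorithm). Recall that each leaf of $T$ is a vertex of $G$, and each internal node is labelled either $\cup$ (disjoint union) or $+$ (join) of its children subgraphs; write $G_v$ for the subgraph of $G$ induced by the leaves below a node $v$.

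The decision would proceed by a bottom-up traversal of $T$, using two structural facts: (a)~at a $\cup$-node, no edge crosses children, so any clique of $G_v$ lies entirely under a single child; and (b)~at a $+$-node with children $v_1, v_2$, every clique $C$ of $G_v$ decomposes as $(C \cap V(G_{v_1})) \cup (C \cap V(G_{v_2}))$, each part being a clique of the respective child. Fact~(a) handles $\cup$-nodes cleanly: $C_{\ini}$ and $C_{\tar}$ must lie under the same child $v_i$, otherwise reconfiguration is impossible under $\TJ$, and if they do then the problem reduces to the instance on $G_{v_i}$. At a $+$-node, every $\TJ$-step on $G_v$ is either a replacement entirely within $V(G_{v_1})$ (a $\TJ$-step in $G_{v_1}$ on the first part), a replacement entirely within $V(G_{v_2})$ (analogous), or a cross-side swap where a token migrates from one child to the other; this suggests combining the answers from the two children with an additional ``migration'' analysis.

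The main obstacle is the $+$-case, because reconfigurability is \emph{not} automatic at a $+$-node: in the cograph $G = 2K_2 + K_1$ with vertices $a,b,c,d,z$ (edges $ab$, $cd$, and $z$ joined to all others), the two maximum cliques $\{a,b,z\}$ and $\{c,d,z\}$ admit no $\TJ$-step at all, so they are not $\TJ$-reconfigurable. To resolve this, the plan is to maintain, for each cotree node $v$, a constant-size summary recording which pairs of cliques in $G_v$ are $\TJ$-equivalent (parameterized by their size and by how much ``buffer space'' the siblings in the parent can offer through cross-side swaps); this summary is updated by a simple combining rule at each $\cup$- and $+$-node. Since the cotree has $O(|V(G)|)$ nodes and the per-node update is $O(1)$, the total runtime is linear. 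By Theorems~\ref{the:TS=TAR} and~\ref{the:TJ=TAR}, the same bound transfers to $\TARrule$ and $\TS$.
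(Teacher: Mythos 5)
There is a genuine gap at precisely the point you flag as the ``main obstacle.'' Your $\cup$-node analysis is fine, and your example $2K_2 + K_1$ correctly shows that reconfigurability is not automatic at a $+$-node. But the resolution you offer --- maintaining ``a constant-size summary recording which pairs of cliques in $G_v$ are $\TJ$-equivalent,'' updated by ``a simple combining rule'' --- is an unsubstantiated claim, and it carries the entire weight of the proof. A cograph can have exponentially many cliques (e.g., a complete multipartite graph), so ``which pairs of cliques are equivalent'' is not a priori a constant-size object; even indexing by clique size gives $\Theta(n)$ states per node, and the interaction between cross-side token migration at a $+$-node and the availability of ``buffer space'' higher in the cotree is exactly the combinatorial content that would need to be worked out. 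Nothing in the proposal tells us what the summary records, why it composes correctly at a $+$-node, or why the composition is $O(1)$. As written, this is a restatement of the goal, not a proof.

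The paper avoids this difficulty entirely and takes a much shorter route: cographs are closed under complementation and the complement is computable in linear time; a clique of $G$ is an independent set of $\overline{G}$ and a $\TAR{k}$-sequence of cliques in $G$ is exactly a $\TAR{k}$-sequence of independent sets in $\overline{G}$ (Lemma~\ref{lem:clique-independent}); Bonsma's linear-time algorithm for \textsc{independent set reconfiguration} under $\TARrule$ on cographs then solves the $\TARrule$ case, and Theorems~\ref{the:TS=TAR} and~\ref{the:TJ=TAR} transfer the result to $\TS$ and $\TJ$. If you want a self-contained cotree dynamic program instead of citing Bonsma, you would essentially have to reprove his result (in complemented form), which is a nontrivial undertaking --- your sketch does not yet do it.
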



	\section{Polynomial-Time Algorithms} \label{sec:polytime}
	In this section, we show that \textsc{clique reconfiguration} is solvable in polynomial time for several graph classes. 
	We deal with two types of graph classes, that is, graphs of bounded clique size (in Section~\ref{subsec:boundedclique}) and graphs having polynomially many maximal cliques (in Section~\ref{subsec:polymany}).

	\subsection{Graphs of bounded clique size}
	\label{subsec:boundedclique}
	
	In this subsection, we show that \textsc{shortest clique reconfiguration} can be solved in polynomial time for graphs of bounded clique size; 
as we will explain later, such graphs include bipartite graphs, planar graphs, and graphs of bounded treewidth.
	For a graph $G$, we denote by $\omega(G)$ the size of a maximum clique in $G$.
	Then, we have the following theorem. 
	\begin{theorem} \label{the:boundedclique}
	Let $G$ be a graph with $n$ vertices such that $\omega(G) \le w$ for a positive integer $w$. 
	Then, \textsc{shortest clique reconfiguration} under any of $\TARrule$, $\TS$ and $\TJ$ can be solved in time $O(w^2 n^{w})$ for $G$. 
	\end{theorem}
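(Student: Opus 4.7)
By Theorems~\ref{the:TS=TAR} and~\ref{the:TJ=TAR} the three rules are equivalent in \textsc{clique reconfiguration}, and both reductions leave the underlying graph $G$ unchanged (only the two cliques and, in the $\TARrule$ case, the threshold are altered). Hence the hypothesis $\omega(G) \le w$ is preserved under rule translation, and it suffices to design an algorithm under a single rule. The plan is to work under $\TJ$: since a $\TJ$-sequence preserves the clique size, one only needs to reason about cliques of the single fixed size $k := |\cliq_{\ini}| = |\cliq_{\tar}| \le \omega(G) \le w$.

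The central observation is that $G$ contains at most $\binom{n}{k} = O(n^{w})$ cliques of size~$k$, so the entire $\TJ$ reconfiguration graph has polynomially many vertices. The algorithm consists of three steps: (a) enumerate every $k$-subset of $V(G)$ and test in $O(k^{2}) = O(w^{2})$ time whether it induces a clique, storing the surviving cliques in a hash table keyed by their sorted vertex tuples; (b) build the $\TJ$ reconfiguration graph $\cgraph$ on these cliques, declaring $C \onestep C'$ whenever $|\symdiff{C}{C'}| = 2$; (c) run a single breadth-first search in $\cgraph$ from $\cliq_{\ini}$ and return the distance to $\cliq_{\tar}$, or $+\infty$ if $\cliq_{\tar}$ is unreached.

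The main obstacle is fitting the edge enumeration of $\cgraph$ into the claimed time bound: a naive search that, for each of the $O(n^{w})$ cliques $C$, tries all $O(wn)$ candidate swaps and re-tests completeness each time would cost $O(w^{2} n^{w+1})$, which is too slow. The remedy is to group cliques by their $(k-1)$-subcliques, exploiting the elementary fact that two $k$-cliques are $\TJ$-adjacent precisely when they share a common $(k-1)$-subclique. Enumerating every $(k-1)$-subclique once costs $O(w^{2} n^{w-1})$, and for each such $S$ compiling the list of vertices $v \notin S$ with $S \cup \{v\}$ a clique costs a further $O(wn^{w})$ in total; the adjacencies of $\cgraph$ are then read off implicitly from these lists, and a BFS on the resulting bipartite incidence structure runs in time linear in the number of (subclique,\,clique)-incidences, namely $O(wn^{w})$. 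Combining the three contributions gives the stated bound $O(w^{2} n^{w})$. The classes highlighted in the statement's surrounding discussion---bipartite ($w=2$), planar ($w=4$) and bounded-treewidth ($w=t+1$) graphs---all satisfy the hypothesis with a constant $w$, so the algorithm runs in polynomial time on each of them.
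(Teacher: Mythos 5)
Your core algorithm (enumerate all candidate cliques, build the reconfiguration graph, and run BFS) is exactly the paper's, and your $O(w^{2}n^{w})$ accounting for the $\TJ$ case is fine. The genuine gap is in the very first step: for the \emph{shortest} variant you cannot reduce everything to $\TJ$. Theorems~\ref{the:TS=TAR} and \ref{the:TJ=TAR} preserve the shortest length only in direction (a), i.e., from $\TS$/$\TJ$ \emph{to} $\TARrule$ (where $\distTS{\cliq_{\ini}}{\cliq_{\tar}} = \distTAR{\cliq_{\ini}}{\cliq_{\tar}}{k}/2$ and similarly for $\TJ$); the reverse reductions in part (b) only guarantee $\TARins{\cliq_{\ini}}{\cliq_{\tar}}{k} = \TJins{\cliq_{\ini}^\prime}{\cliq_{\tar}^\prime}$, with no statement about distances. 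Indeed a $\TARrule$-instance may have $|\cliq_{\ini}| \neq |\cliq_{\tar}|$, and a shortest $\TAR{k}$-sequence passes through cliques of many different sizes, so it cannot be recovered from an algorithm that, as you propose, ``only needs to reason about cliques of the single fixed size $k$.'' There is no reason the shortest $\TAR{k}$-sequence between $\cliq_{\ini}$ and $\cliq_{\tar}$ should factor through the truncated/extended cliques $\cliq_{\ini}^\prime, \cliq_{\tar}^\prime$ of size $k+1$, so even adding a correction term for the endpoints does not salvage the reduction. The same problem afflicts $\TS$: there is no distance-preserving reduction from $\TS$ to $\TJ$ in the paper (and $\distTS{}{} \geq \distTJ{}{}$ can be a strict inequality, or even $+\infty$ versus finite, as in Figure~\ref{fig:example}).

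The fix is to reduce in the other direction, as the paper does: convert $\TS$- and $\TJ$-instances to $\TARrule$-instances via Theorems~\ref{the:TS=TAR}(a) and \ref{the:TJ=TAR}(a) (dividing the resulting distance by $2$), and design the algorithm for $\TARrule$. That algorithm enumerates all $O(\sum_{i=k}^{w}\binom{n}{i}) = O(n^{w})$ cliques of size \emph{at least} $k$ and joins each clique to its $O(w)$ one-vertex-smaller subcliques, giving $O(n^{w})$ nodes and $O(wn^{w})$ edges without any need for your $(k-1)$-subclique grouping trick (that trick is only needed because $\TJ$-adjacency is not a containment relation). Alternatively, your direct construction can be carried out separately for each of the three rules; but as written, your proof establishes the theorem only for $\TJ$-instances.
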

	
	It is well known that $\omega(G) \le 4$ for any planar graph $G$, and $\omega(G^\prime) \le 2$ for any bipartite graph $G^\prime$.
	We thus have the following corollary. 
	\begin{corollary}
	\textsc{Shortest clique reconfiguration} under $\TARrule$, $\TS$ and $\TJ$ can be solved in polynomial time for planar graphs and bipartite graphs.
	\end{corollary}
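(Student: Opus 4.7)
The plan is to derive the corollary directly from Theorem~\ref{the:boundedclique} by verifying that the clique number $\omega(G)$ is bounded by a constant for each of the two graph classes. Since the running time in Theorem~\ref{the:boundedclique} is $O(w^2 n^w)$ where $w$ is any upper bound on $\omega(G)$, substituting a constant for $w$ immediately yields a polynomial-time algorithm; hence the entire task reduces to supplying a constant bound on $\omega$ in each case.

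First I would handle bipartite graphs. A bipartite graph contains no odd cycle, in particular no triangle $K_3$, so every clique has at most two vertices and $\omega(G') \le 2$ for any bipartite $G'$. Applying Theorem~\ref{the:boundedclique} with $w = 2$ gives a running time of $O(2^2 \cdot n^2) = O(n^2)$, which is polynomial in the size of the input graph.

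Next I would handle planar graphs. Here the key fact is that $K_5$ is non-planar (this is part of Kuratowski's theorem, and is the simplest direction of it), so no planar graph can contain $K_5$ as a subgraph; consequently $\omega(G) \le 4$ for every planar graph $G$. Applying Theorem~\ref{the:boundedclique} with $w = 4$ gives a running time of $O(16 \cdot n^4) = O(n^4)$, which is again polynomial.

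Combining the two cases, \textsc{shortest clique reconfiguration} is solvable in polynomial time on both classes under the $\TARrule$, $\TS$, and $\TJ$ rules, as claimed. The only step that requires any care is citing the clique-number bounds correctly; both bounds are standard, but I would state them explicitly rather than leaving them implicit, so that the reduction to Theorem~\ref{the:boundedclique} is transparent. There is no substantive obstacle here, since all the algorithmic work is already done in Theorem~\ref{the:boundedclique}; the corollary is essentially an instantiation.
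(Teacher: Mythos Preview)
Your proposal is correct and follows exactly the same approach as the paper: it simply observes that $\omega(G')\le 2$ for bipartite $G'$ and $\omega(G)\le 4$ for planar $G$, then invokes Theorem~\ref{the:boundedclique}. The only difference is that you spell out the resulting $O(n^2)$ and $O(n^4)$ bounds explicitly, which the paper leaves implicit.
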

	
	By the definition of treewidth~\cite{BodlaenderDDFLP13}, we have $\omega(G) \le t+1$ for any graph $G$ whose treewidth can be bounded by a positive integer $t$.
	By Theorem~\ref{the:boundedclique} this observation gives an $O \bigl(t^2 n^{t+1} \bigr)$-time algorithm for \textsc{shortest clique reconfiguration}.
	However, for this case, we can obtain a faster fixed-parameter algorithm, where the parameter is the treewidth $t$, as follows.
	\begin{proposition} \label{pro:treewidth}
	Let $G$ be a graph with $n$ vertices whose treewidth is bounded by a positive integer $t$.
	Then, \textsc{shortest clique reconfiguration} under any of $\TARrule$, $\TS$ and $\TJ$ can be solved for $G$ in time $O(c^{t} n)$, where $c$ is some constant.
	\end{proposition}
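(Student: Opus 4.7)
The plan is to solve the problem under $\TAR{k}$ and invoke Theorems~\ref{the:TS=TAR} and \ref{the:TJ=TAR} to cover the $\TS$ and $\TJ$ variants: both reductions keep the underlying graph $G$ (and therefore its treewidth) unchanged and alter the shortest length only by a factor of $2$, so an $O(c^t n)$-time algorithm for $\TAR{k}$ yields the same bound for the other two rules. It therefore suffices to compute $\distTARG{G}{\cliq_{\ini}}{\cliq_{\tar}}{k}$ in the claimed time.

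First, compute a tree decomposition $(T, \{X_s\}_{s \in V(T)})$ of $G$ of width at most $t$, with $|V(T)| = O(n)$, in time $O(c^t n)$ using a standard FPT algorithm (for example, Bodlaender's). Since the set of nodes whose bags contain any fixed vertex induces a connected subtree, and since adjacent vertices must share a bag, a simple induction on clique size shows that every clique of $G$ lies entirely in at least one bag $X_s$. In particular $\omega(G) \le t+1$ and the total number of cliques of $G$ is at most $|V(T)| \cdot 2^{t+1} = O(2^{t} n)$.

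Next, enumerate all cliques. For each bag $X_s$, examine its $\le 2^{t+1}$ subsets, test each in $O(t^2)$ time for being a clique, and insert it into a dictionary keyed by a canonical (e.g.\ sorted) form, discarding duplicates. This produces the set $\mathcal{K}$ of all cliques in $O(2^{t} t^{2} n)$ time. Now build the reconfiguration graph $\cgraph$ whose nodes are the cliques in $\mathcal{K}$ of size at least $k$: for every $\cliq^\prime \in \mathcal{K}$ with $|\cliq^\prime| \ge k+1$ and every $v \in \cliq^\prime$, add an edge to the clique $\cliq^\prime \setminus \{v\}$, located by a single dictionary lookup. Charging each edge to its larger endpoint shows that $\cgraph$ has at most $(t+1) \cdot |\mathcal{K}| = O(t \cdot 2^{t} n)$ edges. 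Finally, run BFS from $\cliq_{\ini}$ in $\cgraph$ and report the distance to $\cliq_{\tar}$ (or $+\infty$ if unreachable); the total running time is dominated by the sizes of $\mathcal{K}$ and $\cgraph$, giving $O(c^t n)$ for an appropriate constant $c$.

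The main obstacle is preventing the dictionary operations from introducing a $\log n$ factor that would spoil the linear dependence on $n$. Each key has length $O(t)$, so either a trie over an $O(\log n)$-bit vertex numbering or a hashing scheme with $O(t)$-time expected operations keeps every dictionary step within $O(t)$ time. Alternatively, one can avoid duplicates deterministically by associating each clique with a single bag (for instance, the bag closest to the root of a rooted $T$ whose vertex set contains it), so that each clique is produced exactly once during the bag-scan. Either way, every phase of the algorithm runs in $O(c^t n)$ time, completing the proof.
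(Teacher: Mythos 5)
Your approach is essentially the paper's: compute a tree decomposition with $O(n)$ bags, observe that every clique lies in some bag so that $G$ has only $O(2^{O(t)}n)$ cliques, build the reconfiguration graph, and run BFS, handling $\TS$ and $\TJ$ via the length-preserving reductions of Theorems~\ref{the:TS=TAR}(a) and \ref{the:TJ=TAR}(a). The one step that does not hold as stated is the very first: no known algorithm computes an \emph{exact} width-$t$ tree decomposition in time $O(c^t n)$ for a fixed constant $c$ (Bodlaender's exact linear-time algorithm has a $2^{O(t^3)}$ dependence on $t$). The paper instead invokes the single-exponential $5$-approximation of \cite{BodlaenderDDFLP13}, which returns a decomposition of width $5t+4$ with $O(n)$ bags in $O(c^t n)$ time; this only changes the bag-size bound to $5t+5$ and hence the base of the exponential, so the rest of your argument goes through verbatim after this substitution. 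Your explicit treatment of deduplication via a dictionary or by assigning each clique to a canonical bag is a detail the paper leaves implicit, and it is handled correctly.
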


	Proposition~\ref{pro:treewidth} implies that \textsc{shortest clique reconfiguration} under any of $\TARrule$, $\TS$ and $\TJ$ can be solved in time $O(c^{w} n)$ for chordal graphs $G$ when parameterized by the size $w$ of a maximum clique in $G$, where $n$ is the number of vertices in $G$ and $c$ is some constant; because the treewidth of a chordal graph $G$ can be bounded by the size of a maximum clique in $G$ minus one~\cite{RS86}. 
	However, we give a linear-time algorithm to solve the shortest variant under any rule for chordal graphs in Section~\ref{sec:chordal}.


	\subsection{Graphs with polynomially many maximal cliques}
	\label{subsec:polymany}
	
	In this subsection, we consider the class of graphs having polynomially many maximal cliques, which properly contains the class of graphs with bounded clique size (in Section~\ref{subsec:boundedclique}). 
	Note that, even if a graph $G$ has a polynomial number of maximal cliques, $G$ may have a super-polynomial number of cliques. 
	\begin{theorem} \label{the:poly-many-maxcliques}
	Let $G$ be a graph with $n$ vertices and $m$ edges, and let $\Mset{G}$ be the set of all maximal cliques in $G$. 
	Then, \textsc{clique reconfiguration} under any of $\TARrule$, $\TS$ and $\TJ$ can be solved for $G$ in time $O \bigl(m n \lvert \mathcal{M}(G) \rvert + n \lvert \mathcal{M}(G) \rvert^{2} \bigr)$.
	\end{theorem}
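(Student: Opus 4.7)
The plan is to decide \textsc{clique reconfiguration} under the $\TAR$ rule and then invoke Theorems~\ref{the:TS=TAR} and~\ref{the:TJ=TAR} to obtain the same bound for $\TS$ and $\TJ$. The approach is to compress the (possibly super-polynomial) reconfiguration graph onto an auxiliary \emph{maximal-clique graph} $\cgraph$ whose vertex set is $\Mset{G}$ and whose edges are exactly those pairs $\{M_1,M_2\}$ with $|M_1\cap M_2|\ge k$. For each clique $D$ with $|D|\ge k$ let $S(D)=\{M\in\Mset{G}: D\subseteq M\}$; any two elements of $S(D)$ share $D$ and are therefore $\cgraph$-adjacent, so $S(D)$ lies in a single connected component of $\cgraph$. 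I would then prove that $\TARins{\cliq_{\ini}}{\cliq_{\tar}}{k}=\YES$ if and only if some component of $\cgraph$ meets both $S(\cliq_{\ini})$ and $S(\cliq_{\tar})$.

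For the ``if'' direction I would exhibit an explicit $\TAR{k}$-sequence. Picking $M_{\ini}\in S(\cliq_{\ini})$ and $M_{\tar}\in S(\cliq_{\tar})$ joined by a $\cgraph$-path $M_{\ini}=N_0,N_1,\dots,N_p=M_{\tar}$, I first grow $\cliq_{\ini}$ to $M_{\ini}$ by adding vertices one at a time (sizes only increase, so the threshold is safe); then for each $i$ I shrink $N_i$ down to $N_i\cap N_{i+1}$ (whose size is $\ge k$ by the edge condition of $\cgraph$) and grow back up to $N_{i+1}$; finally I shrink $M_{\tar}$ down to $\cliq_{\tar}$. Every intermediate clique has size at least $k$. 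For the converse, given a $\TAR{k}$-sequence $\cliq_{\ini}=D_0,D_1,\dots,D_\ell=\cliq_{\tar}$, each consecutive pair satisfies $D_i\subseteq D_{i+1}$ or $D_{i+1}\subseteq D_i$; letting $D$ be the larger of the two, any maximal clique $M\supseteq D$ lies simultaneously in $S(D_i)$ and $S(D_{i+1})$, so these two $S$-sets share a $\cgraph$-component. Since each $S(D_j)$ itself lies in one component, induction places $S(\cliq_{\ini})$ and $S(\cliq_{\tar})$ together.

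The algorithm is then direct. Enumerate $\Mset{G}$ with the classical algorithm of Tsukiyama--Ide--Ariyoshi--Shirakawa in $O(mn\lvert\Mset{G}\rvert)$ time; build $\cgraph$ by computing $|M_1\cap M_2|$ for every pair in $O(n\lvert\Mset{G}\rvert^2)$ time; compute its connected components in linear time in the size of $\cgraph$; and in $O(n\lvert\Mset{G}\rvert)$ time locate one maximal clique containing $\cliq_{\ini}$ and one containing $\cliq_{\tar}$, reporting $\YES$ exactly when they share a component. Summing the contributions yields the claimed $O(mn\lvert\Mset{G}\rvert+n\lvert\Mset{G}\rvert^{2})$ bound.

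The main conceptual hurdle is justifying that, despite the $\TAR{k}$ reconfiguration graph having potentially exponentially many vertices (all cliques of size $\ge k$), its connectivity structure between $\cliq_{\ini}$ and $\cliq_{\tar}$ is faithfully captured by the polynomial-size graph $\cgraph$ on maximal cliques alone. This compression rests on two observations that drive the equivalence above: from any size-$\ge k$ clique we can grow monotonically into any containing maximal clique without ever dropping below the threshold, and a single $\TAR{k}$-step preserves a common subclique of size $\ge k$ whose extensions lie in both $S(D_i)$ and $S(D_{i+1})$. Once these are in hand, the algorithm and running-time analysis are routine.
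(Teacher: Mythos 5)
Your proposal is correct and follows essentially the same route as the paper: the auxiliary graph $\cgraph$ you build is exactly the paper's $k$-intersection maximal-clique graph $\MC{k}{G}$, your equivalence claim is the paper's Lemma~\ref{lem:clique-path} (your $S(D)$-based induction and the paper's connectivity argument for the induced subgraph on maximal cliques meeting the sequence are the same idea), and the algorithm and running-time analysis coincide.
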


	Before proving Theorem~\ref{the:poly-many-maxcliques}, we give the following corollary.
	\begin{corollary} \label{cor:maximal}
	\textsc{Clique reconfiguration} under $\TARrule$, $\TS$ and $\TJ$ can be solved in polynomial time for even-hole-free graphs, graphs of bounded boxicity, and $K_{t}$-subdivision-free graphs.
	\end{corollary}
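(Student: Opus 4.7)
The plan is a direct reduction to Theorem~\ref{the:poly-many-maxcliques}. Its stated running time $O(mn|\Mset{G}| + n|\Mset{G}|^{2})$ is polynomial in $n$ and $m$ precisely when the number of maximal cliques $|\Mset{G}|$ is itself polynomial in $n$, so it suffices to recall, for each of the three families mentioned in the corollary, a known structural theorem bounding $|\Mset{G}|$ by a polynomial, and to combine this with Theorems~\ref{the:TS=TAR} and \ref{the:TJ=TAR} which propagate the polynomial-time bound to the other two rules.

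First I would handle even-hole-free graphs by citing the result that such graphs contain only polynomially many maximal cliques, which arises from the decomposition theory of even-hole-free graphs. Next, for a graph $G$ of boxicity $b$ I would fix a representation of $G$ as the intersection graph of $n$ axis-parallel boxes in $\mathbb{R}^{b}$ and argue that each maximal clique corresponds to an inclusion-maximal cell of common intersection in the induced arrangement; a straightforward count of such cells gives a bound of the form $O(n^{2b})$. Finally, for $K_t$-subdivision-free graphs I would invoke the known polynomial bound of the form $n^{f(t)}$ on the number of maximal cliques, due to Kühn–Osthus and later refinements.

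A small but necessary loose end is that Theorem~\ref{the:poly-many-maxcliques} consumes $\Mset{G}$ as part of its input, so we need to actually construct $\Mset{G}$ in polynomial time rather than only count it. This is handled by the classical polynomial-delay maximal-clique enumeration algorithm of Tsukiyama, Ide, Ariyoshi, and Shirakawa, which produces the whole list $\Mset{G}$ in time polynomial in $n + |\Mset{G}|$. Combined with the polynomial upper bounds above, the total running time remains polynomial for each of the three classes, and Theorem~\ref{the:poly-many-maxcliques} then finishes the proof.

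The main obstacle is therefore not the logical structure of the argument, which is essentially a one-line appeal to Theorem~\ref{the:poly-many-maxcliques}, but rather locating and invoking the correct structural references for the polynomial bounds on $|\Mset{G}|$ in each of the three classes; these bounds are nontrivial results proved elsewhere, and their exponents depend on the class. Beyond citing them, there is nothing further to verify, since the rule-equivalence theorems of Section~\ref{sec:rules} already guarantee that polynomial-time solvability under one rule transfers to the other two.
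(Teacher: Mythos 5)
Your proposal is correct and follows exactly the paper's argument: reduce to Theorem~\ref{the:poly-many-maxcliques} and cite polynomial bounds on the number of maximal cliques for each of the three classes (the paper cites da~Silva--Vu\v{s}kovi\'c for even-hole-free graphs, Spinrad for bounded boxicity, and Lee--Oum for $K_t$-subdivision-free graphs). The ``loose end'' you flag about enumerating $\Mset{G}$ is already absorbed into the statement of Theorem~\ref{the:poly-many-maxcliques}, whose running time $O(mn\lvert\Mset{G}\rvert + n\lvert\Mset{G}\rvert^{2})$ accounts for the Tsukiyama et al.\ enumeration, so no extra step is needed.
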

	\begin{proof}
	By Theorem~\ref{the:poly-many-maxcliques} it suffices to show that the claimed graphs have polynomially many maximal cliques.
	Polynomial bounds on the number of maximal cliques are shown
for even-hole-free graphs in~\cite{SilvaV07},
for graphs of bounded boxicity in~\cite{Spinrad03},
and for $K_{t}$-subdivision-free graphs in~\cite{LeeO14}.
	\qed
	\end{proof}

	In this subsection, we prove Theorem~\ref{the:poly-many-maxcliques}. 
	However, by Theorems~\ref{the:TS=TAR}(a) and \ref{the:TJ=TAR}(a) it suffices to give such an algorithm only for the $\TARrule$ rule.

	Let $(G, \cliq_{\ini}, \cliq_{\tar}, k)$ be any $\TARrule$-instance.
	Then, we define the \emph{$k$-intersection maximal-clique graph} of $G$, denoted by $\MC{k}{G}$, as follows:
		\begin{listing}{aaa}
		\item[(\one)] each node in $\MC{k}{G}$ corresponds to a clique in $\Mset{G}$; and 
		\item[(\two)] two nodes in $\MC{k}{G}$ are joined by an edge if and only if $|M \cap M^\prime| \ge k$ holds for the corresponding two maximal cliques $M$ and $M^\prime$ in $\Mset{G}$.  
		\end{listing}
	Note that any maximal clique in $\Mset{G}$ of size less than $k$ is contained in $\MC{k}{G}$ as an isolated node.
	We now give the key lemma to prove Theorem~\ref{the:poly-many-maxcliques}.
	\begin{lemma} \label{lem:clique-path}
	Let $G$ be a graph, and let $\cliq$ and $\cliq^\prime$ be any pair of cliques in $G$ such that $|\cliq| \ge k$ and $|\cliq^\prime| \ge k$.
	Let $M \supseteq \cliq$ and $M^\prime \supseteq \cliq^\prime$ be arbitrary maximal cliques in $\Mset{G}$.
	Then, $\cliq \sevstep \cliq^\prime$ under $\TAR{k}$ if and only if $\MC{k}{G}$ contains a path between the two nodes corresponding to $M$ and $M^\prime$.
	\end{lemma}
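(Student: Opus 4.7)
The plan is to exploit one observation: in any $\TAR{k}$-sequence, two consecutive cliques $\cliq_i$ and $\cliq_{i+1}$ satisfy $|\cliq_i \cap \cliq_{i+1}| \ge k$, because one is obtained from the other by adding or removing a single vertex and both have size at least $k$, so their intersection equals whichever of the two is smaller. This forces any maximal cliques containing $\cliq_i$ and $\cliq_{i+1}$, respectively, to be either identical or adjacent in $\MC{k}{G}$, which drives both directions of the equivalence.

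For the forward direction, I assume $\cliq \sevstep \cliq^\prime$ under $\TAR{k}$ via a sequence $\langle \cliq_0, \cliq_1, \ldots, \cliq_\ell \rangle$ with $\cliq_0 = \cliq$ and $\cliq_\ell = \cliq^\prime$. I pin $M_0 := M$ and $M_\ell := M^\prime$ (permitted since $\cliq \subseteq M$ and $\cliq^\prime \subseteq M^\prime$ by hypothesis), and for each intermediate $i$ pick an arbitrary maximal clique $M_i \supseteq \cliq_i$. Since $\cliq_i \cap \cliq_{i+1} \subseteq M_i \cap M_{i+1}$ and $|\cliq_i \cap \cliq_{i+1}| \ge k$ by the observation, every consecutive pair $M_i, M_{i+1}$ is either identical or joined by an edge of $\MC{k}{G}$. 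Collapsing equal consecutive entries yields a walk, hence a path, from $M$ to $M^\prime$.

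For the backward direction, I take a path $M = M_0, M_1, \ldots, M_t = M^\prime$ in $\MC{k}{G}$ and assemble a $\TAR{k}$-sequence from three kinds of segments. First, from $\cliq$ to $M$, add the vertices of $M \setminus \cliq$ one at a time. Second, for each $i \in \{0,\ldots,t-1\}$, first remove the vertices of $M_i \setminus M_{i+1}$ one at a time down to $M_i \cap M_{i+1}$, then add the vertices of $M_{i+1} \setminus M_i$ up to $M_{i+1}$. Third, from $M^\prime$ to $\cliq^\prime$, remove the vertices of $M^\prime \setminus \cliq^\prime$ one at a time. Every intermediate set is a clique as a subset of a maximal clique, and the size thresholds to verify are $|\cliq| \ge k$ in the first segment, $|\cliq^\prime| \ge k$ in the third, and $|M_i \cap M_{i+1}| \ge k$ in the second, the last being exactly the defining condition for an edge in $\MC{k}{G}$.

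Once the key observation is extracted, the argument is a bookkeeping exercise rather than a deep combinatorial step; I do not foresee a serious obstacle. The only mildly delicate point is the endpoint-pinning in the forward direction, where $M_0 = M$ and $M_\ell = M^\prime$ must be fixed in advance rather than chosen freely, so that the constructed walk actually reaches the prescribed nodes; the hypothesis $\cliq \subseteq M$ and $\cliq^\prime \subseteq M^\prime$ is precisely what makes this pinning consistent with the general step of the argument.
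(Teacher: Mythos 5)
Your proposal is correct and follows essentially the same route as the paper: both directions hinge on the observation that consecutive cliques in a $\TAR{k}$-sequence intersect in at least $k$ vertices, which forces their containing maximal cliques to coincide or be adjacent in $\MC{k}{G}$, and the backward direction threads a reconfiguration sequence through the intersections $M_i \cap M_{i+1}$. The only cosmetic difference is that you construct the walk from $M$ to $M^\prime$ directly, whereas the paper argues by contradiction that the subgraph of $\MC{k}{G}$ induced by the maximal cliques containing some clique of the sequence is connected.
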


	\noindent
	{\bf Proof of Theorem~\ref{the:poly-many-maxcliques}.}

	For any graph $G$ with $n$ vertices and $m$ edges, Tsukiyama et al.~\cite{TsukiyamaIAS77} proved that the set $\Mset{G}$ can be computed in time $O \bigl(m n \lvert \Mset{G} \rvert \bigr)$.
	Thus, we can construct $\MC{k}{G}$ in time $O \bigl(m n \lvert \Mset{G} \rvert + n \lvert \Mset{G} \rvert^{2} \bigr)$.
	By the breadth-first search on $\MC{k}{G}$ which starts from an arbitrary maximal clique (node) $M \supseteq \cliq_{\ini}$, we can check in time $O \bigl(\lvert \Mset{G} \rvert^{2} \bigr)$ whether $\MC{k}{G}$ has a path to a maximal clique $M^\prime \supseteq \cliq_{\tar}$.
	Then, the theorem follows from Lemma~\ref{lem:clique-path}.
	\qed


	\section{Linear-Time Algorithm for Chordal Graphs}
	\label{sec:chordal}
	
	Since any chordal graph is even-hole free, by Corollary~\ref{cor:maximal} \textsc{clique reconfiguration} is solvable in polynomial time for chordal graphs. 
	Furthermore, we have discussed in Section~\ref{subsec:boundedclique} that the shortest variant is fixed-parameter tractable for chordal graphs when parameterized by the size of a maximum clique in a graph. 
	However, we give the following theorem in this section.
	\begin{theorem} \label{the:chordal}
	\textsc{Shortest clique reconfiguration} under any of $\TARrule$, $\TS$ and $\TJ$ can be solved in linear time for chordal graphs. 
	\end{theorem}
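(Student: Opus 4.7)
The plan is to reduce to a single rule and then exploit the clique tree of a chordal graph. By Theorems~\ref{the:TS=TAR}(a) and~\ref{the:TJ=TAR}(a), any $\TS$- or $\TJ$-instance can be transformed in linear time into an equivalent $\TARrule$-instance whose shortest reconfiguration length equals twice the original; it therefore suffices to solve \textsc{shortest clique reconfiguration} under $\TARrule$ in linear time on chordal graphs. Note that the reduced instance is on the same graph $G$, hence remains chordal.

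Given a $\TARrule$-instance $(G, \cliq_{\ini}, \cliq_{\tar}, k)$ with $G$ chordal, I would first compute a clique tree $T$ of $G$ in $O(n + m)$ time using standard chordal recognition; recall that $G$ has at most $n$ maximal cliques, each edge of $T$ is labelled by its separator $M \cap M^\prime$, and the maximal cliques containing any fixed vertex $v$ induce a subtree of $T$. Identify maximal cliques $M_{\ini} \supseteq \cliq_{\ini}$ and $M_{\tar} \supseteq \cliq_{\tar}$ and let $\pi = M_{\ini} = M^0, M^1, \ldots, M^p = M_{\tar}$ be the unique $T$-path between them.

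The key structural observation is that $M_{\ini}$ and $M_{\tar}$ lie in the same component of $\MC{k}{G}$ if and only if every separator on $\pi$ has size at least $k$. One direction is immediate because $\pi$ itself witnesses a walk in $\MC{k}{G}$. For the converse, any walk in the tree $T$ joining two nodes must include every edge of the unique simple $T$-path between them; so any $\MC{k}{G}$-path from $M_{\ini}$ to $M_{\tar}$, viewed as a $T$-walk, combined with the subtree property, forces $|M^i \cap M^{i+1}| \geq k$ for every edge of $\pi$. Combined with Lemma~\ref{lem:clique-path}, this gives a linear-time reachability test via a single traversal of $\pi$. To compute the actual shortest length when reachable, I would show that there is an optimum $\TAR{k}$-sequence whose ``current maximal clique'' moves monotonically along $\pi$: as the sequence crosses edge $(M^i,M^{i+1})$, it first adds the vertices of $M^{i+1}\setminus M^i$ still needed for the remaining journey, then discards those of $M^i\setminus M^{i+1}$ no longer required. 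Careful bookkeeping, classifying each vertex of $\bigcup_i M^i$ by the subpath of $\pi$ on which it must be carried, yields a closed-form expression for $\distTAR{\cliq_{\ini}}{\cliq_{\tar}}{k}$ summed over the edges of $\pi$; because $\sum_i |M^i|$ is $O(n+m)$ by standard clique-tree size bounds, this is evaluated in linear time.

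The main obstacle will be the optimality claim: ruling out that a shortcut visiting maximal cliques off $\pi$ beats the on-path strategy. I would address this by an exchange argument that projects an arbitrary optimum $\TAR{k}$-sequence onto $\pi$. Because $T$ is a tree, any excursion off $\pi$ must be eventually retraced, and by the subtree property the tokens corresponding to vertices outside the current on-path region contribute only wasted additions and removals; deleting the excursion and relabelling the remaining steps produces a valid $\TAR{k}$-sequence of no greater length, which establishes tightness of the formula.
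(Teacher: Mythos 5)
Your overall skeleton matches the paper's: reduce $\TS$/$\TJ$ to $\TARrule$ via Theorems~\ref{the:TS=TAR}(a) and~\ref{the:TJ=TAR}(a), build a clique tree, restrict attention to the unique tree-path $\pi$ between maximal cliques containing $\cliq_{\ini}$ and $\cliq_{\tar}$, and argue that an optimal sequence moves monotonically along $\pi$ (this is the paper's Lemma~\ref{lem:consecutive-cliques}). Your reachability criterion (all separators on $\pi$ have size at least $k$) is correct and follows from the subtree property together with Lemma~\ref{lem:clique-path}, so the decision version is in good shape.

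The genuine gap is the computation of the \emph{shortest} length, which is where essentially all of the paper's work lies. You assert that ``careful bookkeeping, classifying each vertex of $\bigcup_i M^i$ by the subpath of $\pi$ on which it must be carried, yields a closed-form expression,'' but no formula is given and no argument is offered for why one exists. By Lemma~\ref{lem:never-appear-again}-type reasoning the length equals $|\symdiff{\cliq_{\ini}}{\cliq_{\tar}}|$ plus twice the number of ``carrier'' vertices that are added and later removed, so the real optimization is: which vertices, not in $\cliq_{\ini}\cup\cliq_{\tar}$, should be temporarily recruited to keep the clique size at least $k$ while crossing the separators of $\pi$, and how few suffice. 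Your proposed local rule (``add the vertices of $M^{i+1}\setminus M^i$ still needed for the remaining journey, then discard those of $M^i\setminus M^{i+1}$ no longer required'') is not well defined for carriers --- a vertex may be ``needed'' not because it lies in $\cliq_{\tar}$ but because it survives across many future separators, and the choice among candidates is exactly the crux. The paper resolves this with a specific greedy on the clique path of an interval graph: remove the vertex of $\cliq_{\ini}\setminus\cliq_{\tar}$ with \emph{minimum} $r$-value when the size permits, otherwise add a vertex of $(\cliq_{\tar}\setminus\cliq_{\ini})\cap M_0$ if one exists and otherwise the vertex of $M_0\setminus\cliq_{\ini}$ with \emph{maximum} $r$-value; its optimality requires two nontrivial exchange lemmas (Lemmas~\ref{lem:greedily-removing-a-shortest-interval} and~\ref{lem:greedily-taking-a-longest-interval}). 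Without either an explicit formula with a proof, or a concrete greedy rule with an exchange argument pinning down \emph{which} vertex to add or remove, your proof of the shortest variant is incomplete. (A secondary, fixable point: the paper notes that locating maximal cliques $M_{\ini}\supseteq\cliq_{\ini}$ and $M_{\tar}\supseteq\cliq_{\tar}$ within the stated time bound needs some care, which it handles with a dummy-vertex trick in Lemma~\ref{lem:chordal->interval}.)
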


	In this section, we prove Theorem~\ref{the:chordal}. 
	By Theorems~\ref{the:TS=TAR}(a) and \ref{the:TJ=TAR}(a) it suffices to give a linear-time algorithm for a $\TARrule$-instance;
recall that the reduction from $\TS$/$\TJ$ to $\TARrule$ preserves the shortest length of reconfiguration sequences.

	Our algorithm consists of two phases.
	The first is a linear-time reduction from a given $\TARrule$-instance $(G, \cliq_{\ini}, \cliq_{\tar}, k)$ for a chordal graph $G$ to a $\TARrule$-instance $(\subH, \cliq_{\ini}, \cliq_{\tar}, k)$ for an interval graph $\subH$ such that $\distTARG{\subH}{\cliq_{\ini}}{\cliq_{\tar}}{k} = \distTARG{G}{\cliq_{\ini}}{\cliq_{\tar}}{k}$. 
	The second is a linear-time algorithm for interval graphs.

\medskip

\noindent
	{\bf Definitions of chordal graphs and interval graphs.}

	A graph is a \emph{chordal graph} if every induced cycle is of length three.
	Recall that $\Mset{G}$ is the set of all maximal cliques in a graph $G$, and we denote by $\Msetv{G}{v}$ the set of all maximal cliques in $G$ that contain a vertex $v \in V(G)$. 
A tree $\mathcal{T}$ is a  \emph{clique tree} of a graph $G$ if it satisfies the following conditions:
	\begin{listing}{aa}
	\item[-] each node in $\mathcal{T}$ corresponds to a maximal clique in $\mathcal{M}(G)$; and 
	\item[-] for each $v \in V(G)$, the subgraph of $\mathcal{T}$ induced by $\Msetv{G}{v}$ is connected.
	\end{listing}
	It is known that a graph is a chordal graph if and only if it has a clique tree~\cite{Gavril74}.
	A clique tree of a chordal graph can be computed in linear time (see~\cite[\S 15.1]{Spinrad03}).

	A graph is an \emph{interval graph} if it can be represented as the intersection graph of intervals on the real line.
	A \emph{clique path} is a clique tree which is a path.
	It is known that a graph is an interval graph if and only if it has a clique path~\cite{FulkersonG65,GilmoreH64}.
%



\subsection{Linear-time reduction from chordal graphs to interval graphs}

	In this subsection, we describe the first phase of our algorithm. 

	Let $(G, \cliq_{\ini}, \cliq_{\tar}, k)$ be any $\TARrule$-instance for a chordal graph $G$, and let $\mathcal{T}$ be a clique tree of $G$.
	Then, we find an arbitrary pair of maximal cliques $M_{\ini}$ and $M_{t}$ in $G$ (i.e., two nodes in $\mathcal{T}$) such that $\cliq_{\ini} \subseteq M_{\ini}$ and $\cliq_{\tar} \subseteq M_{t}$.
	Let $(M_{\ini}, M_1, \ldots, M_t)$ be the unique path in $\mathcal{T}$ from $M_{\ini}$ to $M_t$. 
	We define a graph $\subHp$ as the subgraph of $G$ induced by the maximal cliques $M_{\ini}, M_1, \ldots, M_t$.
	Note that $\subHp$ is an interval graph, because $(M_{\ini}, M_1, \ldots, M_t)$ forms a clique path. 
	

%

	The following lemma implies that the interval graph $\subHp$ has a $\TAR{k}$-sequence $\langle \cliq_{0}, \cliq_1, \ldots, \cliq_{\ell^\prime} \rangle$ such that $\ell^\prime = \distTARG{G}{\cliq_{\ini}}{\cliq_{\tar}}{k}$, and hence yields that $\distTARG{\subHp}{\cliq_{\ini}}{\cliq_{\tar}}{k} = \distTARG{G}{\cliq_{\ini}}{\cliq_{\tar}}{k}$ holds. 
	\begin{lemma} \label{lem:consecutive-cliques}
	Let $(G, \cliq_{\ini}, \cliq_{\tar}, k)$ be a $\TARrule$-instance for a chordal graph $G$, and let $\mathcal{T}$ be a clique tree of $G$.
	Suppose that $\langle \cliq_{\ini}, \cliq_1, \ldots, \cliq_{\ell} \rangle$ is a shortest $\TAR{k}$-sequence in $G$ from $\cliq_{\ini}$ to $\cliq_{\ell} = \cliq_{\tar}$.
	Let $(M_{0}, M_1, \dots M_{t})$ be the path in $\mathcal{T}$ from $M_{0}$ to $M_{t}$ for any pair of maximal cliques $M_{0} \supseteq \cliq_0$ and $M_{t} \supseteq \cliq_{\tar}$.
	Then, there is a monotonically increasing function $f \colon \{0,1,\dots,\ell\} \to \{0, 1,\dots, t\}$ such that $C_{i} \subseteq M_{f(i)}$ for each $i \in \{0,1,\ldots,\ell\}$.
\end{lemma}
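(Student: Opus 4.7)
My plan is to pass from the sequence of cliques to the sequence of subtrees of $\mathcal{T}$ they induce, and then exploit shortness to rule out both off-path excursions and backtracking along $P$.

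For each $i \in \{0,1,\dots,\ell\}$, I will set $T_i \subseteq \Mset{G}$ to be the set of maximal cliques containing $\cliq_i$; since $T_i = \bigcap_{v \in \cliq_i} \Msetv{G}{v}$ is a Helly intersection of subtrees of $\mathcal{T}$, it induces a nonempty subtree of $\mathcal{T}$. Because $|\cliq_{i-1} \symdiff \cliq_i| = 1$, one of $\cliq_{i-1}, \cliq_i$ is a subset of the other, so one of $T_{i-1}, T_i$ contains the other. Writing $J_i := \{j : M_j \in T_i\}$, the intersection of the subtree $T_i$ with the subpath $P$ of $\mathcal{T}$ is a (possibly empty) subpath of $\{0,\dots,t\}$, and by the choice of $M_0, M_t$ we have $0 \in J_0$ and $t \in J_\ell$. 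The strategy is to establish (I) that every $J_i$ is nonempty, and (II) that $\min J_i \le \max J_j$ whenever $i < j$. Given both, the greedy choice $f(0) := 0$ and $f(i) := \max\bigl(f(i-1), \min J_i\bigr)$ is monotone and satisfies $f(i) \in J_i$, i.e.\ $\cliq_i \subseteq M_{f(i)}$.

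To prove (I), I would assume for contradiction that some $J_m = \emptyset$ and pick a maximal interval $[i_1+1, i_2-1]$ of off-$P$ indices. The step $\cliq_{i_1} \to \cliq_{i_1+1}$ must add a vertex $v_1$ (otherwise $T_{i_1+1} \supseteq T_{i_1}$ would force $J_{i_1+1} \ne \emptyset$), and $\cliq_{i_2-1} \to \cliq_{i_2}$ must remove a vertex. The union $T^* := \bigcup_{i_1 < m < i_2} T_m$ is a connected subtree of $\mathcal{T}$ disjoint from $P$, hence has a unique closest node $p^* \in P$. Connectedness of $T_{i_1}$, which meets both $P$ (at $J_{i_1}$) and $T^*$ (through $T_{i_1+1} \subseteq T^*$), forces $p^* \in T_{i_1}$; symmetrically $p^* \in T_{i_2}$, so $\cliq_{i_1} \cup \cliq_{i_2} \subseteq M_{p^*}$. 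I then replace the sub-sequence from $\cliq_{i_1}$ to $\cliq_{i_2}$ by the canonical one living inside $M_{p^*}$: add the vertices of $\cliq_{i_2} \setminus \cliq_{i_1}$ one at a time (reaching $\cliq_{i_1} \cup \cliq_{i_2}$), then remove those of $\cliq_{i_1} \setminus \cliq_{i_2}$. Each intermediate clique is contained in $M_{p^*}$ and has size at least $\min(|\cliq_{i_1}|, |\cliq_{i_2}|) \ge k$, giving a valid $\TAR{k}$-sequence of length $|\cliq_{i_1} \symdiff \cliq_{i_2}|$. The original sub-sequence has length $i_2 - i_1 \ge |\cliq_{i_1} \symdiff \cliq_{i_2}|$: strict inequality gives a shorter overall sequence and contradicts shortness, while equality makes every original step productive, so $v_1 \in \cliq_{i_2} \setminus \cliq_{i_1} \subseteq M_{p^*}$, contradicting $v_1 \notin M_{p^*}$ (which follows from $J_{i_1+1} = \emptyset$).

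For (II) I anticipate an analogous shortcut: if $\min J_i > \max J_j$ for some $i < j$, then the sub-sequence from $\cliq_i$ to $\cliq_j$ backtracks along $P$, and a bridge-node argument applied to a maximal leftward excursion yields a strictly shorter sub-sequence through a common container maximal clique, again contradicting shortness. The main obstacle in both (I) and (II) is guaranteeing that the canonical add-then-remove replacement respects the $\TAR{k}$ size threshold; this is handled by routing through the union $\cliq_{i_1} \cup \cliq_{i_2}$ (or its analogue in (II)), so that intermediate sizes never drop below $\min(|\cliq_{i_1}|, |\cliq_{i_2}|) \ge k$.
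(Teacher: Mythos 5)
Your reduction of the lemma to conditions (I) and (II) is sound, and your argument for (I) is correct and essentially complete: the off-path excursion lives in a single component of $\mathcal{T} - V(P)$, which attaches to $P$ at a unique node $p^*$; both bracketing cliques are then forced to lie in $M_{p^*}$; and the add-then-remove replacement through $\cliq_{i_1} \cup \cliq_{i_2}$, together with the productive-steps analysis of the equality case, yields the contradiction. This is a genuinely different route from the paper's, which instead inducts on $t$, uses that $M_0 \cap M_1$ is a $(\cliq_{\ini},\cliq_{\tar})$-vertex-separator of $G$ to locate a clique of the sequence inside $M_0 \cap M_1$, confines the prefix to $M_0$ via Lemma~\ref{lem:tar-dist-in-a-clique}, and recurses on the suffix.

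The gap is in (II), which is where monotonicity is actually enforced, and your sketch does not go through as stated. For an on-path backtrack there is no single bridge node, and, more importantly, the offending sub-sequence need not admit a strictly shorter replacement: taking a violating pair $i<j$ with $j-i$ minimal, every intermediate $J_m$ contains $[\max J_j, \min J_i]$, the step $i \to i+1$ removes a vertex $u \notin M_{\max J_j}$, the step $j-1 \to j$ adds a vertex $w \notin M_{\min J_i}$, and the sub-sequence from $\cliq_i$ to $\cliq_j$ can have length exactly $|\symdiff{\cliq_i}{\cliq_j}|$ --- so ``contradicting shortness'' of this excursion alone fails. The waste only becomes visible past $j$: since $t \in J_\ell$ and $t \ge \min J_i$, nestedness of consecutive intervals forces a later index $j' > j$ with $\min J_i \in J_{j'}$, i.e.\ $\cliq_{j'} \subseteq M_{\min J_i}$; then $\cliq_{j-1} \subseteq M_{\min J_i}$ and $\cliq_{j'} \subseteq M_{\min J_i}$ bracket the clique $\cliq_j$, which contains $w \notin M_{\min J_i}$, and the fact that a shortest $\TAR{k}$-sequence between two cliques of a common maximal clique never leaves their union (the second assertion of Lemma~\ref{lem:tar-dist-in-a-clique}) gives the contradiction. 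So (II) is provable with the tools you have set up, but the needed argument is not the ``analogous shortcut'' you describe, and as written the proof is incomplete at its central point.
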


	Although Lemma~\ref{lem:consecutive-cliques} implies that $\distTARG{\subHp}{\cliq_{\ini}}{\cliq_{\tar}}{k} = \distTARG{G}{\cliq_{\ini}}{\cliq_{\tar}}{k}$ holds for the interval graph $\subHp$, it seems difficult to find two maximal cliques $M_{0} \supseteq \cliq_0$ and $M_{t} \supseteq \cliq_{\tar}$ (and hence construct $\subHp$ from $G$) in linear time. 
	However, by a small trick, we can construct an interval graph $\subH$ in linear time such that $\distTARG{\subH}{\cliq_{\ini}}{\cliq_{\tar}}{k} = \distTARG{G}{\cliq_{\ini}}{\cliq_{\tar}}{k}$, as follows.
	\begin{lemma} \label{lem:chordal->interval}
	Given a $\TARrule$-instance $(G, \cliq_{\ini}, \cliq_{\tar}, k)$ for a chordal graph $G$, one can obtain a subgraph $\subH$ of $G$ in linear time such that $\subH$ is an interval graph, $\cliq_{\ini}, \cliq_{\tar} \subseteq V(\subH)$ and $\distTARG{\subH}{\cliq_{\ini}}{\cliq_{\tar}}{k} = \distTARG{G}{\cliq_{\ini}}{\cliq_{\tar}}{k}$.
	\end{lemma}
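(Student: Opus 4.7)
The plan is to apply Lemma~\ref{lem:consecutive-cliques} by taking $\subH$ to be the subgraph induced by a suitable path in the clique tree $\mathcal{T}$ of $G$, with every computation carried out in linear time. First I would compute $\mathcal{T}$ in linear time using the standard algorithm for chordal graphs, and then find a maximal clique $M_0 \supseteq \cliq_{\ini}$ and a maximal clique $M_t \supseteq \cliq_{\tar}$ of $G$ in linear time by direct greedy extension: initialize $M_0 := \cliq_{\ini}$ and iterate through $V(G) \setminus \cliq_{\ini}$, adding a vertex $v$ to $M_0$ whenever every current element of $M_0$ is adjacent to $v$; using an indicator array for $M_0$ and scanning adjacency lists, the total cost is $O(n+m)$, and the analogous procedure yields $M_t$.

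Once $M_0$ and $M_t$ are identified as nodes of $\mathcal{T}$ (for instance via a hash table built alongside the clique tree), I would extract the unique path $P = (M_0, M_1, \ldots, M_t)$ between them in linear time, and set $\subH := G\bigl[\bigcup_{M \in P} V(M)\bigr]$. By construction $\cliq_{\ini} \subseteq V(M_0) \subseteq V(\subH)$ and $\cliq_{\tar} \subseteq V(M_t) \subseteq V(\subH)$.

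To show that $\subH$ is an interval graph, I would verify that $P$ itself is a clique path of $\subH$. Each $M_i$ remains a maximal clique in $\subH$ because it is a maximal clique of the supergraph $G$ and lies entirely in $V(\subH)$. For any clique $K$ of $\subH$, using the connectedness of $\Msetv{G}{v}$ in $\mathcal{T}$ for every $v$ together with the path shape of $P$: if $K$ is contained in some maximal clique $M$ of $G$ off $P$, then for each $v \in K$ the path in $\mathcal{T}$ from $M$ to any clique on $P$ containing $v$ lies inside the subtree $\Msetv{G}{v}$ and must cross a common entry node $N$ on $P$, forcing $K \subseteq N$. Hence every maximal clique of $\subH$ is some $M_i$, and for every $v \in V(\subH)$ the set of $M_i$'s containing $v$ is the intersection $\Msetv{G}{v} \cap P$, which is a subpath of $P$.

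Finally, to conclude $\distTARG{\subH}{\cliq_{\ini}}{\cliq_{\tar}}{k} = \distTARG{G}{\cliq_{\ini}}{\cliq_{\tar}}{k}$, the inequality $\geq$ is immediate because $\subH$ is an induced subgraph of $G$. For $\leq$, I would apply Lemma~\ref{lem:consecutive-cliques} with the pair $(M_0, M_t)$: a shortest $\TAR{k}$-sequence $\langle \cliq_{\ini}, \cliq_1, \ldots, \cliq_{\tar} \rangle$ in $G$ has each $\cliq_i$ contained in some $M_{f(i)}$ on $P$, so every clique in the sequence lies inside $V(\subH)$, making it a valid $\TAR{k}$-sequence in $\subH$. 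The hard part will be the linear-time identification of the $\mathcal{T}$-nodes corresponding to the greedily constructed $M_0$ and $M_t$; a naive set comparison against every node of $\mathcal{T}$ is too slow, so one must either walk inside $\mathcal{T}$ during the greedy extension (starting from a precomputed pointer to some maximal clique of $\mathcal{T}$ containing a chosen vertex of $\cliq_{\ini}$) or build an auxiliary lookup structure together with the clique-tree construction.
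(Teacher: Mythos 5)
Your mathematical architecture coincides with the paper's: take the path $(M_0,\dots,M_t)$ in a clique tree $\mathcal{T}$ between a node containing $\cliq_{\ini}$ and a node containing $\cliq_{\tar}$, let $\subH$ be the subgraph of $G$ induced by their union, check that this path is a clique path of $\subH$ (so $\subH$ is an interval graph), and get $\distTARG{\subH}{\cliq_{\ini}}{\cliq_{\tar}}{k} \ge \distTARG{G}{\cliq_{\ini}}{\cliq_{\tar}}{k}$ from $\subH$ being induced and the reverse inequality from Lemma~\ref{lem:consecutive-cliques}. Those parts are correct, including your entry-node argument that every maximal clique of $\subH$ is some $M_i$.

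The genuine gap is exactly the step you defer to your last sentence: locating, as nodes of $\mathcal{T}$, maximal cliques $M_0 \supseteq \cliq_{\ini}$ and $M_t \supseteq \cliq_{\tar}$ in linear time. The paper explicitly names this as the obstruction --- it states that finding such $M_0$ and $M_t$ ``seems difficult'' to do in linear time --- and resolves it with a device absent from your proposal: add dummy vertices $\dumo$ and $\dumt$, joining $\dumo$ to all of $\cliq_{\ini}$ and $\dumt$ to all of $\cliq_{\tar}$. The resulting graph $G^\prime$ is still chordal, $\cliq_{\ini}\cup\{\dumo\}$ and $\cliq_{\tar}\cup\{\dumt\}$ are maximal cliques of $G^\prime$, and they are the \emph{unique} maximal cliques containing $\dumo$ and $\dumt$, so the corresponding nodes of a clique tree of $G^\prime$ are found by a trivial scan. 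The price is an extra argument (via Lemmas~\ref{lem:never-appear-again} and \ref{lem:tar-dist-in-a-clique}) that some shortest $\TAR{k}$-sequence never enters a clique containing a dummy, so the dummies can be deleted without changing the distance. Your proposal instead gestures at two unspecified data-structure strategies (hashing vertex sets, or walking in $\mathcal{T}$ during the greedy extension), neither worked out; since the linear-time bound is the entire content of the lemma beyond Lemma~\ref{lem:consecutive-cliques}, this is a real gap as written. Note also that the greedy extension is a detour: you do not need the particular maximal clique it builds, only \emph{some} node of $\mathcal{T}$ containing $\cliq_{\ini}$, and because $\sum_{M \in \Mset{G}} |M| = O(n+m)$ for a chordal graph, testing $\cliq_{\ini} \subseteq M$ for every node $M$ with an indicator array already runs in total linear time --- so your route can be repaired, but not by the steps you actually wrote down.
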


\subsection{Linear-time algorithm for interval graphs} \label{subsec:interval}

	In this subsection, we describe the second phase of our algorithm.

	Let $\intH$ be a given interval graph, and we assume that its clique path $\mathcal{P}$ has $V(\mathcal{P}) = \Mset{\intH} = \{M_{0}, M_{1}, \ldots, M_{t}\}$ and $E(\mathcal{P}) = \{\{M_{i}, M_{i+1}\} \mid 0 \le i < t\}$. 
	Note that we can assume that $t \ge 1$, that is, $\intH$ has at least two maximal cliques;
otherwise we can easily solve the problem in linear time (as in Lemma~\ref{lem:tar-dist-in-a-clique} in Appendix~\ref{app:oneclique}).
	For a vertex $v$ in $\intH$, let $l_{v} = \min\{i \mid v \in M_{i}\}$ and $r_{v} = \max\{i \mid v \in M_{i}\}$;
the indices $l_v$ and $r_v$ are called the \emph{$l$-value} and \emph{$r$-value} of $v$, respectively.
	Note that $v \in M_{i}$ if and only if $l_{v} \le i \le r_{v}$.
	For an interval graph $\intH$, such a clique path $\mathcal{P}$ and the indices $l_{v}$ and $r_{v}$ for all vertices $v \in V(\intH)$ can be computed in linear time~\cite{UeharaU07}.

	Let $(\intH, \cliq_{\ini}, \cliq_{\tar}, k)$ be a $\TARrule$-instance.
	We assume that $\cliq_{\ini} \subseteq M_{0}$, $\cliq_{\ini} \not\subseteq M_{1}$ and $\cliq_{\tar} \subseteq M_{t}$;
otherwise, we can remove the maximal cliques $M_{i}$ with $i < \min\{r_{v} \mid v \in \cliq_{\ini} \}$ and $i > \max\{l_{v} \mid v \in \cliq_{\tar} \}$ in linear time.
	Our algorithm greedily constructs a shortest $\TAR{k}$-sequence from $\cliq_{\ini}$ to $\cliq_{\tar}$, as follows:
	\begin{listing}{aaa}
	\item[(1)] if $\cliq_{\ini} \not\subseteq \cliq_{\tar}$ and $|\cliq_{\ini}| \ge k+1$, then remove a vertex with the minimum $r$-value in $\cliq_{\ini} \setminus \cliq_{\tar}$ from $\cliq_{\ini}$; 
	\item[(2)] otherwise add a vertex in $(\cliq_{\tar} \setminus \cliq_{\ini}) \cap M_{0}$ if any; 
					if no such vertex exists, add a vertex with the maximum $r$-value in $M_{0} \setminus \cliq_{\ini}$.
	\end{listing}
	We regard the clique obtained by the operations above as $\cliq_{\ini}$;
if necessary, we shift the indices of $M_i$ so that $\cliq_{\ini} \subseteq M_{0}$ and $\cliq_{\ini} \not\subseteq M_{1}$ hold; and repeat.
	If $\cliq_{\ini} \neq \cliq_{\tar}$ and none of the operations above is possible, we can conclude that $(\intH, \cliq_{\ini}, \cliq_{\tar}, k)$ is a $\NO$-instance.
	The correctness proof of this greedy algorithm and the estimation of its running time can be found in Appendix~\ref{app:interval-algo}.

	This completes the proof of Theorem~\ref{the:chordal}. 


	\section{Conclusion}
	In this paper, we have systematically shown that \textsc{clique reconfiguration} and its shortest variant can be solved in polynomial time for several graph classes. 
	As far as we know, this is the first example of a reconfiguration problem such that it is PSPACE-complete in general, but is solvable in polynomial time for such a variety of graph classes.

	\subsection*{Acknowledgments}
	This work is partially supported by MEXT/JSPS KAKENHI 25106504 and 25330003 (T.~Ito), 25104521, 26540005 and 26540005 (H.~Ono), and 24106004 and 25730003 (Y.~Otachi).

\bibliographystyle{abbrv}

\newpage
\appendix

	\section{Proofs Omitted from Section~\ref{sec:rules}}
	
	\subsection{Proof of Lemma~\ref{lem:TS->TAR}}

	To prove Lemma~\ref{lem:TS->TAR}, we first give the following lemma.	
	\begin{lemma} \label{lem:ar<=k+1}
	Let $G$ be a graph, and let $\cliq$ and $\cliq^\prime$ be any pair of cliques of $G$ such that $|\cliq| = |\cliq^\prime| = k$ and $\cliq \sevstep \cliq^\prime$ under $\TAR{k}$.
	Then, there exists a shortest $\TAR{k}$-sequence $\langle \cliq_{0}, \cliq_{1}, \dots, \cliq_{\ell} \rangle$ from $\cliq_{0} = \cliq$ to $\cliq_{\ell} = \cliq^\prime$ such that $|\cliq_{2i-1}| = k+1$ and $|\cliq_{2i}| = k$ for every $i \in \{1, 2, \ldots, \ell/2 \}$.
	\end{lemma}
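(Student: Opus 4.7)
The plan is to pick a shortest $\TAR{k}$-sequence that is extremal in a secondary sense, and then use a local exchange argument at every ``peak'' of size greater than $k+1$ to derive a contradiction. Specifically, among all shortest $\TAR{k}$-sequences from $\cliq$ to $\cliq^\prime$, I would take one $\mathcal{C} = \langle \cliq_{0}, \cliq_{1}, \dots, \cliq_{\ell} \rangle$ that minimizes the total weight $\sum_{i=0}^{\ell} |\cliq_{i}|$. Since $|\cliq_0| = |\cliq_\ell| = k$ and each step changes the size by exactly $1$, the length $\ell$ is automatically even, so the only thing to prove is that no clique in $\mathcal{C}$ has size larger than $k+1$; the alternating pattern $k, k+1, k, k+1, \ldots, k$ will then follow from parity.

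Suppose for contradiction that the maximum size $M$ appearing in $\mathcal{C}$ is at least $k+2$. I would look at the smallest index $i$ with $|\cliq_i| = M$. Because $M$ is maximal and sizes change by $\pm 1$ at every step, necessarily $|\cliq_{i-1}| = |\cliq_{i+1}| = M-1$, so the move $\cliq_{i-1}\onestep\cliq_i$ adds some vertex $v$ and $\cliq_i\onestep\cliq_{i+1}$ removes some vertex $w$ of $\cliq_i$. I then split into two cases.

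If $v = w$, then $\cliq_{i-1} = \cliq_{i+1}$, and one can simply delete $\cliq_i$ and $\cliq_{i+1}$ from the sequence to obtain a strictly shorter $\TAR{k}$-sequence from $\cliq$ to $\cliq^\prime$, contradicting the shortness of $\mathcal{C}$. If $v \ne w$, let $\cliq_i^\prime = \cliq_i\setminus\{v,w\}$; this is a clique as a subset of $\cliq_i$, and its size is $M-2 \ge k$, so it is a valid clique for $\TAR{k}$. One checks that $\cliq_{i-1}\onestep \cliq_i^\prime$ (remove $w$) and $\cliq_i^\prime \onestep \cliq_{i+1}$ (add $v$) are legal $\TAR{k}$-moves. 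Replacing $\cliq_i$ by $\cliq_i^\prime$ in $\mathcal{C}$ yields a shortest sequence whose total weight has strictly decreased by $2$, contradicting the minimality of $\sum_i|\cliq_i|$.

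Both cases give a contradiction, so every clique in $\mathcal{C}$ has size in $\{k, k+1\}$; the starting and ending sizes are $k$, and consecutive sizes differ by exactly one, which forces the alternation stated in the lemma. The part I expect to be slightly delicate is verifying in the second case that the replacement $\cliq_i \mapsto \cliq_i^\prime$ really is a local modification that changes neither the length nor the endpoints and leaves all other transitions of $\mathcal{C}$ untouched; once this is spelled out, the threshold condition $M-2 \ge k$ (which is precisely where $M \ge k+2$ is used) makes the argument go through cleanly.
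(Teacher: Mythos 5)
Your proposal is correct and follows essentially the same route as the paper: among shortest $\TAR{k}$-sequences, minimize $\sum_i |\cliq_i|$, then at a maximum-size clique of size at least $k+2$ replace $\cliq_i$ by $\cliq_i \setminus \{v,w\}$ (the paper's $\cliq_{s-1}\setminus\{b\} = \cliq_s\setminus\{a,b\}$) to strictly decrease the total weight while keeping the length, a contradiction. The paper folds your case $v=w$ into the remark that shortestness forces $a \neq b$, and like you it concludes the $k,k+1,k,\ldots$ alternation from parity.
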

	\begin{proof}
	Let $\langle \cliq_{0}, \cliq_{1}, \dots, \cliq_{\ell} \rangle$ be a shortest $\TAR{k}$-sequence from $\cliq_{0} = \cliq$ to $\cliq_{\ell} = \cliq^\prime$ which minimizes the sum $\sum_{i=0}^{\ell} |\cliq_{i}|$.
	Since each clique in the $\TAR{k}$-sequence $\langle \cliq_{0}, \cliq_{1}, \dots, \cliq_{\ell} \rangle$ is of size at least $k$, it suffices to show that $|C_{j}| \le k+1$ holds for every $j \in \{1,2,\ldots,\ell-1\}$.

	Let $s$ be an index satisfying $|\cliq_{s}| = \max_{i=0}^{\ell} |\cliq_{i}|$, and suppose for a contradiction that $|\cliq_{s}| \ge k + 2$.
	By the definition of $s$, we have $\cliq_{s-1} \subset \cliq_{s} \supset \cliq_{s+1}$.
	Let $\cliq_{s} = \cliq_{s-1} \cup \{a\}$ and $\cliq_{s+1} = (\cliq_{s-1} \cup \{a\}) \setminus \{b\}$.
	Note that, since $\langle \cliq_{0}, \cliq_{1}, \dots, \cliq_{\ell} \rangle$ is shortest, we have $a \neq b$ and hence $b \in \cliq_{s-1}$.
	We now replace the clique $\cliq_{s}$ by another clique $\cliq_{s}^\prime = \cliq_{s-1} \setminus \{b\}$, and obtain the following sequence $\mathcal{C}^\prime$ of cliques:
	\[
		\mathcal{C}^\prime = \langle \cliq_{0},\ \cliq_{1},\ \ldots,\ \cliq_{s-1},\ \cliq_{s-1} \setminus \{b\},\ \cliq_{s+1},\ \ldots,\ \cliq_{\ell} \rangle.
	\]
	Since $\cliq_{s-1} = \cliq_{s} \setminus \{a\}$ and $|\cliq_s| \ge k+2$, we have $|\cliq_{s}^\prime| = |\cliq_{s} \setminus \{a, b\}| \ge k$ and hence $\cliq_{s-1} \onestep \cliq_{s-1} \setminus \{b\} = \cliq_{s}^\prime$ under $\TAR{k}$.
	Furthermore, since $\cliq_{s+1} = (\cliq_{s-1} \cup \{a\}) \setminus \{b\} = \cliq_{s}^\prime \cup \{a\}$, we have $\cliq_{s}^\prime \onestep \cliq_{s+1}$ under $\TAR{k}$. 
	Therefore, $\mathcal{C}^\prime$ is a $\TAR{k}$-sequence between $\cliq$ and $\cliq^\prime$.

	Note that $\mathcal{C}^\prime$ is of length $\ell$, and hence it is a shortest $\TAR{k}$-sequence between $\cliq$ and $\cliq^\prime$.
	Since $\cliq_{s}^\prime = \cliq_{s} \setminus \{a, b\}$, we have $|\cliq_s^\prime| < |\cliq_s|$ and hence 
	\[
		|\cliq_s^\prime| + \sum \Bigl\{ |\cliq_{j}| : j \in \{0, 1, \ldots, \ell\} \setminus \{ s\}  \Bigr\} < \sum_{i=0}^{\ell} |\cliq_{i}|.
	\]
	This contradicts the assumption that $\langle \cliq_{0}, \cliq_{1}, \dots, \cliq_{\ell} \rangle$ is a shortest $\TAR{k}$-sequence from $\cliq_{0} = \cliq$ to $\cliq_{\ell} = \cliq^\prime$ which minimizes the sum $\sum_{i=0}^{\ell} |\cliq_{i}|$.
	\qed
	\end{proof}
\medskip

	\noindent
	{\bf Proof of Lemma~\ref{lem:TS->TAR}.}

	We first prove that $\TARins{\cliq_{\ini}}{\cliq_{\tar}}{k} = \YES$ if $\TSins{\cliq_{\ini}}{\cliq_{\tar}} = \YES$.
	In this case, there exists a $\TS$-sequence between $\cliq_{\ini}$ and $\cliq_{\tar}$; 
let $\langle \cliq_{0}, \cliq_{1}, \dots, \cliq_{\ell} \rangle$ be a shortest one, that is, $\cliq_{\ell} = \cliq_{\tar}$ and $\ell = \distTS{\cliq_{\ini}}{\cliq_{\tar}}$.
	Then, since this is a $\TS$-sequence, we have $u_{j-1} w_j \in E(G)$ for each $j \in \{1,2, \ldots, \ell\}$, where $\cliq_{j-1} \setminus \cliq_{j} = \{ u_{j-1} \}$ and $\cliq_{j} \setminus \cliq_{j-1} = \{ w_j\}$. 
(See \figurename~\ref{fig:TS}(a).)
	Therefore, $\cliq_{j-1} \cup \cliq_j$ $\bigl(= \cliq_{j-1} \cup \{ w_j \} \bigr)$ forms a clique of size $k+1$.
	Then, for each $j \in \{1,2, \ldots, \ell\}$, we replace each sub-sequence $\langle \cliq_j \rangle$ with $\langle \cliq_{j-1} \cup \{ w_{j} \},\ \cliq_j \rangle$, and obtain the following sequence $\mathcal{C}^\prime$ of cliques:
	\[
		\mathcal{C}^\prime = \langle \cliq_0,\ \cliq_0 \cup \{w_1 \},\ \cliq_1,\ \ldots, \cliq_{j-1} \cup \{ w_{j} \},\ \cliq_{j},\ \ldots,\cliq_{\ell-1} \cup \{ w_{\ell} \},\  \cliq_{\ell} \rangle.   
	\]
	Notice that $\cliq_{j-1} \cup \{ w_{j} \} \onestep \cliq_{j}$ under $\TAR{k}$ for each $j \in \{1,2, \ldots, \ell\}$, because $\bigl( \cliq_{j-1} \cup \{ w_{j} \} \bigr) \setminus \{ u_{j-1} \} = \cliq_j$.
	Therefore, the sequence $\mathcal{C}^\prime$ above is a $\TAR{k}$-sequence from $\cliq_0$ to $\cliq_{\ell} = \cliq_{\tar}$, and hence $\TARins{\cliq_{\ini}}{\cliq_{\tar}}{k} = \YES$. 
	Furthermore, by the construction, $\mathcal{C}^\prime$ is of length $2\ell$.
	Therefore, we have 
	\begin{equation} \label{eq:TAR<=2TS}
		\distTAR{\cliq_{\ini}}{\cliq_{\tar}}{k} \le 2 \ell = 2 \cdot \distTS{\cliq_{\ini}}{\cliq_{\tar}}.
	\end{equation}

	\begin{figure}[t]
		\centering
		\includegraphics[width=0.6\linewidth]{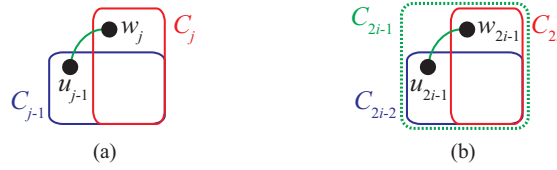}
	\vspace{-1em}
	\caption{Illustration for Lemma~\ref{lem:TS->TAR}.}
	\vspace{-1em}
	\label{fig:TS}
	\end{figure}

	We then prove that $\TSins{\cliq_{\ini}}{\cliq_{\tar}} = \YES$ if $\TARins{\cliq_{\ini}}{\cliq_{\tar}}{k} = \YES$.
	In this case, there exists a $\TAR{k}$-sequence between $\cliq_{\ini}$ and $\cliq_{\tar}$; 
let $\langle \cliq_{0}, \cliq_{1}, \dots, \cliq_{\ell^\prime} \rangle$ be a shortest one, that is, $\cliq_{\ell^\prime} = \cliq_{\tar}$ and $\ell^\prime = \distTAR{\cliq_{\ini}}{\cliq_{\tar}}{k}$.
	Furthermore, by Lemma~\ref{lem:ar<=k+1} we can assume that $|\cliq_{2i-1}| = k+1$ and $|\cliq_{2i}| = k$ for every $i \in \{1, 2, \ldots, \ell^\prime/2 \}$. 
	Then, observe that $\cliq_{2i-1} = \cliq_{2i-2} \cup \cliq_{2i}$ for every $i \in \{1, 2, \ldots, \ell^\prime/2 \}$, and let $\cliq_{2i-1} = \cliq_{2i-2} \cup \{w_{2i-1} \}  = \cliq_{2i} \cup \{u_{2i-1}\}$.
(See \figurename~\ref{fig:TS}(b).)
	Since this $\TAR{k}$-sequence $\langle \cliq_{0}, \cliq_{1}, \dots, \cliq_{\ell^\prime} \rangle$ is shortest, we have $u_{2i-1} \neq w_{2i-1}$.
	Furthermore, since both $u_{2i-1}$ and $w_{2i-1}$ belong to the clique $\cliq_{2i-1}$, they are adjacent.
	Therefore, for every $i \in \{1, 2, \ldots, \ell^\prime/2 \}$, we have $\cliq_{2i-2} \onestep C_{2i}$ under $\TS$;
we replace each sub-sequence $\langle \cliq_{2i-1}, \cliq_{2i} \rangle$ with $\langle \cliq_{2i} \rangle$, and obtain $\mathcal{C}^{\prime \prime} = \langle \cliq_0, \cliq_2, \cliq_4, \ldots, \cliq_{\ell^\prime} \rangle$.
	In this way, $\mathcal{C}^{\prime \prime}$ is a $\TS$-sequence from $\cliq_{\ini}$ to $\cliq_{\ell^\prime} = \cliq_{\tar}$, and hence $\TSins{\cliq_{\ini}}{\cliq_{\tar}} = \YES$. 
	Furthermore, the length of $\mathcal{C}^{\prime \prime}$ is $\ell^\prime/2$, and hence 
	\begin{equation} \label{eq:2TS<=TAR}
		\distTS{\cliq_{\ini}}{\cliq_{\tar}} \le \ell^\prime/2 = \distTAR{\cliq_{\ini}}{\cliq_{\tar}}{k}/2. 
	\end{equation}

	By Eqs.~(\ref{eq:TAR<=2TS}) and (\ref{eq:2TS<=TAR}) we have $\distTS{\cliq_{\ini}}{\cliq_{\tar}} = \distTAR{\cliq_{\ini}}{\cliq_{\tar}}{k} / 2$.
	\qed

	\subsection{Proof of Lemma~\ref{lem:TS=TARb}}
	Since $\cliq_{\ini}^\prime \subseteq \cliq_{\ini}$ and $|\cliq_{\ini}^\prime| = k$, we have $\cliq_{\ini} \sevstep \cliq_{\ini}^\prime$ under $\TAR{k}$ by deleting the vertices in $\cliq_{\ini} \setminus \cliq_{\ini}^\prime$ from $\cliq_{\ini}$ one by one. 
	Similarly, we have $\cliq_{\tar}^\prime \sevstep \cliq_{\tar}$ under $\TAR{k}$;
recall that any reconfiguration sequence is reversible.
	Since $|\cliq_{\ini}^\prime| = |\cliq_{\tar}^\prime| = k$, by Lemma~\ref{lem:TS->TAR} we have 
	\begin{equation} \label{eq:TS=TAR}
		\TSins{\cliq_{\ini}^\prime}{\cliq_{\tar}^\prime} = \TARins{\cliq_{\ini}^\prime}{\cliq_{\tar}^\prime}{k}.
	\end{equation}

	We now prove that $\TARins{\cliq_{\ini}}{\cliq_{\tar}}{k} = \YES$ if $\TSins{\cliq_{\ini}^\prime}{\cliq_{\tar}^\prime} = \YES$.
	In this case, by Eq.~(\ref{eq:TS=TAR}) we have $\TARins{\cliq_{\ini}^\prime}{\cliq_{\tar}^\prime}{k} = \YES$ and hence $\cliq_{\ini}^\prime \sevstep \cliq_{\tar}^\prime$ under $\TAR{k}$.
	Thus, $\cliq_{\ini} \sevstep \cliq_{\ini}^\prime \sevstep \cliq_{\tar}^\prime \sevstep \cliq_{\tar}$ holds under $\TAR{k}$, and hence $\TARins{\cliq_{\ini}}{\cliq_{\tar}}{k} = \YES$. 
	
	We finally prove that $\TSins{\cliq_{\ini}^\prime}{\cliq_{\tar}^\prime} = \YES$ if $\TARins{\cliq_{\ini}}{\cliq_{\tar}}{k} = \YES$.
	In this case, since $\TARins{\cliq_{\ini}}{\cliq_{\tar}}{k} = \YES$, we have $\cliq_{\ini} \sevstep \cliq_{\tar}$ under $\TAR{k}$.
	Therefore, $\cliq_{\ini}^\prime \sevstep \cliq_{\ini} \sevstep \cliq_{\tar} \sevstep \cliq_{\tar}^\prime$ holds under $\TAR{k}$, and hence $\TARins{\cliq_{\ini}^\prime}{\cliq_{\tar}^\prime}{k} = \YES$. 
	By Eq.~(\ref{eq:TS=TAR}) we then have $\TSins{\cliq_{\ini}^\prime}{\cliq_{\tar}^\prime} = \YES$.
\qed

	\subsection{Proof of Lemma~\ref{lem:TJ->TAR}}
	
	We first give the following lemma, which can be obtained from the same arguments as in Lemma~\ref{lem:ar<=k+1} by just shifting the threshold by one.	
	\begin{lemma} \label{lem:ar<=k}
	Let $G$ be a graph, and let $\cliq$ and $\cliq^\prime$ be any pair of cliques of $G$ such that $|\cliq| = |\cliq^\prime| = k$ and $\cliq \sevstep \cliq^\prime$ under $\TAR{k-1}$.
	Then, there exists a shortest $\TAR{k-1}$-sequence $\langle \cliq_{0}, \cliq_{1}, \dots, \cliq_{\ell} \rangle$ from $\cliq_{0} = \cliq$ to $\cliq_{\ell} = \cliq^\prime$ such that $|\cliq_{2i-1}| = k-1$ and $|\cliq_{2i}| = k$ for every $i \in \{1, 2, \ldots, \ell/2 \}$.
	\end{lemma}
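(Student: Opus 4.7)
The plan is to mirror the proof of Lemma~\ref{lem:ar<=k+1} almost verbatim, only with the threshold shifted from $k$ down to $k-1$. Take a shortest $\TAR{k-1}$-sequence $\langle \cliq_{0}, \cliq_{1}, \dots, \cliq_{\ell} \rangle$ from $\cliq_{0} = \cliq$ to $\cliq_{\ell} = \cliq^\prime$ that, among all shortest $\TAR{k-1}$-sequences, additionally minimizes the total size $\sum_{i=0}^{\ell} |\cliq_{i}|$. Since every clique in a $\TAR{k-1}$-sequence has size at least $k-1$, and since consecutive cliques differ in size by exactly one, it suffices to establish that $|\cliq_{j}| \le k$ for every $j \in \{1, 2, \ldots, \ell - 1\}$. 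Together with $|\cliq_{0}| = |\cliq_{\ell}| = k$, this forces the sizes to alternate between $k$ and $k-1$, giving the stated parity pattern (and in particular $\ell$ is even).

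To prove the bound $|\cliq_{j}| \le k$ in the interior, I would argue by contradiction: let $s$ attain $\max_{i} |\cliq_{i}|$ and assume $|\cliq_{s}| \ge k + 1$. Since $|\cliq_{0}| = |\cliq_{\ell}| = k < |\cliq_{s}|$ we must have $1 \le s \le \ell - 1$, so $\cliq_{s-1}$ and $\cliq_{s+1}$ are both defined and, by maximality of $|\cliq_{s}|$, satisfy $\cliq_{s-1} \subsetneq \cliq_{s} \supsetneq \cliq_{s+1}$. Write $\cliq_{s} = \cliq_{s-1} \cup \{a\}$ and $\cliq_{s+1} = \cliq_{s} \setminus \{b\}$. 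Shortness of the sequence prevents $a = b$ (else $\cliq_{s-1} = \cliq_{s+1}$ and two steps could be deleted), so $b \in \cliq_{s-1}$.

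Now perform the same local surgery as before: replace $\cliq_{s}$ by $\cliq_{s}^{\prime} = \cliq_{s-1} \setminus \{b\}$, yielding
\[
	\mathcal{C}^{\prime} = \langle \cliq_{0},\ \ldots,\ \cliq_{s-1},\ \cliq_{s-1} \setminus \{b\},\ \cliq_{s+1},\ \ldots,\ \cliq_{\ell} \rangle.
\]
Because $|\cliq_{s}| \ge k + 1$ we have $|\cliq_{s-1}| \ge k$, hence $|\cliq_{s}^{\prime}| \ge k - 1$, so $\cliq_{s}^{\prime}$ satisfies the $\TAR{k-1}$ threshold. The transitions $\cliq_{s-1} \onestep \cliq_{s}^{\prime}$ (remove $b$) and $\cliq_{s}^{\prime} \onestep \cliq_{s+1}$ (add $a$, using $\cliq_{s+1} = \cliq_{s}^{\prime} \cup \{a\}$) are both legal, so $\mathcal{C}^{\prime}$ is still a $\TAR{k-1}$-sequence from $\cliq$ to $\cliq^{\prime}$ of the same length $\ell$. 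However, $|\cliq_{s}^{\prime}| = |\cliq_{s}| - 2 < |\cliq_{s}|$, strictly decreasing the total size and contradicting our choice of the original sequence.

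The only substantive point that differs from Lemma~\ref{lem:ar<=k+1} is verifying that the replacement clique $\cliq_{s}^{\prime}$ still respects the lowered threshold, and this is immediate from $|\cliq_{s-1}| \ge k$; thus there is no real obstacle beyond the bookkeeping of the shift.
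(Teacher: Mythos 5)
Your proof is correct and is exactly the argument the paper intends: it explicitly states that Lemma~\ref{lem:ar<=k} follows from the proof of Lemma~\ref{lem:ar<=k+1} by shifting the threshold by one, which is precisely the surgery you carry out, including the one point that needs rechecking ($|\cliq_{s}^\prime| \ge k-1$). No issues.
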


	\noindent
	{\bf Proof of Lemma~\ref{lem:TJ->TAR}.}

	We first prove that $\TARins{\cliq_{\ini}}{\cliq_{\tar}}{k-1} = \YES$ if $\TJins{\cliq_{\ini}}{\cliq_{\tar}} = \YES$.
	In this case, there exists a $\TJ$-sequence between $\cliq_{\ini}$ and $\cliq_{\tar}$; 
let $\langle \cliq_{0}, \cliq_{1}, \dots, \cliq_{\ell} \rangle$ be a shortest one, that is, $\cliq_{\ell} = \cliq_{\tar}$ and $\ell = \distTJ{\cliq_{\ini}}{\cliq_{\tar}}$.
	For each $j \in \{1,2, \ldots, \ell\}$, let $\cliq_{j-1} \setminus \cliq_{j} = \{ u_{j-1} \}$ and $\cliq_{j} \setminus \cliq_{j-1} = \{ w_j\}$. 
	Then, we replace each sub-sequence $\langle \cliq_j \rangle$ with $\langle \cliq_{j-1} \setminus \{ u_{j-1} \},\ \cliq_j \rangle$ for each $j \in \{1,2, \ldots, \ell\}$, and obtain the following sequence $\mathcal{C}^\prime$ of cliques:
	\[
		\mathcal{C}^\prime = \langle \cliq_0,\ \cliq_0 \setminus \{u_0 \},\ \cliq_1,\ \ldots, \cliq_{j-1} \setminus \{ u_{j-1} \},\ \cliq_{j},\ \ldots,\ \cliq_{\ell-1} \setminus \{ u_{\ell-1} \},\ \cliq_{\ell} \rangle.   
	\]
	Notice that $\cliq_{j-1} \setminus \{ u_{j-1} \} \onestep \cliq_{j}$ under $\TAR{k-1}$ for each $j \in \{1,2, \ldots, \ell\}$, because $\bigl( \cliq_{j-1} \setminus \{ u_{j-1} \} \bigr) \cup \{ w_j\} = \cliq_j$ and $\bigl|\cliq_{j-1} \setminus \{ u_{j-1} \} \bigr| = k-1$.
	Therefore, the sequence $\mathcal{C}^\prime$ above is a $\TAR{k-1}$-sequence from $\cliq_{\ini}$ to $\cliq_{\ell} = \cliq_{\tar}$, and hence $\TARins{\cliq_{\ini}}{\cliq_{\tar}}{k-1} = \YES$. 
	Furthermore, by the construction, $\mathcal{C}^\prime$ is of length $2\ell$.
	Therefore, we have 
	\begin{equation} \label{eq:TAR<=2TJ}
		\distTAR{\cliq_{\ini}}{\cliq_{\tar}}{k-1} \le 2 \ell = 2 \cdot \distTJ{\cliq_{\ini}}{\cliq_{\tar}}.
	\end{equation}
	
	We then prove that $\TJins{\cliq_{\ini}}{\cliq_{\tar}} = \YES$ if $\TARins{\cliq_{\ini}}{\cliq_{\tar}}{k-1} = \YES$.
	In this case, there exists a $\TAR{k-1}$-sequence between $\cliq_{\ini}$ and $\cliq_{\tar}$; 
let $\langle \cliq_{0}, \cliq_{1}, \dots, \cliq_{\ell^\prime} \rangle$ be a shortest one, that is, $\cliq_{\ell^\prime} = \cliq_{\tar}$ and $\ell^\prime = \distTAR{\cliq_{\ini}}{\cliq_{\tar}}{k-1}$.
	Furthermore, by Lemma~\ref{lem:ar<=k} we can assume that $|\cliq_{2i-1}| = k-1$ and $|\cliq_{2i}| = k$ for every $i \in \{1, 2, \ldots, \ell^\prime/2 \}$. 
	For every $i \in \{1, 2, \ldots, \ell^\prime/2 \}$, let $\cliq_{2i-1} = \cliq_{2i-2} \setminus \{u_{2i-2} \}$ and $\cliq_{2i}  = \cliq_{2i-1} \cup \{w_{2i-1}\}$.
	Since $\langle \cliq_{0}, \cliq_{1}, \dots, \cliq_{\ell^\prime} \rangle$ is shortest, we have $u_{2i-2} \neq w_{2i-1}$.
	Then, for every $i \in \{1, 2, \ldots, \ell^\prime/2 \}$, we have $\cliq_{2i-2} \onestep C_{2i}$ under $\TJ$;
we replace each sub-sequence $\langle \cliq_{2i-1}, \cliq_{2i} \rangle$ with $\langle \cliq_{2i} \rangle$, and obtain $\mathcal{C}^{\prime \prime} = \langle \cliq_0, \cliq_2, \cliq_4, \ldots, \cliq_{\ell^\prime} \rangle$.
	In this way, $\mathcal{C}^{\prime \prime}$ is a $\TJ$-sequence from $\cliq_{\ini}$ to $\cliq_{\ell^\prime} = \cliq_{\tar}$, and hence $\TJins{\cliq_{\ini}}{\cliq_{\tar}} = \YES$. 
	Furthermore, the length of $\mathcal{C}^{\prime \prime}$ is $\ell^\prime/2$, and hence 
	\begin{equation} \label{eq:2TJ<=TAR}
		\distTJ{\cliq_{\ini}}{\cliq_{\tar}} \le \ell^\prime/2 = \distTAR{\cliq_{\ini}}{\cliq_{\tar}}{k-1}/2. 
	\end{equation}

	By Eqs.~(\ref{eq:TAR<=2TJ}) and (\ref{eq:2TJ<=TAR}) we have $\distTJ{\cliq_{\ini}}{\cliq_{\tar}} = \distTAR{\cliq_{\ini}}{\cliq_{\tar}}{k-1} /2$.
	\qed

	\subsection{Proof of Lemma~\ref{lem:TJ=TARb}}
	Similarly as in the proof of Lemma~\ref{lem:TS=TARb}, in both cases (\one) and (\two), we have $\cliq_{\ini} \sevstep \cliq_{\ini}^\prime$ and $\cliq_{\tar} \sevstep \cliq_{\tar}^\prime$ under $\TAR{k}$.
	Note that $|\cliq_{\ini}^\prime| = |\cliq_{\tar}^\prime| = k+1$. 
	Then, by Lemma~\ref{lem:TJ->TAR} we have
	\begin{equation} \label{eq:TJ=TAR}
		\TJins{\cliq_{\ini}^\prime}{\cliq_{\tar}^\prime} = \TARins{\cliq_{\ini}^\prime}{\cliq_{\tar}^\prime}{k}.
	\end{equation}
	
	We first prove that $\TARins{\cliq_{\ini}}{\cliq_{\tar}}{k} = \YES$ if $\TJins{\cliq_{\ini}^\prime}{\cliq_{\tar}^\prime} = \YES$.
	In this case, by Eq.~(\ref{eq:TJ=TAR}) we have $\TARins{\cliq_{\ini}^\prime}{\cliq_{\tar}^\prime}{k} = \YES$, and hence $\cliq_{\ini}^\prime \sevstep \cliq_{\tar}^\prime$ under $\TAR{k}$.
	Thus, $\cliq_{\ini} \sevstep \cliq_{\ini}^\prime \sevstep \cliq_{\tar}^\prime \sevstep \cliq_{\tar}$ holds under $\TAR{k}$, and hence $\TARins{\cliq_{\ini}}{\cliq_{\tar}}{k} = \YES$. 
	
	We then prove that $\TJins{\cliq_{\ini}^\prime}{\cliq_{\tar}^\prime} = \YES$ if $\TARins{\cliq_{\ini}}{\cliq_{\tar}}{k} = \YES$.
	In this case, since $\TARins{\cliq_{\ini}}{\cliq_{\tar}}{k} = \YES$, we have $\cliq_{\ini} \sevstep \cliq_{\tar}$ under $\TAR{k}$.
	Therefore, $\cliq_{\ini}^\prime \sevstep \cliq_{\ini} \sevstep \cliq_{\tar} \sevstep \cliq_{\tar}^\prime$ holds under $\TAR{k}$, and hence $\TARins{\cliq_{\ini}^\prime}{\cliq_{\tar}^\prime}{k} = \YES$. 
	By Eq.~(\ref{eq:TJ=TAR}) we then have $\TJins{\cliq_{\ini}^\prime}{\cliq_{\tar}^\prime} = \YES$.
\qed

	\subsection{Proof of Proposition~\ref{pro:perfect}}
	Kami\'nski et al.~\cite[Theorem~3]{KaminskiMM12} proved that \textsc{independent set reconfiguration} under $\TARrule$ is PSPACE-complete for perfect graphs.
	Since the class of perfect graphs is closed under taking complements~\cite{Lovasz72}, by Lemma~\ref{lem:clique-independent} \textsc{clique reconfiguration} under $\TARrule$ is PSPACE-complete for perfect graphs.
	Then, Theorems~\ref{the:TS=TAR}(b) and \ref{the:TJ=TAR}(b) imply that \textsc{clique reconfiguration} remains PSPACE-complete for perfect graphs under $\TS$ and $\TJ$, too.
	\qed

	\subsection{Proof of Proposition~\ref{pro:perfect}}
	From the definition, the class of cographs is closed under taking complements, and we note that the complement of a cograph can be computed in linear time~\cite{CorneilPS85}.
	Bonsma~\cite{Bon14} proved that \textsc{independent set reconfiguration} under $\TARrule$ is solvable in linear time for cographs, and hence by Lemma~\ref{lem:clique-independent} we can solve \textsc{clique reconfiguration} under $\TARrule$ in linear time for cographs. 
	Then, Theorems~\ref{the:TS=TAR}(a) and \ref{the:TJ=TAR}(a) imply that \textsc{clique reconfiguration} can be solved in linear time for cographs under $\TS$ and $\TJ$, too.
	\qed

	\section{Proofs Omitted from Section~\ref{sec:polytime}}
	
	\subsection{Proof of Theorem~\ref{the:boundedclique}}
	By Theorems~\ref{the:TS=TAR}(a) and \ref{the:TJ=TAR}(a) it suffices to give an $O(w^{2} n^{w})$-time algorithm for a $\TARrule$-instance;
	recall that the reduction from $\TS$/$\TJ$ to $\TARrule$ preserves the shortest length of reconfiguration sequences.
	Note that, however, the arguments for $\TARrule$ below can be applied to the other rules $\TS$ and $\TJ$, and one can obtain algorithms directly for $\TS$ and $\TJ$ rules.

	Let $(G, \cliq_{\ini}, \cliq_{\tar}, k)$ be any $\TARrule$-instance such that $\omega(G) \le w$.
	Then, the number of cliques of size at least $k$ in $G$ can be bounded by $\sum_{i=k}^{w} {n \choose i} = O(n^w)$.
	We now construct a \textit{reconfiguration graph} $\cgraph = (\cvertex, \cedge)$, as follows: 
		\begin{listing}{aaa}
		\item[(\one)] each node in $\cgraph$ corresponds to a clique of $G$ with size at least $k$; and 
		\item[(\two)] two nodes in $\cgraph$ are joined by an edge if and only if $\cliq \onestep \cliq^\prime$ holds under $\TAR{k}$ for the corresponding two cliques $\cliq$ and $\cliq^\prime$.  
		\end{listing}
	This reconfiguration graph $\cgraph$ can be constructed in time $O(w^{2} n^{w})$ as follows:
	we first enumerate all cliques in time $O(w^{2} n^{w})$ by checking all $O(n^{w})$ vertex subsets of size at most $w$;
	we then add edges from each clique to its $O(w)$ subsets with one less vertex.
	The graph $\cgraph$ has $|\cvertex| = O(n^w)$ nodes and $|\cedge| = O(w n^w)$ edges. 
	Then, there is a $\TAR{k}$-sequence between $\cliq_{\ini}$ and $\cliq_{\tar}$ if and only if there is a path in $\cgraph$ between the two corresponding nodes. 
	Therefore, by the breadth-first search on $\cgraph$ which starts from the node corresponding to $\cliq_{\ini}$, we can check if $\cgraph$ has a desired path or not in time $O(|\cvertex| + |\cedge|) = O(w n^{w})$. 
	Furthermore, if such a path exists, it corresponds to a shortest $\TAR{k}$-sequence between $\cliq_{\ini}$ and $\cliq_{\tar}$. 
	\qed

	\subsection{Proof of Proposition~\ref{pro:treewidth}}
	We first compute a tree-decomposition $\mathcal{T}$ with width $5t + 4$ in $O(c^{t} n)$ time, where $c$ is some constant,
	by using the algorithm in \cite{BodlaenderDDFLP13}.
	Additionally, we can assume that the number of bags in $\mathcal{T}$ is $O(n)$~\cite{BodlaenderDDFLP13}.
	By the definition of the tree-decomposition, every clique in $G$ is included in at least one bag of $\mathcal{T}$.
	Since the width of $\mathcal{T}$ is $5t + 4$, each bag in $\mathcal{T}$ contains at most $5t+5$ vertices of $G$.
	Thus, there are at most $2^{5t+5}$ cliques in each bag of $\mathcal{T}$, and hence we can conclude that $G$ has $O(2^{5t+5} n)$ cliques.
	Then, the proposition follows, because we can construct a reconfiguration graph $\cgraph$ in time $O(t^2 2^{5t+5} n)$, similarly as in the proof of Theorem~\ref{the:boundedclique}.
	\qed

	\subsection{Proof of Lemma~\ref{lem:clique-path}}
	We first prove the if-part.
	Suppose that there is a path $\langle M_{0}, M_{1}, \dots, M_{\ell} \rangle$ in $\MC{k}{G}$ from the node $M = M_{0} \supseteq \cliq$ to the node $M^\prime = M_{\ell} \supseteq \cliq^\prime$.
	Let $\cliq_0 = \cliq$, and let $\cliq_j$ be any clique in $M_{j-1} \cap M_{j}$ of size $k$ for each $j \in \{1, 2, \ldots, \ell\}$; 
such a clique $\cliq_j$ exists because $|M_{j-1} \cap M_j| \ge k$. 
	Then, $\cliq_{j-1} \sevstep \cliq_j$ holds under $\TAR{k}$ because $\cliq_{j-1} \cup \cliq_j \subseteq M_{j-1}$ and hence $\cliq_{j-1} \cup \cliq_j$ forms a clique of $G$ for each $j \in \{1, 2, \ldots, \ell\}$.
	We thus have $\cliq = \cliq_0 \sevstep \cliq_1 \sevstep \cdots \sevstep \cliq_\ell$ under $\TAR{k}$. 
	Since both $\cliq_{\ell}$ and $\cliq^\prime$ are contained in the same maximal clique $M_{\ell} = M^\prime$, we have $\cliq_{\ell} \sevstep \cliq^\prime$ and hence $\cliq \sevstep \cliq^\prime$ holds under $\TAR{k}$.
 
	We then prove the only-if-part.
	Suppose that there is a $\TAR{k}$-sequence $\mathcal{C} = \langle \cliq_{0}, \cliq_{1}, \ldots, \cliq_{\ell^\prime} \rangle$ such that $\cliq_{0} = \cliq$ and $\cliq_{\ell^\prime} = \cliq^\prime$.
	Let $\MC{k}{G; \mathcal{C}}$ be the subgraph of $\MC{k}{G}$ induced by all nodes (i.e., maximal cliques in $\Mset{G}$) that contain at least one clique in $\mathcal{C}$.
	Then, it suffices to show that $\MC{k}{G; \mathcal{C}}$ is connected;
then $\MC{k}{G}$ has a path from any node $M \supseteq \cliq$ to any node $M^\prime \supseteq \cliq^\prime$. 
	Suppose for a contradiction that $\MC{k}{G; \mathcal{C}}$ is not connected. 
	Then, there exists an index $j$ such that the cliques $\cliq_{j-1}$ and $\cliq_j$ are contained in different maximal cliques $M_{p-1}$ and $M_p$ which belong to different connected components in $\MC{k}{G; \mathcal{C}}$. 
	In this case, $\cliq_{j}$ must be obtained by adding a vertex $u$ to $\cliq_{j-1}$, that is, $\cliq_{j} = \cliq_{j-1} \cup \{ u\}$;
otherwise both $\cliq_{j-1}$ and $\cliq_j$ are contained in the same maximal clique $M_{p-1}$.
	Since $\mathcal{C}$ is a $\TAR{k}$-sequence, we have $|\cliq_{j-1}| \ge k$ and hence $|\cliq_{j-1} \cap \cliq_j| \ge k$.
	Then, since $\cliq_{j-1} \subseteq M_{p-1}$ and $\cliq_j \subseteq M_p$, we have $|M_{p-1} \cap M_p| \ge k$. 
	Therefore, $M_{p-1}$ and $M_p$ must be joined by an edge in $\MC{k}{G}$ and hence in $\MC{k}{G; \mathcal{C}}$.
	This contradicts the assumption that $M_{p-1}$ and $M_{p}$ are contained in different connected components in $\MC{k}{G; \mathcal{C}}$. 
	We have thus proved that $\MC{k}{G; \mathcal{C}}$ is connected, and hence there is a path in $\MC{k}{G}$ from any node $M \supseteq \cliq$ to any node $M^\prime \supseteq \cliq^\prime$.
	\qed

	\section{Proofs Omitted from Section~\ref{sec:chordal}}
	
	\subsection{Proof of Lemma~\ref{lem:consecutive-cliques}} \label{app:oneclique}
	We first prove the following lemma, which can be applied to any graph. 
	\begin{lemma} \label{lem:tar-dist-in-a-clique}
	For two cliques $\cliq$ and $\cliq^\prime$ in a graph $G$, suppose that $\cliq \cup \cliq^\prime$ also forms a clique in $G$.
	Then, $\distTAR{\cliq}{\cliq^\prime}{k} = |\symdiff{\cliq}{\cliq^\prime}|$ for every integer $k \ge \min\{|\cliq|, |\cliq^\prime|\}$.
	Furthermore,  every clique in an arbitrary shortest $\TAR{k}$-sequence from $\cliq$ to $\cliq^\prime$ consists only of vertices in $\cliq \cup \cliq^\prime$.
	\end{lemma}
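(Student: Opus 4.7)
The plan is to prove the distance identity by matching a lower bound against an explicit construction, and then obtain the structural claim about shortest sequences as a direct corollary of the tightness. The key quantity to track along any $\TAR{k}$-sequence $\langle \cliq = \cliq_0, \cliq_1, \ldots, \cliq_\ell = \cliq^\prime \rangle$ is $d_i := |\symdiff{\cliq_i}{\cliq^\prime}|$. Since each step adds or removes exactly one vertex, $d_i$ changes by exactly $\pm 1$ at each step; combined with $d_\ell = 0$ this immediately yields $\ell \ge d_0 = |\symdiff{\cliq}{\cliq^\prime}|$ and hence the lower bound $\distTAR{\cliq}{\cliq^\prime}{k} \ge |\symdiff{\cliq}{\cliq^\prime}|$.

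For the matching upper bound, write $A = \cliq \setminus \cliq^\prime$, $B = \cliq^\prime \setminus \cliq$, and $D = \cliq \cap \cliq^\prime$. Because $\cliq \cup \cliq^\prime$ is a clique by hypothesis, every subset of $\cliq \cup \cliq^\prime$ is automatically a clique, so the only side condition that must be monitored is the size threshold $k$. I will give an explicit interleaving: if $|A| \ge |B|$, first remove $|A|-|B|$ vertices of $A$ one by one (throughout this opening phase the size stays at least $|D|+|B| = |\cliq^\prime| \ge k$), and then alternate adding a vertex of $B$ and removing a vertex of $A$ until both are exhausted; the symmetric strategy handles $|B| > |A|$, starting with $|B|-|A|$ additions. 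The resulting sequence has length $|A|+|B| = |\symdiff{\cliq}{\cliq^\prime}|$ and every intermediate clique is contained in $\cliq \cup \cliq^\prime$.

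For the structural claim about shortest sequences, the plan is to exploit the tightness of the telescoping argument. Call a step \emph{progressive} if it decreases $d_i$ and \emph{regressive} otherwise; a sequence of length exactly $d_0$ can contain no regressive step. A progressive step is either (i) adding a vertex of $\cliq^\prime$ that is not yet present, or (ii) removing a vertex not in $\cliq^\prime$. An induction on $i$ then shows $\cliq_i \subseteq \cliq \cup \cliq^\prime$ for every $i$ in a shortest sequence: the base case $\cliq_0 = \cliq$ is immediate; for the inductive step, any addition introduces a vertex of $\cliq^\prime$, and any removal deletes a vertex which by the inductive hypothesis already lay in $\cliq \cup \cliq^\prime$. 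The main point requiring care is the threshold bookkeeping in the explicit construction, especially in the corner case where $k$ equals $\min\{|\cliq|, |\cliq^\prime|\}$; this is precisely why the construction is organised to begin with the "longer" side (additions when $|B| > |A|$, removals otherwise) before alternating, guaranteeing the size invariant at every step.
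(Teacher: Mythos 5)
Your proposal is correct and follows essentially the same route as the paper: a counting lower bound of $|\symdiff{\cliq}{\cliq^\prime}|$, a matching explicit construction, and the containment claim for shortest sequences derived from tightness (no step can be ``wasted''). The only cosmetic difference is in the construction --- you interleave additions and removals so the size stays near $\min\{|\cliq|,|\cliq^\prime|\}$, whereas the paper first adds all of $\cliq^\prime \setminus \cliq$ to reach $\cliq \cup \cliq^\prime$ and then removes all of $\cliq \setminus \cliq^\prime$; both have length $|\symdiff{\cliq}{\cliq^\prime}|$ and the same minimum clique size, so they are interchangeable (and note that, like the paper's own proof, your argument actually establishes the claim for $k \le \min\{|\cliq|,|\cliq^\prime|\}$, which is evidently what the stated inequality is meant to say).
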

	\begin{proof}
	We first prove that $\distTAR{\cliq}{\cliq^\prime}{k} \le |\symdiff{\cliq}{\cliq^\prime}|$ holds for every integer $k \ge \min\{|\cliq|, |\cliq^\prime|\}$, by constructing a $\TAR{k}$-sequence between $\cliq$ and $\cliq^\prime$ of length $|\symdiff{\cliq}{\cliq^\prime}|$, as follows:
we first add the vertices in $\cliq^\prime \setminus \cliq$ to $\cliq$ one by one, and obtain the clique $\cliq \cup \cliq^\prime$; and 
we then delete the vertices in $\cliq \setminus \cliq^\prime$ from $\cliq \cup \cliq^\prime$ one by one, and obtain the clique $\cliq^\prime$.
	Since the minimum size of a clique in this sequence is $\min \{ |\cliq|, |\cliq^\prime| \}$, this is a $\TAR{k}$-sequence for every integer $k \ge \min\{|\cliq|, |\cliq^\prime|\}$. 
	Furthermore, the length of this $\TAR{k}$-sequence is $|\symdiff{\cliq}{\cliq^\prime}|$.
	Therefore, we have $\distTAR{\cliq}{\cliq^\prime}{k} \le |\symdiff{\cliq}{\cliq^\prime}|$.
	
	We then prove that $\distTAR{\cliq}{\cliq^\prime}{k} \ge |\symdiff{\cliq}{\cliq^\prime}|$ holds for every integer $k \ge \min\{|\cliq|, |\cliq^\prime|\}$.
	Since $k \ge \min\{|\cliq|, |\cliq^\prime|\}$, there exists at least one $\TAR{k}$-sequence between $\cliq$ and $\cliq^\prime$ as explained above. 
	Note that, in an arbitrary $\TAR{k}$-sequence between $\cliq$ and $\cliq^\prime$, every vertex in $\symdiff{\cliq}{\cliq^\prime}$ must be either deleted or added at least once. 
	Therefore, we have $\distTAR{\cliq}{\cliq^\prime}{k} \ge |\symdiff{\cliq}{\cliq^\prime}|$.
	
	We have thus proved that $\distTAR{\cliq}{\cliq^\prime}{k} = |\symdiff{\cliq}{\cliq^\prime}|$ holds for every integer $k \ge \min\{|\cliq|, |\cliq^\prime|\}$.
	Consider an arbitrary shortest $\TAR{k}$-sequence $\mathcal{C}$ from $\cliq$ to $\cliq^\prime$. 
	Then, every vertex in $\symdiff{\cliq}{\cliq^\prime}$ must be either deleted or added by $\mathcal{C}$ at least once.
	Therefore, if $\mathcal{C}$ deletes or adds a vertex not in $\cliq \cup \cliq^\prime$, then the length of $\mathcal{C}$ is strictly greater than $|\symdiff{\cliq}{\cliq^\prime}|$.
	This contradicts the assumption that $\mathcal{C}$ is shortest.
	We can thus conclude that every clique in an arbitrary shortest $\TAR{k}$-sequence from $\cliq$ to $\cliq^\prime$ consists only of vertices in $\cliq \cup \cliq^\prime$.
	\qed
	\end{proof}

	Let $G = (V,E)$ be a graph, and let $X, Y \subseteq V$.
	A vertex subset $S \subseteq V$ is called an \emph{$(X, Y)$-separator} of $G$ if any two vertices $x \in X \setminus S$ and $y \in Y \setminus S$ do not belong to the same component in $G-S$, where $G-S$ denotes the subgraph of $G$ induced by the vertex set $V \setminus S$.
\medskip

	\noindent
	{\bf Proof of Lemma~\ref{lem:consecutive-cliques}.}

	We prove the statement by induction on the length $t$ of the unique path $(M_0, M_1, \ldots, M_t)$ in $\mathcal{T}$ between $M_{0}$ and $M_{t}$.

	First, consider the case where $t = 0$.
	Then, since $\cliq_{\ini} \subseteq M_0$ and $\cliq_{\tar} \subseteq M_t = M_0$, both $\cliq_{\ini}$ and $\cliq_{\tar}$ are contained in the same maximal clique $M_{0}$.
	Therefore, $\cliq_{\ini} \cup \cliq_{\tar}$ forms a clique, and hence by Lemma~\ref{lem:tar-dist-in-a-clique} every shortest $\TAR{k}$-sequence passes through cliques consisting of vertices only in $M_{0}$.
	Thus, we set $f(i) = 0$ for all $i \in \{0, 1, \ldots, \ell\}$.
\medskip

	Next, consider the case where $t \ge 1$. 
	We assume that $\cliq_{\tar} \not\subseteq M_0$, because otherwise we can set $f(i) = 0$ for all $i \in \{0, 1, \ldots, \ell\}$ similarly as for the case $t=0$.
	Then, by the definition of a clique tree, $M_{0} \cap M_{1}$ forms a $(\cliq_{\ini}, \cliq_{\tar})$-separator of $G$ (see~\cite[Lemma 4.2]{BP93chordal}).

	We now claim that there exists at least one clique $\cliq_j$ in the shortest $\TAR{k}$-sequence $\langle \cliq_{\ini}, \cliq_1, \ldots, \cliq_{\ell} \rangle$ such that $\cliq_{j} \subseteq M_{0} \cap M_{1}$.
	Suppose for a contradiction that $\cliq_{i} \not\subseteq M_{0} \cap M_{1}$ for all $i \in \{0, 1, \ldots, \ell\}$.
	Let $w_{i}$ be an arbitrary vertex in $\cliq_{i} \setminus (M_{0} \cap M_{1})$ for each $i \in \{0, 1, \ldots, \ell\}$.
	Since $\langle \cliq_{\ini}, \cliq_1, \ldots, \cliq_{\ell} \rangle$ is a $\TAR{k}$-sequence, either $\cliq_{i} \subset \cliq_{i+1}$ or $\cliq_{i} \supset \cliq_{i+1}$ holds for each $i \in \{0, 1, \ldots, \ell-1\}$ and hence $\cliq_{i} \cup \cliq_{i+1}$ forms a clique.
	Therefore, the vertices $w_{i}$ and $w_{i+1}$ in $\cliq_{i} \cup \cliq_{i+1}$ are either the same or adjacent.
	This implies that the subgraph of $G$ induced by $\{ w_{i} \mid 0 \le i \le \ell\}$ is connected, and hence it contains a path from $w_{0}$ to $w_{\ell}$.
	However, since $w_0 \in \cliq_{0} \setminus (M_{0} \cap M_{1})$ and $w_{\ell} \in \cliq_{\ell} \setminus (M_{0} \cap M_{1}) = \cliq_{\tar} \setminus (M_{0} \cap M_{1})$, this contradicts the assumption that $M_{0} \cap M_{1}$ is a $(\cliq_{0}, \cliq_{\tar})$-separator.
 
	As the induction hypothesis, assume that the statement is true for the length $t-1 \ge 0$.
	Let $\cliq_j$ be an arbitrary clique in $\langle \cliq_{\ini}, \cliq_1, \ldots, \cliq_{\ell} \rangle$ such that $\cliq_{j} \subseteq M_{0} \cap M_{1}$. 
	Note that, since $\langle \cliq_{\ini}, \cliq_1, \ldots, \cliq_{\ell} \rangle$ is shortest, $\langle \cliq_{\ini}, \cliq_1, \ldots, \cliq_{j} \rangle$ is a shortest $\TAR{k}$-sequence from $\cliq_{\ini}$ to $\cliq_j$. 
	Then, since $\cliq_{\ini} \cup \cliq_{j} \subseteq M_{0}$, Lemma~\ref{lem:tar-dist-in-a-clique} implies that $\langle \cliq_{\ini}, \cliq_1, \ldots, \cliq_{j} \rangle$ passes through cliques consisting of vertices only in $M_{0}$, that is,
	\begin{equation} \label{eq:mono-one}
		\cliq_{h} \subseteq M_{0}
	\end{equation}
holds for each $h \in \{0, 1, \ldots, j\}$.
	Let $\cliq_{i}^\prime = \cliq_{j+i}$ for each $i \in \{0, 1, \ldots, \ell - j\}$, and let $M_{i}^\prime = M_{1+i}$ for each $i \in \{0,1,\ldots, t-1\}$.
	Note that $\langle \cliq_{\ini}^\prime, \cliq_1^\prime, \ldots, \cliq_{\ell-j}^\prime \rangle$ is a shortest $\TAR{k}$-sequence from $\cliq_0^\prime = \cliq_j$ to $\cliq_{\ell-j}^\prime = \cliq_{\ell} = \cliq_{\tar}$. 
	Furthermore, $\cliq_0^\prime = \cliq_j \subseteq M_1 = M_0^\prime$, $\cliq_{\ell-j}^\prime = \cliq_{\tar} \subseteq M_t = M_{t-1}^\prime$ and  $(M_0^\prime, M_1^\prime, \ldots, M_{t-1}^\prime)$ is a path in $\mathcal{T}$ of length $t-1$.
	Therefore, by the induction hypothesis, there is a monotonically increasing function $f^\prime \colon \{0, 1, \dots, \ell-j\} \to \{0, 1, \dots, t-1\}$ such that 
	\begin{equation} \label{eq:mono-two}
		\cliq_{i}^\prime \subseteq M_{f^\prime(i)}^\prime
	\end{equation}
for all $i \in \{0, 1, \dots, \ell-j\}$.
	Now we construct a mapping $f \colon \{0, 1, \dots, \ell\} \to \{0, 1, \dots, t\}$, as follows:
	\[
		f(i) =
			\begin{cases}
			0 & \text{if} \ i < j, \\    
			f^\prime(i - j) + 1 & \text{otherwise}.
			\end{cases}
	\]
	Since $f^\prime$ is a monotonically increasing function, $f$ is too. 
	Furthermore, by Eqs.~(\ref{eq:mono-one}) and (\ref{eq:mono-two}) we have $\cliq_{i} \subseteq M_{f(i)}$ for all $i \in \{0, 1, \dots, \ell\}$.
	Thus, $f$ satisfies the desired property.
	\qed

	\subsection{Proof of Lemma~\ref{lem:chordal->interval}}
	Before giving our linear-time reduction, we give the following lemma.

	\begin{lemma} \label{lem:never-appear-again}
	Suppose that $\langle \cliq_{0}, \cliq_1, \ldots, \cliq_{\ell}\rangle$ is a shortest $\TAR{k}$-sequence in a chordal graph $G$.
	Let $p$ and $q$ be two indices in $\{0,1, \ldots, \ell\}$ such that $p < q$.
	If there is a vertex $v$ in $\cliq_{p} \cap \cliq_{q}$, then $v \in \cliq_{i}$ holds for all $i \in \{p, p+1, \ldots, q\}$.
	\end{lemma}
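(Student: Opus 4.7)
The plan is to argue by contradiction. Suppose the shortest $\TAR{k}$-sequence $\langle \cliq_{0}, \ldots, \cliq_{\ell}\rangle$ has indices $p < q$ with $v \in \cliq_p \cap \cliq_q$ and some $r \in \{p+1, \ldots, q-1\}$ with $v \notin \cliq_r$. Let $s$ be the smallest index greater than $p$ with $v \notin \cliq_s$ and $t$ the smallest index greater than $s$ with $v \in \cliq_t$; then $p < s < t \le q$, $\cliq_s = \cliq_{s-1} \setminus \{v\}$, and $\cliq_t = \cliq_{t-1} \cup \{v\}$. The goal is to show that reinserting $v$ into every intermediate clique $\cliq_s, \cliq_{s+1}, \ldots, \cliq_{t-1}$ yields another valid $\TAR{k}$-sequence in which the original removal at step $s$ and addition at step $t$ become null, producing a $\TAR{k}$-sequence of length $\ell - 2$ and hence the desired contradiction.

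The technical heart is to verify that $\cliq_i \cup \{v\}$ is actually a clique of $G$ for every $i$ with $s-1 \le i \le t$, and this is where I would use chordality. Fix a clique tree $\mathcal{T}$ of $G$, choose maximal cliques $M_{0} \supseteq \cliq_{0}$ and $M_{T} \supseteq \cliq_{\ell}$, and let $(M_{0}, M_{1}, \ldots, M_{T})$ be the path between them in $\mathcal{T}$. Lemma~\ref{lem:consecutive-cliques} supplies a monotonically increasing map $f \colon \{0, \ldots, \ell\} \to \{0, \ldots, T\}$ with $\cliq_i \subseteq M_{f(i)}$ for every $i$. Since $v \in \cliq_{s-1} \subseteq M_{f(s-1)}$ and $v \in \cliq_{t} \subseteq M_{f(t)}$, both endpoints of the subpath $M_{f(s-1)}, \ldots, M_{f(t)}$ lie in the subtree $\Msetv{G}{v}$ of $\mathcal{T}$, which is connected by the defining property of a clique tree. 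Hence every $M_{j}$ with $f(s-1) \le j \le f(t)$ contains $v$, and monotonicity of $f$ then forces $v \in M_{f(i)}$ for each $i \in \{s-1, \ldots, t\}$. Consequently $\cliq_i \cup \{v\} \subseteq M_{f(i)}$ is a clique.

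To conclude, define $\cliq'_i = \cliq_i \cup \{v\}$ for $s \le i \le t-1$ and $\cliq'_i = \cliq_i$ otherwise. Each $\cliq'_i$ is a clique with $|\cliq'_i| \ge k$, and whenever $\cliq_i \onestep \cliq_{i+1}$ under $\TAR{k}$ the same single-vertex symmetric difference is witnessed between $\cliq'_i$ and $\cliq'_{i+1}$, so each step of the original survives in the modified sequence. At the two boundaries, however, $\cliq'_{s-1} = \cliq_{s-1} = \cliq_s \cup \{v\} = \cliq'_s$ and $\cliq'_{t-1} = \cliq_{t-1} \cup \{v\} = \cliq_t = \cliq'_t$ collapse into duplicate pairs; deleting one clique from each pair yields a valid $\TAR{k}$-sequence from $\cliq_0$ to $\cliq_\ell$ of length $\ell - 2$, contradicting shortness.

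The main obstacle is the clique-preservation step in the middle paragraph; everything else is a routine exchange argument. Without chordality the subtree $\Msetv{G}{v}$ need not be connected, so some intermediate $M_{f(i)}$ could miss $v$, leaving room for vertices in $\cliq_i$ non-adjacent to $v$ and blocking the shortcut.
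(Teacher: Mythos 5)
Your proof is correct and follows essentially the same route as the paper's: both invoke Lemma~\ref{lem:consecutive-cliques} to map the cliques monotonically onto a path in the clique tree, use the connectivity of the subtree $\Msetv{G}{v}$ to conclude that $v$ can be reinserted into every intermediate clique, and then collapse the resulting duplicates to obtain a $\TAR{k}$-sequence of length $\ell-2$, contradicting minimality. The only difference is bookkeeping (you isolate a minimal gap $[s,t]$ inside $[p,q]$, whereas the paper renormalizes $p$ and $q$ so that $v$ is absent from all strictly intermediate cliques), which is immaterial.
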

	\begin{proof}
	Suppose for a contradiction that the statement does not hold.
	We may assume without loss of generality that $v \notin \cliq_{i}$ for every $i \in \{p+1, p+2, \ldots, q-1\}$ by setting $p$ as large as possible and $q$ as small as possible.
	Then, observe that $\cliq_{p+1} \cup \{v\} = \cliq_{p}$ and $\cliq_{q-1} \cup \{v\} = \cliq_{q}$.

	Let $\mathcal{T}$ be a clique tree of $G$.
	Let $(M_{0}, M_1, \dots M_{t})$ be the path in $\mathcal{T}$ from $M_{0}$ to $M_{t}$ for any pair of maximal cliques $M_{0} \supseteq \cliq_0$ and $M_{t} \supseteq \cliq_{\ell}$.
	By Lemma~\ref{lem:consecutive-cliques} there is a monotonically increasing function $f \colon \{0,1,\dots,\ell\} \to \{0, 1,\dots, t\}$ such that $C_{i} \subseteq M_{f(i)}$ for each $i \in \{0,1,\ldots,\ell\}$.
	Then, $f(p) \le f(i) \le f(q)$ for each $i \in \{p+1, p+2, \ldots, q-1\}$.
	Recall that, by the definition of a clique tree, the subgraph of $\mathcal{T}$ induced by $\Msetv{G}{v}$ is connected.
	Since $v \in \cliq_{p} \cap \cliq_{q} \subseteq M_{f(p)} \cap M_{f(q)}$, we can conclude that the vertex $v$ is contained in all maximal cliques $M_{f(p)}, M_{f(p+1)}, \ldots, M_{f(q)}$.

	Therefore, for each $i \in \{p, p+1, \ldots, q\}$, both $\cliq_i \subseteq M_{f(i)}$ and $v \in M_{f(i)}$ hold, and hence $\cliq_i \cup \{v\}$ forms a clique which is contained in $M_{f(i)}$.
	Furthermore, $\cliq_{i} \cup \{v\} \onestep \cliq_{i+1} \cup \{v\}$ under $\TAR{k}$ for each $i \in \{p, p+1, \ldots, q-1\}$, because $\cliq_{i} \onestep \cliq_{i+1}$ under $\TAR{k}$.
	Recall that $\cliq_{p+1} \cup \{v\} = \cliq_{p}$ and $\cliq_{q-1} \cup \{v\} = \cliq_{q}$, and hence we replace the sub-sequence $\langle \cliq_{p}, \cliq_{p+1}, \ldots, \cliq_{q} \rangle$ of length $q-p$ with the following sequence of length $q-p-2$:
	\[
		\langle \cliq_{p+1} \cup \{v\},\ \cliq_{p+2} \cup \{v\},\ \ldots, \cliq_{q-1} \cup \{v\} \rangle. 
	\]
	However, this contradicts the assumption that $\langle \cliq_{0}, \cliq_1, \ldots, \cliq_{\ell}\rangle$ is shortest.
	\qed
\end{proof}
\medskip

\noindent
	{\bf Proof of Lemma~\ref{lem:chordal->interval}.}

	We first add two dummy vertices $\dumo$ and $\dumt$ to a given chordal graph $G$.
	We then join $\dumo$ with all vertices in $\cliq_{\ini}$ by adding new edges to $G$;
similarly, we join $\dumt$ with all vertices in $\cliq_{\tar}$.
	Let $G^\prime$ be the resulting graph.
	Then, $G^\prime$ is also a chordal graph, because the dummy vertices cannot create any new induced cycle of length more than three.
	Note that each of $\cliq_{\ini} \cup \{\dumo\}$ and $\cliq_{\tar} \cup \{\dumt\}$ forms a maximal clique in $G^\prime$.
	Furthermore, in the set $\Mset{G^\prime}$ of all maximal cliques in $G$, the only maximal cliques $\cliq_{\ini} \cup \{\dumo\}$ and $\cliq_{\tar} \cup \{\dumt\}$ contain $\dumo$ and $\dumt$, respectively. 
	
	We now construct a clique tree $\mathcal{T}^\prime$ of $G^\prime$ in linear time~\cite[\S 15.1]{Spinrad03}.
	Then, $\mathcal{T}^\prime$ contains two nodes $M_0 = \cliq_{\ini} \cup \{\dumo\}$ and $M_t = \cliq_{\tar} \cup \{\dumt\}$.
	Therefore, we can find the path $(M_{0}, M_1, \ldots, M_{t})$ in $\mathcal{T}^\prime$ in linear time.
	Let $H^{\prime \prime}$ be the subgraph of $G$ induced by the maximal cliques $M_{0}, M_1, \ldots, M_{t}$.
	Then, $H^{\prime \prime}$ is an interval graph. 
	Furthermore, since $M_0 = \cliq_{\ini} \cup \{\dumo\}$ and $M_t = \cliq_{\tar} \cup \{\dumt\}$, Lemma~\ref{lem:consecutive-cliques} implies that 
	\begin{equation} \label{eq:Hpp=Gp}
		\distTARG{H^{\prime \prime}}{\cliq_{\ini}}{\cliq_{\tar}}{k} = \distTARG{G^\prime}{\cliq_{\ini}}{\cliq_{\tar}}{k}.
	\end{equation}
	Let $\subH$ be the graph obtained from $H^{\prime \prime}$ by removing the dummy vertices $\dumo$ and $\dumt$.
	Since $H^{\prime \prime}$ is an interval graph, $H$ is also an interval graph.
	In this way, $\subH$ can be constructed in linear time. 
	
	Now we claim that 
	\begin{equation} \label{eq:Gp=G}
		\distTARG{G^\prime}{\cliq_{\ini}}{\cliq_{\tar}}{k} = \distTARG{G}{\cliq_{\ini}}{\cliq_{\tar}}{k}
	\end{equation}
and
	\begin{equation} \label{eq:Hpp=H}
		\distTARG{\subH}{\cliq_{\ini}}{\cliq_{\tar}}{k} = \distTARG{H^{\prime \prime}}{\cliq_{\ini}}{\cliq_{\tar}}{k}.
	\end{equation}
	Then, by Eqs.~(\ref{eq:Hpp=Gp})--(\ref{eq:Hpp=H}) we have $\distTARG{\subH}{\cliq_{\ini}}{\cliq_{\tar}}{k} = \distTARG{G}{\cliq_{\ini}}{\cliq_{\tar}}{k}$, as required.
	Note that $\symdiff{V(G)}{V(G^\prime)} = \symdiff{V(\subH)}{V(H^{\prime\prime})} = \{\dumo, \dumt\}$.
	Thus, to prove Eqs.~(\ref{eq:Gp=G}) and (\ref{eq:Hpp=H}), it suffices to show that there is a shortest $\TAR{k}$-sequence in $G^\prime$ (or in $H^{\prime \prime}$) from $\cliq_{\ini}$ to $\cliq_{\tar}$ which does not pass through any clique containing $\dumo$ or $\dumt$.   
	
	Let $\langle \cliq_{0}, \cliq_1, \ldots, \cliq_{\ell} \rangle$ be a shortest $\TAR{k}$-sequence in $G^\prime$ (or in $H^{\prime \prime}$) from $\cliq_{\ini}$ to $\cliq_{\ell} = \cliq_{\tar}$.
	Suppose for a contradiction that $\dumo \in \cliq_{i}$ holds for some $i \in \{1, 2, \ldots, \ell-1\}$.
(The proof for $\dumt$ is the same.)
	Since $\dumo \notin \cliq_{\ini} \cup \cliq_{\ell}$, Lemma~\ref{lem:never-appear-again} implies that there exists a pair of indices $l$ and $r$ in $\{1, 2, \ldots, \ell-1\}$ such that $l \le r$ and $\dumo \in \cliq_{i}$ holds for all $i \in \{l, l+1, \ldots, r\}$.
	Recall that $\cliq_{\ini} \cup \{\dumo\}$ is a maximal clique in $G^\prime$ (or in $H^{\prime \prime}$), and that no other maximal clique in $G^\prime$ (or in $H^{\prime \prime}$) contains $\dumo$.
	This implies that $\cliq_{i} \subseteq \cliq_{\ini} \cup \{\dumo\}$ for each $i \in \{l, l+1, \ldots, r\}$.
	Since $\cliq_{l-1} = \cliq_{l} \setminus \{\dumo\}$ and $\cliq_{r+1} = \cliq_{r} \setminus \{\dumo\}$, it follows that $\cliq_{l-1} \cup \cliq_{r+1} \subseteq \cliq_{\ini}$ and hence $\cliq_{l-1} \cup \cliq_{r+1}$ forms a clique.
	Now, by Lemma~\ref{lem:tar-dist-in-a-clique} every shortest $\TAR{k}$-sequence from $\cliq_{l-1}$ to $\cliq_{r+1}$ passes through cliques consisting of vertices only in $\cliq_{l-1} \cup \cliq_{r+1} \subseteq \cliq_{\ini}$.
	Since $\dumo \not\in \cliq_{\ini}$, this contradicts the assumption that $\langle \cliq_{l-1}, \cliq_{l}, \ldots, \cliq_{r+1} \rangle$ is shortest.
\qed

	\subsection{Correctness of the algorithm for interval graphs} \label{app:interval-algo}
	
	In this subsection, we prove the correctness of the greedy algorithm in Section~\ref{subsec:interval} and estimate its running time.
	For a vertex $v$ in a graph $G$, let $N(v) = \{w \in V(G) \mid vw \in E(G) \}$ and let $N[v] = N(v)\cup\{v\}$.
	We denote by $\deg(v)$ the degree of $v$, that is, $\deg(v) = |N(v)|$.
\medskip

	We first prove the correctness of Step~(1) of the algorithm:
if $\cliq_{\ini} \not\subseteq \cliq_{\tar}$ and $|\cliq_{\ini}| \ge k+1$, then remove a vertex $u$ with the minimum $r$-value in $\cliq_{\ini} \setminus \cliq_{\tar}$ from $\cliq_{\ini}$.
	The following lemma ensures that this operation preserves the shortest length of reconfiguration sequences.
	\begin{lemma} \label{lem:greedily-removing-a-shortest-interval}
	Suppose that $\cliq_{\ini} \not\subseteq \cliq_{\tar}$ and $|\cliq_{\ini}| \ge k+1$.
	Let $u$ be any vertex with the minimum $r$-value in $\cliq_{\ini} \setminus \cliq_{\tar}$.
	Then, 
	\[
		\distTAR{\cliq_{\ini}}{\cliq_{\tar}}{k} = \distTAR{\cliq_{\ini} \setminus \{u\}}{\cliq_{\tar}}{k} + 1.
	\]
	\end{lemma}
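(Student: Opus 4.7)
The plan is to establish the two inequalities separately. The upper bound $\distTAR{\cliq_{\ini}}{\cliq_{\tar}}{k} \le \distTAR{\cliq_{\ini} \setminus \{u\}}{\cliq_{\tar}}{k} + 1$ is immediate: since $|\cliq_{\ini}| \ge k+1$, the single removal $\cliq_{\ini} \to \cliq_{\ini} \setminus \{u\}$ is a valid $\TAR{k}$-move, so prepending it to any shortest $\TAR{k}$-sequence starting at $\cliq_{\ini} \setminus \{u\}$ yields a $\TAR{k}$-sequence from $\cliq_{\ini}$ that is exactly one step longer.

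For the lower bound, I would take any shortest $\TAR{k}$-sequence $\mathcal{S}^* = \langle \cliq_{\ini} = \cliq_0^*, \cliq_1^*, \ldots, \cliq_\ell^* = \cliq_{\tar}\rangle$ and construct a $\TAR{k}$-sequence from $\cliq_{\ini} \setminus \{u\}$ to $\cliq_{\tar}$ of length at most $\ell - 1$. First I would show $r_u = 0$: the standing assumption $\cliq_{\ini} \not\subseteq M_{1}$ yields a vertex $v^* \in \cliq_{\ini}$ with $r_{v^*} = 0$; since $t \ge 1$, the possibility $v^* \in \cliq_{\tar}$ would force $t \le r_{v^*} = 0$, so $v^* \in \cliq_{\ini} \setminus \cliq_{\tar}$; the minimum-$r$-value choice of $u$ together with $u \in \cliq_{\ini} \subseteq M_{0}$ then gives $0 \le r_u \le r_{v^*} = 0$. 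Next, since interval graphs are chordal, Lemma~\ref{lem:never-appear-again} applied to $u \in \cliq_0^*$ and $u \notin \cliq_\ell^*$ yields a unique index $j \ge 1$ with $u \in \cliq_i^*$ precisely for $i < j$, and the $j$-th step realizes $\cliq_j^* = \cliq_{j-1}^* \setminus \{u\}$.

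The heart of the argument is to confine the entire initial segment $\cliq_0^*, \ldots, \cliq_j^*$ to the single maximal clique $M_{0}$. Lemma~\ref{lem:consecutive-cliques} provides a monotone $f \colon \{0, \ldots, \ell\} \to \{0, \ldots, t\}$ with $\cliq_i^* \subseteq M_{f(i)}$; for each $i < j$, the membership $u \in \cliq_i^* \subseteq M_{f(i)}$ together with $r_u = 0$ forces $f(i) = 0$, so $\cliq_i^* \subseteq M_{0}$, and $\cliq_j^* \subseteq \cliq_{j-1}^*$ gives $\cliq_j^* \subseteq M_{0}$ as well. Because the original length-$j$ segment from $\cliq_{\ini}$ to $\cliq_j^*$ requires $j \ge |\symdiff{\cliq_{\ini}}{\cliq_j^*}|$ (each step modifies one vertex) and $u \in \cliq_{\ini} \setminus \cliq_j^*$ gives $|\symdiff{\cliq_{\ini} \setminus \{u\}}{\cliq_j^*}| = |\symdiff{\cliq_{\ini}}{\cliq_j^*}| - 1 \le j - 1$, Lemma~\ref{lem:tar-dist-in-a-clique} applied inside $M_{0}$ produces a $\TAR{k}$-sequence from $\cliq_{\ini} \setminus \{u\}$ to $\cliq_j^*$ of length at most $j - 1$. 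Concatenating this with the tail $\cliq_j^* \to \cdots \to \cliq_\ell^*$ of length $\ell - j$ yields a $\TAR{k}$-sequence of total length at most $\ell - 1$, as required.

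The main obstacle will be the containment step that pins $\cliq_0^*, \ldots, \cliq_j^*$ to $M_{0}$: it fuses the monotone-placement property of Lemma~\ref{lem:consecutive-cliques} with the specific choice of $u$ as a vertex of minimum $r$-value (which yields $r_u = 0$). Once this is secured, Lemma~\ref{lem:tar-dist-in-a-clique} disposes of the rearrangement cleanly within a single clique, so no delicate swap argument on $\mathcal{S}^*$ is required.
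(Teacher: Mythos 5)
Your proof is correct, but your lower-bound direction takes a genuinely different route from the paper's. The paper's proof observes that the first move of an optimal $\TAR{k}$-sequence from $\cliq_{\ini}$ is either the deletion of some $v \in \cliq_{\ini}$ or the addition of some $w \in M_{0} \setminus \cliq_{\ini}$, and then shows the greedy deletion of $u$ dominates every such alternative: for deletions it runs an exchange argument that substitutes $v$ for $u$ throughout the prefix in which $u$ survives (valid because $N[u] = M_{0} \subseteq N[v]$), and for additions it combines Lemma~\ref{lem:never-appear-again} with a symmetric-difference computation via Lemma~\ref{lem:tar-dist-in-a-clique}. You instead operate directly on an optimal sequence from $\cliq_{\ini}$ itself: you locate the unique step at which $u$ is removed, confine the entire prefix up to that step to $M_{0}$ (via $r_{u}=0$ and Lemma~\ref{lem:consecutive-cliques}; one could get this even more cheaply by noting that any clique containing $u$ lies in $N[u] = M_{0}$), and replace that prefix wholesale by a straight-line sequence from $\cliq_{\ini} \setminus \{u\}$ inside $M_{0}$, which by Lemma~\ref{lem:tar-dist-in-a-clique} saves at least one step because $u$ belongs to the symmetric difference. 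This buys you a single unified argument with no case analysis on the first move and no vertex-swapping surgery, at the cost of invoking the heavier monotone-placement lemma; the paper's version is more local but needs two separate dominance arguments. Your justification of $r_{u}=0$ is the same fact the paper asserts in one line, spelled out more carefully. One cosmetic remark: Lemma~\ref{lem:tar-dist-in-a-clique} as printed reads ``for every integer $k \ge \min\{|\cliq|, |\cliq^\prime|\}$'' where the proof makes clear the intended hypothesis is $k \le \min\{|\cliq|, |\cliq^\prime|\}$; your application satisfies the intended condition since $|\cliq_{\ini} \setminus \{u\}| \ge k$ and every clique in a $\TAR{k}$-sequence has size at least $k$.
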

	\begin{proof}
	First, observe that $r_{u} = 0$ since $\cliq_{\ini} \not\subseteq M_{1}$.
	Thus, $N[u] = M_0 \subseteq N[v]$ holds for every vertex $v \in M_{0}$.
	Consider any clique $\cliq$ in $\intH$ such that $\cliq_{\ini} \onestep \cliq$ under $\TAR{k}$.
	Then, either (\one) $\cliq = \cliq_{\ini} \setminus \{ v\}$ for some vertex $v \in \cliq_{\ini}$, or (\two) $\cliq = \cliq_{\ini} \cup \{w\}$ for some vertex $w \in M_0 \setminus \cliq_{\ini}$;
recall that $\cliq_{\ini} \subseteq M_0$ and $\cliq_{\ini} \not\subseteq M_{1}$. 
	Therefore, it suffices to verify the following two inequalities:
	\begin{equation} \label{eq:delete}
		\distTAR{\cliq_{\ini} \setminus \{u\}}{\cliq_{\tar}}{k} \le \distTAR{\cliq_{\ini} \setminus \{v\}}{\cliq_{\tar}}{k}
	\end{equation}
for any vertex $v \in \cliq_{\ini}$; and 
	\begin{equation} \label{eq:add}
		\distTAR{\cliq_{\ini} \setminus \{u\}}{\cliq_{\tar}}{k} \le \distTAR{\cliq_{\ini} \cup \{w\}}{\cliq_{\tar}}{k}
	\end{equation}
for any vertex $w \in M_{0} \setminus \cliq_{\ini}$.
\medskip

	We first prove Eq.~(\ref{eq:delete}). 
	Let $v$ be any vertex in $\cliq_{\ini} \setminus \{u\}$, and let $\langle \cliq_{1}, \cliq_{2}, \ldots, \cliq_{\ell} \rangle$ be a shortest $\TAR{k}$-sequence from $\cliq_{1} = \cliq_{\ini} \setminus \{v\}$ to  $\cliq_{\ell} = \cliq_{\tar}$.
	By Lemma~\ref{lem:never-appear-again} we have $v \notin \cliq_{i}$ for all $i \in \{1,2, \ldots, \ell\}$.
	On the other hand, since $u \in \cliq_{1} \setminus \cliq_{\ell}$, there exists an index $j \ge 1$ such that $\cliq_{j+1} = \cliq_j \setminus \{ u\}$;
Lemma~\ref{lem:never-appear-again} implies that $u \in \cliq_{i}$ if and only if $i \in \{1,2,\ldots, j\}$.
	Then, $\cliq_i^\prime = (\cliq_{i} \setminus \{ u \}) \cup \{v\}$ forms a clique for each $i \in \{1,2,\ldots, j\}$, because $N[u] \subseteq N[v]$ for the vertex $v \in \cliq_{\ini} \setminus \{u\} \subset M_0$.
	For each $i \in \{1,2, \ldots, j\}$, we replace the clique $\cliq_i$ in $\langle \cliq_{1}, \cliq_{2}, \ldots, \cliq_{\ell} \rangle$ with the clique $\cliq_i^\prime = (\cliq_{i} \setminus \{ u \}) \cup \{v\}$, and obtain the following sequence $\mathcal{C}^\prime$ of cliques:
	\[
		\mathcal{C}^\prime = \langle \cliq_1^\prime,\ \cliq_2^\prime,\ \ldots, \cliq_j^\prime,\ \cliq_{j+1},\ \cliq_{j+2},\ \ldots,\  \cliq_{\ell} \rangle.  
	\]
	Since $\langle \cliq_{1}, \cliq_{2}, \ldots, \cliq_{\ell} \rangle$ is a $\TAR{k}$-sequence, we have $|\cliq_{i}^\prime| = |\cliq_{i}| \ge k$.
	Furthermore, $\cliq_{i}^\prime \onestep \cliq_{i+1}^\prime$ under $\TAR{k}$ for all $i \in \{1,2,\ldots, j-1\}$, since $\cliq_i \onestep \cliq_{i+1}$ under $\TAR{k}$.
	Finally, since $\cliq_{j+1} = \cliq_{j} \setminus \{u\}$, we have $\cliq_{j}^\prime \setminus \{v\} = \cliq_{j+1}$ and hence $\cliq_{j}^\prime \onestep \cliq_{j+1}$ under $\TAR{k}$.
	Therefore, $\mathcal{C}^\prime$ is a $\TAR{k}$-sequence from $\cliq_{1}^\prime = \cliq_{\ini} \setminus \{u\}$ to  $\cliq_{\ell} = \cliq_{\tar}$, which has the same length $\ell$ as the shortest $\TAR{k}$-sequence $\langle \cliq_{1}, \cliq_{2}, \ldots, \cliq_{\ell} \rangle$ from $\cliq_{1} = \cliq_{\ini} \setminus \{v\}$ to  $\cliq_{\ell} = \cliq_{\tar}$.
	We have thus verified Eq.~(\ref{eq:delete}). 
\medskip  

 	We then prove Eq.~(\ref{eq:add}). 
	Let $w$ be any vertex in $M_{0} \setminus \cliq_{\ini}$, and let $\langle \cliq_{1}, \cliq_2, \ldots, \cliq_{\ell} \rangle$ be a shortest $\TAR{k}$-sequence from $\cliq_{1} = \cliq_{\ini} \cup \{w\}$ to  $\cliq_{\ell} = \cliq_{\tar}$.
	Let $j \in \{1,2,\ldots, \ell-1\}$ be the index such that $u \in \cliq_{i}$ if and only if $i \in \{1,2, \ldots, j\}$.
	Since $r_{u} = 0$, all cliques $\cliq_1, \cliq_2, \ldots, \cliq_j$ are contained in $M_0$. 
	Furthermore, since $\cliq_{j+1} = \cliq_{j} \setminus \{u\}$, we have $\cliq_{j+1} \subseteq M_0$ and hence $\cliq_{1} \cup \cliq_{j+1}$ $(\subseteq M_{0})$ forms a clique.
	Then, Lemma~\ref{lem:tar-dist-in-a-clique} implies that $\distTAR{\cliq_{1}}{\cliq_{j+1}}{k} = |\symdiff{\cliq_{1}}{\cliq_{j+1}}|$.
	Note that, since the sub-sequence $\langle \cliq_1, \cliq_2, \ldots, \cliq_{j+1} \rangle$ is shortest, we have $\distTAR{\cliq_{1}}{\cliq_{j+1}}{k} = |\symdiff{\cliq_{1}}{\cliq_{j+1}}| = j$.
	On the other hand, consider the clique $\cliq_{1}^\prime = \cliq_\ini \setminus \{u\}$;
note that, since $|\cliq_{\ini}| \ge k+1$, we have $|\cliq_{1}^\prime| \ge k$. 
	Since $\cliq_1^\prime, \cliq_{j+1} \subseteq M_0$, the set $\cliq_{1}^\prime \cup \cliq_{j+1}$ forms a clique.
	Then, Lemma~\ref{lem:tar-dist-in-a-clique} implies that $\distTAR{\cliq_{1}^\prime}{\cliq_{j+1}}{k} = |\symdiff{\cliq_{1}^\prime}{\cliq_{j+1}}|$.
	We now prove that 
	\begin{equation} \label{eq:add-one}
		\distTAR{\cliq_{1}^\prime}{\cliq_{j+1}}{k} \le \distTAR{\cliq_{1}}{\cliq_{j+1}}{k} = j.
	\end{equation}
	Indeed, we show that $|\symdiff{\cliq_{1}^\prime}{\cliq_{j+1}}| \le |\symdiff{\cliq_{1}}{\cliq_{j+1}}|$, as follows.
	Since $\cliq_{1}^\prime = \cliq_{1} \setminus \{u,w\}$, $u, w \in \cliq_{1}$ and $u \notin \cliq_{j+1}$, we have 
	\begin{eqnarray*}
		|\symdiff{\cliq_{1}^\prime}{\cliq_{j+1}}| &=& |\cliq_{1}^\prime \cup \cliq_{j+1}| - |\cliq_{1}^\prime \cap \cliq_{j+1}| \\
				&=& |(\cliq_{1} \setminus \{u,w\}) \cup \cliq_{j+1}| - |(\cliq_{1} \setminus \{u,w\}) \cap \cliq_{j+1}| \\
				&=& \begin{cases}
						(|\cliq_{1}  \cup \cliq_{j+1}| - |\{u\}|) - (|\cliq_{1} \cap \cliq_{j+1}| - |\{w\}|) & \text{if } w \in \cliq_{j+1}, \\
						(|\cliq_{1} \cup \cliq_{j+1}| - |\{u,w\}|) - |\cliq_{1} \cap \cliq_{j+1}| & \text{if } w \notin \cliq_{j+1}
						\end{cases} \\
				&\le& |\cliq_{1} \cup \cliq_{j+1}| - |\cliq_{1} \cap \cliq_{j+1}| \\
				&=& |\symdiff{\cliq_{1}}{\cliq_{j+1}}|.
	\end{eqnarray*}
	Let $\langle \cliq_1^\prime, \cliq_2^\prime, \ldots, \cliq_{j}^\prime, \cliq_{j+1} \rangle$ be a shortest $\TAR{k}$-sequence from $\cliq_{1}^\prime$ to $\cliq_{j+1}$.
	Then, by Eq.~(\ref{eq:add-one}) the length of $\langle \cliq_1^\prime, \cliq_2^\prime, \ldots, \cliq_{j}^\prime, \cliq_{j+1} \rangle$ is at most $j$.
	We replace the sub-sequence $\langle \cliq_1, \cliq_2, \ldots, \cliq_{j}, \cliq_{j+1} \rangle$ of length $j$ with the $\TAR{k}$-sequence $\langle \cliq_1^\prime, \cliq_2^\prime, \ldots, \cliq_{j}^\prime, \cliq_{j+1} \rangle$.
	Then, $\langle \cliq_1^\prime, \cliq_2^\prime, \ldots, \cliq_{j}^\prime, \cliq_{j+1}, \cliq_{j+2}, \ldots, \cliq_{\ell} \rangle$ is a $\TAR{k}$-sequence from $\cliq_{1}^\prime = \cliq_\ini \setminus \{u\}$ to $\cliq_{\ell} = \cliq_{\tar}$, whose length is at most $\ell-1 = \distTAR{\cliq_{\ini} \cup \{w\}}{\cliq_{\tar}}{k}$.
	We have thus verified Eq.~(\ref{eq:add}). 
\qed
\end{proof}

	We then prove the correctness of Step~(2) of the algorithm:
if no vertex can be deleted from $\cliq_{\ini}$ according to Lemma~\ref{lem:greedily-removing-a-shortest-interval}, then add a vertex $u$ chosen by the following lemma, with preserving the shortest length of reconfiguration sequences.
	\begin{lemma} \label{lem:greedily-taking-a-longest-interval}
	Assume that $\cliq_{\ini} \subseteq \cliq_{\tar}$ or $|\cliq_{\ini}| = k$.
	Let $u$ be any vertex in $(\cliq_{\tar} \setminus \cliq_{\ini}) \cap M_{0}$ if exists{\rm ;} 
otherwise, let $u$ be any vertex with the maximum $r$-value in $M_{0} \setminus \cliq_{\ini}$.
	Then,
	\[
		\distTAR{\cliq_{\ini}}{\cliq_{\tar}}{k} = \distTAR{\cliq_{\ini} \cup \{u\}}{\cliq_{\tar}}{k} + 1.
	\]
	\end{lemma}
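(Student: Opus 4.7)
The plan is to prove the equality in two directions. The upper bound $\distTAR{\cliq_{\ini}}{\cliq_{\tar}}{k} \le \distTAR{\cliq_{\ini} \cup \{u\}}{\cliq_{\tar}}{k} + 1$ is immediate: since $u \in M_0$ and $\cliq_{\ini} \subseteq M_0$, the vertex $u$ is adjacent to every vertex of $\cliq_{\ini}$, so $\cliq_{\ini} \onestep \cliq_{\ini} \cup \{u\}$ is a valid single step under $\TAR{k}$ (the size only grows), and appending any optimal $\TAR{k}$-sequence from $\cliq_{\ini} \cup \{u\}$ to $\cliq_{\tar}$ gives a witness of the claimed length.

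For the reverse inequality, I would take an arbitrary shortest $\TAR{k}$-sequence $\mathcal{C} = \langle \cliq_{\ini} = \cliq_0, \cliq_1, \ldots, \cliq_\ell = \cliq_{\tar} \rangle$ and build another one of the same length whose first step is the addition of $u$. By the hypothesis $\cliq_{\ini} \subseteq \cliq_{\tar}$ or $|\cliq_{\ini}| = k$, the first move of $\mathcal{C}$ is either an addition $\cliq_1 = \cliq_{\ini} \cup \{w\}$ for some $w \in M_0 \setminus \cliq_{\ini}$, or a deletion $\cliq_1 = \cliq_{\ini} \setminus \{v\}$ (the latter only in case $\cliq_{\ini} \subseteq \cliq_{\tar}$ and $|\cliq_{\ini}| \ge k+1$). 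I would then split according to the choice of $u$. In the first subcase, where $u \in \cliq_{\tar}$, apply Lemma~\ref{lem:never-appear-again} to fix the unique index $j$ at which $u$ first appears in $\mathcal{C}$; since $\cliq_0, \cliq_1, \ldots, \cliq_j$ all avoid $u$ but every pair $\cliq_{i-1} \cup \cliq_i$ is a clique, each $\cliq_i$ in this initial segment must lie within $N[u]$. Hence replacing $\cliq_i$ by $\cliq_i \cup \{u\}$ for $1 \le i \le j-1$ and deleting the step $\cliq_{j-1} \onestep \cliq_j$ (which adds $u$) yields a $\TAR{k}$-sequence starting with $\cliq_0 \onestep \cliq_0 \cup \{u\}$, of the same length. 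In the second subcase, where $u$ maximises $r_u$ in $M_0 \setminus \cliq_{\ini}$, I would use the maximality of $r_u$ to argue that $w$, if added first, satisfies $r_w \le r_u$, and therefore $N[w] \cap (M_0 \cup \cdots \cup M_{r_u}) \subseteq N[u]$; this allows a substitution of $u$ for $w$ in an initial segment of $\mathcal{C}$, mirroring precisely the substitution done in the proof of Lemma~\ref{lem:greedily-removing-a-shortest-interval}, except with $r$-values playing the opposite extremal role.

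The main obstacle will be the case where the first move of $\mathcal{C}$ is a deletion $\cliq_1 = \cliq_{\ini} \setminus \{v\}$, which forces $\cliq_{\ini} \subseteq \cliq_{\tar}$ and $|\cliq_{\ini}| \ge k+1$. Here the deleted vertex $v$ lies in $\cliq_{\tar}$, so by Lemma~\ref{lem:never-appear-again} applied to $\mathcal{C}$ the vertex $v$ must be re-added at some later index $q$; the sub-sequence between the two occurrences is entirely contained in $N[v]$, and using Lemma~\ref{lem:tar-dist-in-a-clique} together with the adjacency $N[v] \supseteq \cliq_{\ini} \cup \{u\}$ (since $v \in \cliq_{\ini} \subseteq M_0$ and $u \in M_0$) I would swap the pair of moves ``delete $v$ now, add $v$ later'' against ``add $u$ now, add the remaining changes later'', rebuilding a valid $\TAR{k}$-sequence of length $\ell$ whose first step is $\cliq_{\ini} \onestep \cliq_{\ini} \cup \{u\}$. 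The delicate point throughout this swap is to verify that every intermediate clique still has size at least $k$; this is where the assumption $|\cliq_{\ini}| \ge k+1$ in the deletion case, and the restriction $|\cliq_{\ini}|=k$ forcing the first move to be an addition in the other case, will together guarantee feasibility.
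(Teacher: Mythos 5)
Your overall strategy---the trivial upper bound, plus surgery on a shortest $\TAR{k}$-sequence so that its first move becomes the addition of $u$---is the same as the paper's, and your second subcase (substituting $u$ for the first-added vertex $w$, using $l_w=l_u=0$ and $r_w\le r_u$, hence $N[w]\subseteq N[u]$) is essentially the paper's argument, which it carries out uniformly for both choices of $u$. The genuine problem is your ``main obstacle'': the case where the shortest sequence begins by deleting some $v\in\cliq_{\ini}$ cannot occur, and your proposed treatment of it misreads Lemma~\ref{lem:never-appear-again}. In that case $\cliq_{\ini}\subseteq\cliq_{\tar}$, so $v\in\cliq_{0}\cap\cliq_{\ell}$, and the lemma then forces $v\in\cliq_{i}$ for \emph{every} $i$; it does not say ``$v$ must be re-added later,'' it says $v$ is never removed, which directly contradicts $\cliq_{1}=\cliq_{\ini}\setminus\{v\}$. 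So the case is vacuous, and the elaborate swap you sketch is both unnecessary and, as described, unsupported (the length accounting of trading the pair ``delete $v$ / re-add $v$'' for a single ``add $u$'' is never worked out). This observation---that the first move of any shortest sequence must be the addition of some $v\in M_{0}\setminus\cliq_{\ini}$---is exactly how the paper opens its proof, reducing everything to $\distTAR{\cliq_{\ini}\cup\{u\}}{\cliq_{\tar}}{k}\le\distTAR{\cliq_{\ini}\cup\{v\}}{\cliq_{\tar}}{k}$.

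A second, smaller flaw: in your first subcase, the claim that every $\cliq_{i}$ with $i<j$ lies in $N[u]$ does not follow from ``every $\cliq_{i-1}\cup\cliq_{i}$ is a clique''; compatibility of consecutive cliques says nothing about adjacency to $u$. The claim is nevertheless true there, because $u\in M_{0}$ and $u\in\cliq_{\tar}\subseteq M_{t}$ give $l_{u}=0$ and $r_{u}=t$, so $u$ belongs to every maximal clique of the interval graph and is universal. In fact the case split on the choice of $u$ is unnecessary: in both cases $r_{u}$ is maximum over $M_{0}\setminus\cliq_{\ini}$, so the single substitution $\cliq_{i}\mapsto(\cliq_{i}\setminus\{w\})\cup\{u\}$ (applied whenever $w\in\cliq_{i}$ and $u\notin\cliq_{i}$) handles everything, as in the paper---provided one also checks, as the paper does in its Case~(\two), the step at which the original sequence adds $u$ while $w$ is still present.
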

	\begin{proof}
	Note that, if $|\cliq_{\ini}| = k$, then no vertex can be deleted from $\cliq_{\ini}$ due to the size constraint $k$.
	On the other hand, if $\cliq_{\ini} \subseteq \cliq_{\tar}$, then by Lemma~\ref{lem:never-appear-again} no shortest $\TAR{k}$-sequence from $\cliq_{\ini}$ to $\cliq_{\tar}$ deletes any vertex $v$ in $\cliq_{\ini}$, because $v \in \cliq_{\ini} \cap \cliq_{\tar}$.
	Therefore, in any shortest $\TAR{k}$-sequence $\langle \cliq_{\ini}, \cliq_{1}, \ldots, \cliq_{\ell} \rangle$ from $\cliq_{\ini}$ to $\cliq_{\ell} = \cliq_{\tar}$, the clique $\cliq_1$ must be obtained from $\cliq_{\ini}$ by adding a vertex $v \in V(G) \setminus \cliq_{\ini}$.
	Furthermore, since $\cliq_{\ini} \subseteq M_0$, $\cliq_{\ini} \not\subseteq M_1$ and $\cliq_1 = \cliq_{\ini} \cup \{v\}$ is a clique, the added vertex $v$ must be in $M_0 \setminus \cliq_{\ini}$. 
	Thus, to prove the lemma, it suffices to show that
	\begin{equation} \label{eq:two-one}
		\distTAR{\cliq_{\ini} \cup \{u\}}{\cliq_{\tar}}{k} \le \distTAR{\cliq_{\ini} \cup \{v\}}{\cliq_{\tar}}{k}
	\end{equation}
for any vertex $v \in M_{0} \setminus \cliq_{\ini}$. 

	Let  $v$ be any vertex in $M_{0} \setminus \cliq_{\ini}$, and let $\langle \cliq_{1}, \cliq_2, \ldots, \cliq_{\ell} \rangle$ be a shortest $\TAR{k}$-sequence from $\cliq_{1} = \cliq_{\ini} \cup \{v\}$ to $\cliq_{\ell} = \cliq_{\tar}$.
	For each $i \in \{1,2,\ldots, \ell\}$, let 
	\begin{equation} \label{eq:two-two}
		\cliq_{i}^\prime =
				\begin{cases}
					(\cliq_{i} \setminus \{v\}) \cup \{u\} & \text{if } v \in \cliq_{i} \text{ and } u \notin \cliq_{i},\\
					\cliq_{i} & \text{otherwise}.
				\end{cases}
	\end{equation}
	We will prove below that $\langle \cliq_{1}^\prime, \cliq_2^\prime, \ldots, \cliq_{\ell}^\prime \rangle$ is a $\TAR{k}$-sequence from $\cliq_{\ini} \cup \{u\}$ to $\cliq_{\tar}$.
	Then, since $\langle \cliq_{1}^\prime, \cliq_2^\prime, \ldots, \cliq_{\ell}^\prime \rangle$ is of length $\ell = \distTAR{\cliq_{\ini} \cup \{v\}}{\cliq_{\tar}}{k}$, Eq.~(\ref{eq:two-one}) follows. 
	
	We first claim that $\cliq_{1}^\prime = \cliq_{\ini} \cup \{u\}$ and $\cliq_{\ell}^\prime = \cliq_{\ell}$. 
	Since $v \in \cliq_{\ini} \cup \{v\} = \cliq_1$ and $u \not\in \cliq_{\ini} \cup \{v\} = \cliq_1$, we have $\cliq_{1}^\prime = (\cliq_1 \setminus \{v\}) \cup \{u\} = \cliq_{\ini} \cup \{u\}$. 
	On the other hand, if $u$ is chosen from $(\cliq_{\tar} \setminus \cliq_{\ini}) \cap M_{0}$, then $u \in \cliq_{\tar} = \cliq_{\ell}$ and hence $\cliq_{\ell}^\prime = \cliq_{\ell}$.
	Otherwise, $(\cliq_{\tar} \setminus \cliq_{\ini}) \cap M_{0} = (M_{0} \setminus \cliq_{\ini}) \cap \cliq_{\tar} = \emptyset$ holds, and hence $v \in M_{0} \setminus \cliq_{\ini}$ is not contained in $\cliq_{\tar} = \cliq_{\ell}$;
we then have $\cliq_{\ell}^\prime = \cliq_{\ell}$.

	We then prove that $\cliq_i^\prime$ forms a clique of size at least $k$ for each $i \in \{1,2,\ldots,\ell\}$, and prove that $\cliq_{i}^\prime \onestep \cliq_{i+1}^\prime$ under $\TAR{k}$ for each $i \in \{1,2,\ldots, \ell-1\}$. 
	Since $\langle \cliq_{1}, \cliq_2, \ldots, \cliq_{\ell} \rangle$ is a $\TAR{k}$-sequence, by Eq.~(\ref{eq:two-two}) we have $|\cliq_i^\prime| = |\cliq_i| \ge k$. 
	Therefore, it suffices to show that $\cliq_{i}^\prime \cup \cliq_{i+1}^\prime$ forms a clique such that $|\symdiff{\cliq_{i}^\prime}{\cliq_{i+1}^\prime}| = 1$ for each $i \in \{1,2,\ldots, \ell-1\}$.
	This claim trivially holds for the case where both $\cliq_i^\prime = \cliq_i$ and $\cliq_{i+1}^\prime = \cliq_{i+1}$ hold, because $\langle \cliq_{1}, \cliq_2, \ldots, \cliq_{\ell} \rangle$ is a $\TAR{k}$-sequence.
	By symmetry, we thus assume that $\cliq_{i}^\prime = (\cliq_{i} \setminus \{v\}) \cup \{u\}$, that is, both $v \in \cliq_{i}$ and $u \notin \cliq_{i}$ hold.
	Then, there are the following three cases to consider;
note that, since both $v \in \cliq_{i}$ and $u \notin \cliq_{i}$ hold and $\cliq_i \onestep \cliq_{i+1}$ under $\TAR{k}$, we do not need to consider the case where both $v \not\in \cliq_{i+1}$ and $u \in \cliq_{i+1}$ hold.
\medskip

\noindent
	{\bf Case (\one) $v \in \cliq_{i+1}$ and $u \notin \cliq_{i+1}$.}
	
	In this case, we have $\cliq_{i+1}^\prime = (\cliq_{i+1} \setminus \{v\}) \cup \{u\}$. 
	Since $\cliq_{i}^\prime = (\cliq_{i} \setminus \{v\}) \cup \{u\}$ and $\cliq_i \onestep \cliq_{i+1}$ under $\TAR{k}$, we have $|\symdiff{\cliq_{i}^\prime}{\cliq_{i+1}^\prime}| = |\symdiff{\cliq_i}{\cliq_{i+1}}| = 1$.
	Notice that $l_v = l_u = 0$ and $r_v \le r_u$, because $r_u = t$ or $u$ has the maximum $r$-value in $M_{0} \setminus \cliq_{\ini}$. 
	Therefore, $N[v] \subseteq N[u]$ holds.
	Then, since $\cliq_{i} \cup \cliq_{i+1}$ is a clique, $\cliq_{i}^\prime \cup \cliq_{i+1}^\prime = \bigl( (\cliq_{i} \cup \cliq_{i+1}) \setminus \{v\} \bigr) \cup \{u\}$ forms a clique.
\medskip

\noindent
	{\bf Case (\two) $v, u \in \cliq_{i+1}$.} 

	In this case, we have $\cliq_{i+1}^\prime = \cliq_{i+1}$. 
	Recall that both $v \in \cliq_{i}$ and $u \notin \cliq_{i}$ hold. 
	Then, since $v, u \in \cliq_{i+1}$ and $\cliq_i \onestep \cliq_{i+1}$ under $\TAR{k}$, we have $\cliq_i \cup \{ u\} = \cliq_{i+1} = \cliq_{i+1}^\prime$.
	Since $\cliq_{i}^\prime = (\cliq_{i} \cup \{u\}) \setminus \{v\}$, we thus have $\cliq_{i}^\prime = \cliq_{i+1}^\prime \setminus \{v\}$ and hence $|\symdiff{\cliq_{i}^\prime}{\cliq_{i+1}^\prime}| = |\{v\}| = 1$. 
	Furthermore, since $\cliq_{i+1}^\prime = \cliq_{i+1}$ and $\cliq_{i+1}$ is a clique, $\cliq_{i}^\prime \cup \cliq_{i+1}^\prime = \cliq_{i+1}^\prime$ forms a clique.
\medskip

\noindent
	{\bf Case (\three) $v, u \notin \cliq_{i+1}$.} 
	
	In this case, we have $\cliq_{i+1}^\prime = \cliq_{i+1}$. 
	Recall again that both $v \in \cliq_{i}$ and $u \notin \cliq_{i}$ hold. 
	Then, since $v, u \not\in \cliq_{i+1}$ and $\cliq_i \onestep \cliq_{i+1}$ under $\TAR{k}$, we have $\cliq_i \setminus \{ v\} = \cliq_{i+1} = \cliq_{i+1}^\prime$.
	Since $\cliq_{i}^\prime = (\cliq_{i} \setminus \{v\}) \cup \{u\}$, we thus have $\cliq_{i}^\prime = \cliq_{i+1}^\prime \cup \{u\}$ and hence $|\symdiff{\cliq_{i}^\prime}{\cliq_{i+1}^\prime}| = |\{u\}| = 1$. 
	Then, $\cliq_{i}^\prime \cup \cliq_{i+1}^\prime = \cliq_{i}^\prime = (\cliq_{i} \setminus \{v\}) \cup \{u\}$.
	Since $N[v] \subseteq N[u]$ holds and $\cliq_i$ is a clique, $\cliq_{i}^\prime \cup \cliq_{i+1}^\prime = (\cliq_{i} \setminus \{v\}) \cup \{u\}$ forms a clique.
\medskip

	In this way, we have proved that $\langle \cliq_{1}^\prime, \cliq_2^\prime, \ldots, \cliq_{\ell}^\prime \rangle$ is a $\TAR{k}$-sequence from $\cliq_{\ini} \cup \{u\}$ to $\cliq_{\tar}$, and hence Eq.~(\ref{eq:two-one}) holds as we have mentioned above. 
\qed
\end{proof}

	The correctness of the greedy algorithm in Section~\ref{subsec:interval} follows from Lemmas~\ref{lem:greedily-removing-a-shortest-interval} and \ref{lem:greedily-taking-a-longest-interval}.
	Therefore, to complete the proof of Theorem~\ref{the:chordal}, we now show that the algorithm runs in linear time. 
\medskip

\noindent
	{\bf Estimation of the running time.}

	Lemma~\ref{lem:never-appear-again} implies that each vertex is removed at most once and added at most once in any shortest $\TAR{k}$-sequence.
	Therefore, it suffices to show that each removal and addition of a vertex $u$ can be done in time $O(\deg(u))$, because $\sum_{u \in V(G)} \deg(u) = 2 |E(\intH)|$.

	We first estimate the running time for Step~(1) of the algorithm. 
	We first check whether both $\cliq_{\ini} \not\subseteq \cliq_{\tar}$ and $|\cliq_{\ini}| \ge k+1$ hold or not. 
	These conditions can be checked in constant time by maintaining $|\cliq_{\ini}|$ and $|\cliq_{\ini} \cap \cliq_{\tar}|$.
	We then find a vertex $u$ with the minimum $r$-value in $\cliq_{\ini} \setminus \cliq_{\tar}$; this can be done in time $O(|\cliq_{\ini}|)$.
	After the removal of $u$, the clique $C_{\ini}:=C_{\ini} \setminus \{u\}$ may be included by some of $M_{1}, M_{2}, \ldots, M_{t}$; 
in such a case, we need to shift the indices of $M_i$ so that $\cliq_{\ini} \subseteq M_{0}$ and $\cliq_{\ini} \not\subseteq M_{1}$ hold. 
	To do so, we compute the shift-value $i_{0} = \min\{r_{u} \mid u \in \cliq_{\ini}\}$, and set $M_{i} := M_{i - i_{0}}$ for each $i \in \{1,2,\ldots, t\}$ and $r_{w} := r_{w} - i_{0}$ for each vertex $w \in V(\intH)$.
	However, since we just have to compute and store only the shift-value $i_{0}$ in the actual process, this post-process can be done also in time $O(|\cliq_{\ini}|)$.
	Since $\cliq_{\ini} \subseteq N[u]$, we have $|\cliq_{\ini}| \le \deg(u) + 1$.
	Therefore, Step~(1) can be executed in time $O(\deg(u))$.

	We then estimate the running time for Step~(2) of the algorithm. 
	We find a vertex $u$ which either is in $(\cliq_{\tar} \setminus \cliq_{\ini}) \cap M_{0}$ or has the maximum $r$-value in $M_{0} \setminus \cliq_{\ini}$.
	In either case, such a vertex $u$ can be found in time $O(|M_{0}|)$.
	Since $M_{0} \subseteq N[u]$, the addition of $u$ can be done in time $O(\deg(u))$.


\end{document}